\keywords{Proof complexity, Monotone complexity, Branching programs}
\tikzset{negated/.style={
		decoration={markings,
			mark= at position 0.5 with {
				\node[transform shape] (tempnode) {$\backslash$};
			}
		},
		postaction={decorate}
	}
}
\crefname{propC}{Proposition}{Propositions}
\renewcommand{\epsilon}{\varepsilon}
\renewcommand{\phi}{\varphi}
\newcommand{\emptylist}{\epsilon}
\renewcommand{\vec}[1]{\mathbf{#1}}
\newcommand{\IH}{\mathit{IH}}
\newcommand{\storageone}{}
\newcommand{\storagetwo}{}
\newcommand{\storagethree}{}
\newcommand{\storagefour}{}
\newtheorem{theorem}[thm]{Theorem}
\newtheorem{proposition}[theorem]{Proposition}
\newtheorem{lemma}[theorem]{Lemma}
\newtheorem{corollary}[theorem]{Corollary}
\newtheorem{observation}[theorem]{Observation}
\theoremstyle{definition}
\newtheorem{definition}[theorem]{Definition}
\newtheorem{remark}[theorem]{Remark}
\newtheorem{example}[theorem]{Example}
\newtheorem{notation}[theorem]{Notation}
\newcommand{\mon}{\mathbf{m}}
\newcommand{\co}{\mathit{co}}
\newcommand{\N}{\mathbf{N}}
\newcommand{\Logspace}{\mathbf{L}}
\newcommand{\NL}{\N\Logspace}
\newcommand{\NC}[1]{\mathbf{NC}^{#1}}
\newcommand{\Ptime}{\mathbf{P}}
\newcommand{\NP}{\mathbf{NP}}
\newcommand{\coNP}{\co\NP}
\newcommand{\AC}[1]{\mathbf{AC}^{#1}}
\newcommand{\bool}{\{0,1\}}
\newcommand{\Nat}{\mathbb{N}}
\newcommand{\monclo}[1]{m(#1)}
\newcommand{\posclo}[1]{#1^+}
\newcommand{\exact}[2]{\mathrm{Ex}^{#1}_{#2}}
\newcommand{\thresh}[2]{\mathrm{Th}^{#1}_{#2}}
\newcommand{\sat}[1]{\vDash_{#1}}
\newcommand{\notsat}[1]{\nvDash_{#1}}
\newcommand{\ee}[2]{e_{#1#2}}
\newcommand{\cnot}{\neg}
\newcommand{\dual}[1]{\overline{#1}}
\newcommand{\cand}{\wedge}
\newcommand{\cimp}{\supset}
\newcommand{\posterm}[1]{\mathrm{Conj}(#1)}
\newcommand{\dec}[3]{#1#2#3}
\newcommand{\posdec}[3]{#1#2(#1\lor #3)}
\newcommand{\posdecsmall}[3]{{\scriptstyle #1} { #2} {\scriptstyle (\hspace{-.05em} #1 \hspace{-0.13em} \lor \hspace{-0.13em} #3 \hspace{-.05em} )}}
\newcommand{\extiff}{\mbox{\Large $\, \leftrightarrow \, $}}
\newcommand{\ex}[2]{e^{#1}_{#2}}
\newcommand{\thr}[2]{t^{#1}_{#2}}
\newcommand{\exextaxs}[2]{\mathcal{E}^{#1}_{#2}}
\newcommand{\thrextaxs}[2]{\mathcal{T}^{#1}_{#2}}
\newcommand{\refthr}[4]{[\hspace{-.05em} \posdecsmall{#3}{\thr{#1}{#2}}{#4} \hspace{-.05em} ]}
\newcommand{\seqar}{\mbox{\Large $\, \rightarrow \, $}}
\newcommand{\dseqar}{\extiff}
\newcommand{\lk}{\mathsf{LK}}
\newcommand{\mlk}{\mathsf{MLK}}
\newcommand{\ext}{\mathsf{e}}
\newcommand{\lndt}{\mathsf{LNDT}}
\newcommand{\elndt}{\ext\lndt}
\newcommand{\pos}[1]{#1^{+}}
\newcommand{\poselndt}{\pos\elndt}
\newcommand{\negposelndt}{\pos\elndt_{-}}
\newcommand{\tposelndt}[1]{\pos\elndt_{#1}(P)}
\newcommand{\lefrul}[1]{#1\text{-}l}
\newcommand{\rigrul}[1]{#1\text{-}r}
\newcommand{\id}{\mathsf{id}}
\newcommand{\cntr}{\mathsf{c}}
\newcommand{\wk}{\mathsf{w}}
\newcommand{\cut}{\mathsf{cut}}
\newcommand{\pij}[2]{p_{#1 #2 }}
\newcommand{\PPi}[1]{\mathbf{p}_{#1}}
\newcommand{\PPj}[1]{\mathbf{p}^\intercal_{#1}}
\newcommand{\php}[1]{\mathsf{PHP}_{#1}}
\newcommand{\lphp}[1]{\mathsf L \php{#1}}
\newcommand{\rphp}[1]{\mathsf R \php{#1}}
\newcommand{\ppdel}[1]{\vec p_{#1}}
\newcommand{\negtrans}[1]{#1^{-}}
\newcommand{\ttrans}[2]{#2^{#1}}
\theoremstyle{plain}
\begin{document}

\title[Proof complexity of positive branching programs]{Proof complexity of positive branching programs}
\titlecomment{{\lsuper*} This is an extended version of a conference paper published at \emph{Computability in Europe '22}.}
\thanks{The alphabetically first author has been supported by a UKRI Future Leaders Fellowship, \emph{Structure vs Invariants in Proofs}, project number
MR/S035540/1.}

\author[A.~Das]{Anupam Das\lmcsorcid{0000-0002-0142-3676}}
\author[A.~Delkos]{Avgerinos Delkos}

\address{University of Birmingham, UK}	

\email{a.das@bham.ac.uk, a.delkos@bham.ac.uk}

\begin{abstract}
  \noindent 
  We investigate the proof complexity of systems based on positive branching programs, 
i.e.\ non-deterministic branching programs (NBPs) where, for any 0-transition between two nodes, there is also a  1-transition. 
Positive NBPs compute monotone Boolean functions, just like negation-free circuits or formulas, but constitute a positive version of (non-uniform) $\NL$, rather than $\Ptime$ or $\NC 1$, respectively.

The proof complexity of NBPs was investigated in previous work by Buss, Das and Knop, using extension variables to represent the dag-structure, over a language of (non-deterministic) decision trees, yielding the system $\elndt$.
Our system $\poselndt$ is obtained by restricting their systems to a positive syntax, similarly to how the `monotone sequent calculus' $\mlk$ is obtained from the usual sequent calculus $\lk$ by restricting to negation-free formulas.

Our main result is that $\poselndt$ polynomially simulates $\elndt$ over positive sequents. Our proof method is inspired by a similar result for $\mlk$ by Atserias, Galesi and Pudl\'ak, that was recently improved to a bona fide polynomial simulation via works of Je\v r\'abek and Buss, Kabanets, Kolokolova and Kouck\'y.
Along the way we formalise several properties of counting functions within $\poselndt$ by polynomial-size proofs and, as a case study, give explicit polynomial-size proofs of the propositional pigeonhole principle. 
\end{abstract}

\maketitle

\section{Introduction}
\emph{Proof complexity} is the study of the complexity of formal proofs.
This pursuit is fundamentally tied to open problems in computational complexity, in particular due to the Cook-Rechow theorem \cite{DBLP:journals/jsyml/CookR79}: $\coNP = \NP$ if and only if there is a propositional proof system (suitably defined) that has polynomial-size proofs of each propositional tautology.
This has led to what is known as `Cook's program' for separating $\Ptime$ and $\NP$: find superpolynomial lower bounds for stronger and stronger systems until a general method is found (see, e.g., \cite{Bus12:tow-np-p-via-prf-comp-search,Kra19:Cook-Reckhow-definition}).

Systems of interest in proof complexity are often motivated by analogous considerations from circuit complexity. 
For instance {bounded depth} systems restrict proofs to formulas with a limit on the number of alternations between $\lor$ and $\cand $ in its formula tree, i.e.\ $\AC 0 $ formulas.
Indeed, H\r{a}stad's famous lower bound techniques for $\AC 0$ \cite{DBLP:journals/iandc/HaastadWWY94} have been lifted to the setting of proof complexity, yielding lower bounds for a propositional formulation of the {pigeonhole principle} \cite{DBLP:conf/stoc/BeameIKPPW92} in bounded depth systems via a refined version of the {switching lemma}. 

\emph{Monotone} proof complexity is motivated by another famous lower bound result, namely Razborov's lower bounds on the size of $\neg$-free circuits \cite{razborov1985,Raz85:lower-bds-mon-comp-log-permanent} (and similar ones for formulas \cite{KarWig88:st-conn-super-log-depth,RazWig90:mNC1-vs-NC1}).
 In this regard, there has been much investigation into the negation-free fragment of Gentzen's sequent calculus, called $\mlk$ \cite{DBLP:journals/mlq/AtseriasGG01,DBLP:journals/jcss/AtseriasGP02,DBLP:journals/apal/Jerabek11a,BKKK17}. 
 Atserias, Galesi and Pudl\'ak showed in
 \cite{DBLP:journals/jcss/AtseriasGP02} a quasipolynomial simulation of $\lk$ by $\mlk$ on $\neg$-free sequents by formalising an elegant counting argument using quasipolynomial-size negation-free counting formulae.
 This has recently been improved to a polynomial simulation by an intricate series of results \cite{DBLP:journals/jcss/AtseriasGP02,DBLP:journals/apal/Jerabek11a,BKKK17}, solving a question first posed in \cite{PudBuss}.

 However, 
 note the contrast with bounded depth systems: restricting negation has different effects on computational complexity and on proof complexity.
 
 In this work we address a similar question for the setting of \emph{non-deterministic branching programs}.
These are believed to be more expressive than Boolean formulas, in that they are the non-uniform counterpart of non-deterministic log-space ($\NL$), as opposed to $\NC 1$.
They have recently been given a proof theoretic treatment in \cite{DBLP:conf/csl/BussDasKnop20}, in particular addressing proof complexity.
We work within that framework, only restricting ourselves to formulas representing \emph{positive} branching programs.

Positive (or `monotone') branching programs have been considered several times in the literature, e.g.\ \cite{GrigniSnipser,KarWig93:on-span-programs}. 
They are identical to Markov's `relay-diode bipoles' from \cite{Markov62:minimal-relay-diode-bipoles}.
\cite{GrigniSnipser,grigni1991structure} give a general way of making a non-deterministic model of computation `positive'; in particular, a non-deterministic branching program is positive if, whenever there is a $0$-transition from a node $u$ to a node $v$, there is also a $1$-transition from $u$ to $v$. 
As in the earlier work \cite{DBLP:conf/csl/BussDasKnop20} we implement such a criterion by using disjunctions to model nondeterminism.
As far as we are aware, there is no other work investigating the proof complexity of systems based on positive/monotone branching programs.

\subsection{Contribution}
We present a formal proof calculus $\poselndt$, reasoning with formula-based representations of positive branching programs, by restricting the calculus $\elndt$ from \cite{DBLP:conf/csl/BussDasKnop20} appropriately.
We consider the `positive closures' of well-known polynomial-size `ordered' BPs (OBDDs) for counting functions, and show that their characteristic properties admit polynomial-size proofs in $\poselndt$.

As a case study, we show that these properties can be used to obtain polynomial-size proofs of the propositional pigeonhole principle, by adapting an approach of \cite{DBLP:journals/mlq/AtseriasGG01} for $\mlk$.

Our main result is that $\poselndt$ in fact polynomially simulates $\elndt$ over positive sequents.
For this we again use representations of positive NBPs for counting and small proofs of their characteristic properties.
At a high level we adapt the approach of \cite{DBLP:journals/jcss/AtseriasGP02}, 
but there are several additional technicalities specific to our setting.
In particular, we require bespoke treatments of negative literals in $\elndt$ and (iterated) substitutions of representations of positive NBPs into other positive NBPs. 

\subsection{Preliminary version}
This is an extended version of the conference paper \cite{DasDelkos22}. The current version expands that work by full proofs of all results, as well as extra narrative and examples throughout.

\subsection{A terminological convention}
Throughout this work, we shall reserve the words `monotone', `monotonicity' etc.\ for \emph{semantic} notions, i.e.\ as properties of Boolean functions.
For (non-uniform) models of computation such as formulas, branching programs, circuits etc., we shall say `positive' for the associated \emph{syntactic} constraints, e.g.\ negation-freeness for the case of formulas or circuits.
While many works simply say `monotone' always, in particular \cite{grigni1991structure,GrigniSnipser}, let us note that the distinction we make is employed by several other authors too, e.g.\ \cite{AjtGur87:mon-vs-pos,LauSchSte96:on-pos-p,LauSchSte98:pos-vers-polytime,DasOit18:posfp}.

\section{Preliminaries on proof complexity and branching programs}

In this section we will recall some preliminary content from \cite{DBLP:conf/csl/BussDasKnop20}. The reader familiar with that work can safely omit this section, though they should take note of our conventions in the definition of the system $\elndt$, cf.~\cref{differences-from-original-elndt}.

Throughout this work we will use a countable set of \emph{propositional variables}, written $p,q$ etc., and \emph{Boolean constants} $0$ and $1$.

An \emph{assignment} is just a map $\alpha$ from propositional variables to $\bool$. For all intents and purposes we may assume that they have finite support, e.g.\ nonzero only on variables occurring in a formula or proof.
We extend an assignment $\alpha$ to the constants in the natural way, setting $\alpha(0) = 0 $ and $\alpha(1) =1$.

A \emph{Boolean function} is just a map from (finitely supported) assignments to $\bool$.

\subsection{Proof complexity}
In proof complexity, a \emph{propositional proof system} is just a poly\-nomial-time function $P$ from $\Sigma^*$ to the set of propositional tautologies, where $\Sigma$ is some finite alphabet.
Intuitively, the elements $\sigma\in \Sigma^*$ code proofs in the system, while $P$ itself is an efficient `proof-checking' algorithm that verifies that $\sigma$ is indeed a correctly written proof; if so $P$ returns its conclusion, i.e.\ the theorem it proves. 
If not, it just returns $1$, by convention.

The significance of this definition is due to the following result from \cite{DBLP:journals/jsyml/CookR79}:

\begin{theorem}
[Cook-Reckhow]
There is a propositional proof system with polynomial-size proofs of each tautology if and only if $\coNP = \NP$.
\end{theorem}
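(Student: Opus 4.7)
The plan is the standard two-direction argument, hinging on the coNP-completeness of the tautology problem $\mathrm{TAUT}$. Since $\mathrm{TAUT}$ is coNP-complete under polynomial-time reductions, the identity $\coNP = \NP$ is equivalent to $\mathrm{TAUT} \in \NP$, and I would reduce both directions to this reformulation.

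For the forward direction, I would assume a propositional proof system $P \colon \Sigma^* \to \mathrm{TAUT}$ such that every $\phi \in \mathrm{TAUT}$ admits some $\sigma \in \Sigma^*$ with $P(\sigma) = \phi$ and $|\sigma| \leq q(|\phi|)$ for a fixed polynomial $q$. To place $\mathrm{TAUT}$ in $\NP$, on input $\phi$ the verifier nondeterministically guesses $\sigma$ with $|\sigma| \leq q(|\phi|)$ and accepts iff $P(\sigma) = \phi$; this runs in polynomial time because $P$ does, and is correct because $P$'s range is exactly $\mathrm{TAUT}$. Hence $\mathrm{TAUT} \in \NP$, giving $\coNP \subseteq \NP$ and so $\coNP = \NP$.

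For the converse, assume $\coNP = \NP$, so $\mathrm{TAUT} \in \NP$ via some polynomial-time verifier $V(\phi, w)$ with polynomial witness bound $r(|\phi|)$. I would define a proof system $P$ as follows: on input $\tau$, try to parse $\tau$ as a pair $\langle \phi, w\rangle$ with $\phi$ a propositional formula; if $V(\phi, w) = 1$ output $\phi$, and otherwise output the trivial tautology $1$. This $P$ is polynomial-time computable, its range lies in $\mathrm{TAUT}$ (surjectivity onto $\mathrm{TAUT}$ is easily arranged, e.g.\ since $1$ is a tautology and by choice of witnesses), and each $\phi \in \mathrm{TAUT}$ has a proof $\langle \phi, w\rangle$ of size polynomial in $|\phi|$, as required.

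The only subtlety, rather than an obstacle, is the routine bookkeeping around the encoding: checking that the pairing $\langle \phi, w\rangle$ and the parsing step are polynomial-time and do not inflate sizes beyond a polynomial, and invoking coNP-completeness of $\mathrm{TAUT}$ (say, via the standard reduction of $\mathrm{SAT}$ to $\mathrm{TAUT}$ by negation) to lift $\mathrm{TAUT} \in \NP$ back to the full collapse $\coNP = \NP$. No new ideas beyond these standard manipulations are needed.
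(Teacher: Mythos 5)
Your argument is correct and is the standard proof of the Cook--Reckhow theorem; note, however, that the paper does not prove this statement itself but cites it directly from Cook and Reckhow's 1979 paper, so there is no internal proof to compare against. Both of your directions are sound: guess-and-check of a polynomially bounded proof places $\mathrm{TAUT}$ in $\NP$ (hence $\coNP\subseteq\NP$, which by closure under complements gives $\coNP=\NP$), and conversely an $\NP$ verifier for $\mathrm{TAUT}$ yields a polynomially bounded proof system by treating witnesses as proofs, with the trivial tautology as fallback output to satisfy the definition's requirement that the checking function always return a tautology.
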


\noindent
In practice, this `Cook-Reckhow definition' of a propositional proof system covers all well-studied proof systems for propositional logic, under suitable codings. 
We shall refrain from giving any of these codings explicitly in this work, as is standard for proof complexity.
However, let us point out that the systems we consider routinely admit polynomial-time proof checking in the way described above, and so indeed constitute formal propositional proof systems.

See \cite{Kra95:ba-pl-ct,CooNgu10:log-found-prf-comp,Kra19:prf-comp} for more comprehensive introductions to proof complexity.

\subsection{Non-deterministic branching programs}
A (non-deterministic) \emph{branching program} (NBP) is a (rooted) directed acyclic graph $G$ with two distinguished \emph{sink} nodes, $0$ and $1$, such that:
\begin{itemize}
\item $G$ has a unique root node, i.e.\ a unique node with in-degree $0$.
    \item Each non-sink node $v$ of $G$ is labelled by a propositional variable.
    \item Each edge $e$ of $G$ is labelled by a constant $0$ or $1$.
\end{itemize}

A \emph{run} of a NBP $G$ on an assignment $\alpha$ is a maximal path beginning at the root of $G$ consistent with $\alpha$. 
I.e.\ at a node labelled by a propositional variable $p$ the run must follow an edge labelled by $\alpha(p) \in \bool$.

We say that $G$ \emph{accepts} $\alpha$ if there is a run on $\alpha$ reaching the sink $1$.
We may extend $\alpha$ to a map from all NBPs to $\bool$ by setting $\alpha(G) = 1 $ if $G$ accepts $\alpha$, and $\alpha(G) = 0$ otherwise.
In this way,
each NBP \emph{computes} a unique Boolean function $\alpha \mapsto \alpha(G)$.

A comprehensive introduction to (variants of) branching programs and their underlying theory can be found in, e.g., \cite{Weg00:bps-and-bdds}.

\begin{example}
[2-out-of-4 Exact]
\label{example-2-4-exact}
The 2-out-of-4 Exact function, which returns 1 when precisely two of its four arguments are 1, is computed by the branching program in \cref{fig:exact-obdd-4-2}. 
$0$-edges are indicated dotted and $1$-edges are indicated solid, a convention that we adopt throughout this work. Formally, each $0$-leaf corresponds to the same sink.

Note that this program is deterministic: there is exactly one 0-edge and one 1-edge outgoing from each non-sink node. 
It is also \emph{ordered}: all the variables appear in the same order in each path.
Its semantics may be verified by checking that every path leading to the 1-sink has exactly two 1-edges and vice versa.
\end{example}

   \begin{figure}[t]
    
\begin{tikzpicture}[scale=1.08, auto,swap]
\foreach \pos/\name/\disp in {
  {(0,4)/1/$p_1$}, 
  {(-1,3)/2/$p_2$},
  {(1,3)/3/$p_2$}, 
  {(-2,2)/4/$p_3$}, 
  {(0,2)/5/$p_3$},
  {(2,2)/6/$p_3$}, 
  {(-3,1)/8/$p_4$},
  {(-1,1)/10/$p_4$},
  {(1,1)/11/$p_4$},
  {(3,1)/12/$p_4$},
  {(-4,-0)/15/$0$},
  {(-2,-0)/16/$0$},
  {(0,-0)/17/$1$},
  {(+2,-0)/18/$0$},
  {(+4,-0.008)/19/$0$}}
\node[minimum size=20pt,inner sep=0pt] (\name) at \pos {\disp};

    \draw [->][thick,dotted](1) to (2);
    \draw [->][thin](1) to (3);
    
    \draw [->][thick,dotted](2) to (4);
    \draw [->][thin](2) to (5);
     
    \draw [->][thick,dotted](3) to (5);
    \draw [->][thin](3) to (6);

     \draw [->][thick,dotted](4) to (8);
     \draw [->][thin](4) to (10);
      \draw [->][thick,dotted](5) to (10);
    \draw [->][thin](5) to (11);
     \draw [->][thick,dotted](6) to (11);
     \draw [->][thin](6) to (12);
     \draw [->][thin](8) to (16);
     \draw [->][thick,dotted](8) to (15);
      \draw [->][thin](10) to (17);
     \draw [->][thick,dotted](10) to (16); \draw [->][thin](11) to (18);
     \draw [->][thick,dotted](11) to (17); \draw [->][thin](12) to (19);
     \draw [->][thick,dotted](12) to (18); 
\end{tikzpicture} 
\caption{A branching program computing the 2-out-of-4 Exact function. $0$-edges are indicated dotted, and $1$ edges are indicated solid.}
\label{fig:exact-obdd-4-2}\end{figure}

\subsection{Representation of NBPs by extended formulas}

Since we will be working in formal proof systems, we shall use a natural representation of NBPs by `formulas with extension', just like in \cite{DBLP:conf/csl/BussDasKnop20}.
For this,
we shall make use of \emph{extension variables} $e_0, e_1, e_2, \dots$ in our language, disjoint from the set of propositional variables.

An \emph{extended non-deterministic decision tree} formula, or \emph{eNDT} formula, written $A,B$ etc., is generated as follows:
\[
A,B, \dots \quad ::= \quad 0 \quad \mid \quad 1 \quad \mid \quad p  \quad \mid \quad \dec A p B \quad \mid \quad A \lor B \quad \mid \quad e_i
\]
Formulas of form $\dec A p B$ are called \emph{decisions}, and intuitively express ``if $p$ then $B$ else $A$''.

As usual, we often omit external brackets of formulas and write long disjunctions without internal brackets, under associativity.
The \emph{size} of a formula $A$, written $|A|$, is the number of symbols occurring in $A$.

\begin{remark}
[Distinguishing extension variables]
Note that we formally distinguish extension variables from propositional variables.

This is for the same technical reasons as in \cite{DBLP:conf/csl/BussDasKnop20} we must not allow extension variables to be decision variables, i.e.\ we forbid formulas of the form $\dec A {e_i} B$. 
If we did allow this then we would be able to express all Boolean circuits succinctly, whereas the current convention ensures that we only express NBPs. 
\end{remark}

The semantics of (non-extended) NDT formulas under an assignment is standard.

With extension variables, however, the interpretation is parametrised by a set of \emph{extension axioms}, allowing extension variables to `abbreviate' more complex formulas.

\begin{definition}
[Extension axioms]
\label{extension-axioms-definition}
A \emph{set of extension axioms} $\mathcal A$ is a set of the form $ \{ e_i \extiff A_i \}_{i<n} $, where each $A_i$ may only contain extension variables among $e_0, \dots, e_{i-1}$.  
\end{definition}

\begin{definition}
[Semantics of eNDT formulas]
\label{dfn:sat-endt-wrt-extax}
\emph{Satisfaction} with respect to a set of extension axioms $\mathcal A = \{ e_i \extiff \mathcal A_i \}_{i<n}$, written $\sat{\mathcal A}$, is a (infix) binary relation between assignments and formulas over $e_0, \dots, e_{n-1}$ defined as follows:
\begin{itemize}
\item $\alpha \notsat {\mathcal A} 0$ and $\alpha \sat {\mathcal A} 1$.
    \item $\alpha \sat {\mathcal A} p$ if $\alpha(p) = 1$.
    \item $\alpha \sat {\mathcal A} A \lor B$ if $\alpha \sat {\mathcal A} A$ or $\alpha \sat {\mathcal A} B$.
    \item $\alpha \sat {\mathcal A} \dec A p B$ if either $\alpha(p)=0$ and $\alpha \sat {\mathcal A} A$, or $\alpha(p)=1$ and $\alpha \sat {\mathcal A} B$. 
    \item $\alpha \sat {\mathcal A} e_i$ if $\alpha \sat {\mathcal A} A_i$.
\end{itemize}
\end{definition}

\begin{example}
[2-out-of-4 Exact, revisited]
\label{ex:exact24-ext-axioms}
Recall \cref{example-2-4-exact} and the branching program from \cref{fig:exact-obdd-4-2}. Under the semantics above, we may represent this branching program by the formula $\ee 1 1 $ under the following extension axioms:
\begin{equation}
\label{eq:exact24-ext-axioms}
    \begin{array}{r@{\ \extiff \ }l}
    \ee 1 1 & \dec {\ee 2 1 } {p_1}{\ee 2 2} \\
    \ee 2 1 & \dec {\ee 3 1 } {p_2}{\ee 3 2} \\
    \ee 2 2  & \dec {\ee 3 2 } {p_2}{\ee 3 3}
\end{array}
\qquad
\begin{array}{r@{\ \extiff \ }l}
    \ee 3 1  & \dec {\ee 4 1 } {p_3}{\ee 4 2} \\
    \ee 3 2  & \dec {\ee 4 2 } {p_3}{\ee 4 3} \\
    \ee 3 3  & \dec {\ee 4 3 } {p_3}{\ee 4 4}
\end{array}
\qquad
\begin{array}{r@{\ \extiff \ }l}
    \ee 4 1  & \dec 0 {p_4} 0 \\
    \ee 4 2  & \dec 0 {p_4} 1 \\
    \ee 4 3  & \dec 1 {p_4} 0 \\
    \ee 4 4  & \dec 0 {p_4} 0
\end{array}
\end{equation}
Each $\ee i j$ represents the $j$\textsuperscript{th} node (left to right) on the $i$\textsuperscript{th} row (top to bottom). 
Note that, in order to strictly comply with the subscripting condition on extension axioms, we may identify $\ee i j $ with $e_{4(4-i)+j}$.
\end{example}

Note that the notion $\sat{\mathcal A}$ is indeed well-defined, thanks to the subscripting conditions on sets of extension axioms: intuitively, each $e_i$ abbreviates a formula containing only extension variables among $e_0, \dots, e_{i-1}$, and so on.
Indeed this well-foundedness gives rise to an induction scheme:
\begin{remark}
[$\mathcal A$-induction]
\label{A-induction}
Given a set of extension axioms $\mathcal A = \{ e_i \extiff A_i \}_{i<n} $ we may define a strict partial order $<_\mathcal A$ on formulas over $e_0,\dots, e_{n-1}$ by:
\begin{itemize}
    \item $p <_\mathcal A \dec A p B$ and $A <_\mathcal A \dec A p B$ and $B <_\mathcal A \dec A p B$.
    \item $A <_\mathcal A A \lor B$ and $B <_\mathcal A A \lor B$.
    \item $A_i <_\mathcal A e_i$, for each $i<n$.
\end{itemize}
Notice that $<_\mathcal A$ is indeed well-founded by the condition that each $A_i$ must contain only extension variables among $e_0, \dots, e_{i-1}$.
Thus we may carry out arguments and make definitions by induction on $<_\mathcal A$, which we shall simply refer to as `$\mathcal A$-induction'. 
\end{remark}

We can now see Definition~\ref{dfn:sat-endt-wrt-extax} above of $\sat {\mathcal A}$ as just a definition by $\mathcal A$-induction.
In this way, fixing some set of extension axioms $\mathcal A = \{e_i \extiff A_i\}_{i<n} $, each eNDT formula $A$ over $e_0, \dots, e_{n-1}$ computes a unique Boolean function $f: \alpha \mapsto 1 $ if $\alpha \sat {\mathcal A} A$.
In this case, we may say that $A$ \emph{computes} $f$ \emph{with respect to} $\mathcal A$.

Since many of our arguments will be based on $\mathcal A$-induction, let us make the following observation for complexity matters:
\begin{observation}
[Complexity of $\mathcal A$-induction]
\label{complexity-of-A-induction}
Let $\mathcal A = \{ e_i \extiff A_i\}_{i<n} $ be a set of extension axioms and $A$ contain only extension variables among $e_0, \dots, e_{n-1}$.
Then $| \{ B<_\mathcal A A\}| \leq |A| + \sum\limits_{i<n} |A_i|$ and, if $B<_\mathcal A A$, then $|B|\leq \max (|A|,|A_0|, \dots, |A_{n-1}|)$.
\end{observation}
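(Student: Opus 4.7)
The plan is to establish both bounds by analysing chains in the one-step relation underlying $<_\mathcal A$.

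For the first bound, the key observation is that every $B$ with $B <_\mathcal A A$ is either a syntactic subformula of $A$ itself or a syntactic subformula of some $A_i$. I would show this by $\mathcal A$-induction on $A$: given $B <_\mathcal A A$, pick a one-step predecessor $C$ of $A$ with $B \leq_\mathcal A C$. If $C$ is a subformula of $A$ in the ordinary sense (the disjunction and decision cases), then by the inductive hypothesis applied to $C$ the formula $B$ is a subformula of $C$, hence of $A$, or else of some $A_i$; if instead $A = e_i$ and $C = A_i$, then by induction $B$ is a subformula of $A_i$ or of some $A_j$. Since the number of subformulas of any formula is bounded by its size, summing gives the desired bound $|A| + \sum_{i<n} |A_i|$.

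For the second bound, I would argue directly on a chain $A = B_0 > B_1 > \cdots > B_k = B$ of one-step predecessors witnessing $B <_\mathcal A A$. Each step is either a \emph{subformula step} (which strictly decreases size) or an \emph{extension step} of the form $e_i \to A_i$. Let $j^\star$ be the largest index at which an extension step occurs, or $0$ if no extension step occurs. All steps from $B_{j^\star}$ onward are subformula steps, so $|B| \leq |B_{j^\star}|$; and $|B_{j^\star}|$ is either $|A|$ (when $j^\star = 0$) or $|A_i|$ for some $i$. Either way, $|B| \leq \max(|A|, |A_0|, \ldots, |A_{n-1}|)$.

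The one point to be careful about is that extension steps are not size-decreasing, so neither bound can be obtained by a naive induction on $|A|$ alone; this is why it is useful either to realise $\{B : B <_\mathcal A A\}$ as a subset of the union of subformulas of $A$ and of the $A_i$, or to isolate the last extension step in the chain. I expect this to be the only real obstacle, and both formulations handle it cleanly.
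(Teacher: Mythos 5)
The paper leaves this Observation unproved, so there is no proof to compare against; your argument is correct and is the natural one. Covering $\{B : B <_\mathcal A A\}$ by the subformulas of $A$ and of the $A_i$ gives the cardinality bound, and isolating the last extension step in a witnessing chain gives the size bound — and you rightly flag that extension steps need not shrink size, which is exactly why the claim is not immediate from induction on $|A|$.
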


\subsection{The system \texorpdfstring{$\elndt$}{eLNDT}}
We now recall the system for NBPs introduced in \cite{DBLP:conf/csl/BussDasKnop20}.
The language of the system $\elndt$ includes just the eNDT formulas. 
A \emph{sequent} is an expression $\Gamma \seqar \Delta$, where $\Gamma$ and $\Delta $ 
are multisets of eNDT formulas (`$\seqar$' is just a syntactic delimiter).
Semantically, such a sequent is interpreted as a judgement ``some formula of $\Gamma$ is false or some formula of $\Delta$ is true''.

Notice that the semantic interpretation of eNDT formulas we gave in \cref{dfn:sat-endt-wrt-extax} means that $\dec A p B$ is logically equivalent to both $(\dual p \cand A) \lor (p \cand B)$ and $(\dual p \cimp A) \cand (p \cimp B) $. 
It is this observation which naturally yields the following system for eNDT sequents from \cite{DBLP:conf/csl/BussDasKnop20}:

\begin{definition}
[Systems $\lndt$ and $\elndt$]
The system $\lndt$ is given by the rules in \cref{fig:lndt}.
An $\lndt$ \emph{derivation} of $\Gamma \seqar                 \Delta$ from \emph{hypotheses} $\mathcal H = \{\Gamma_i \seqar                 \Delta_i\}_{i\in I}$ is defined as expected: it is a finite list of sequents, each either some $\Gamma_i \seqar                 \Delta_i$ from $\mathcal H$ or following from previous ones by rules of $\lndt$, ending with $\Gamma \seqar                 \Delta$.

An $\elndt$ \emph{proof} is just an $\lndt$ derivation from hypotheses that are a set of extension axioms $\mathcal A = \{ e_i \extiff A_i(e_j)_{j<i} \}_{i<n} $; 
here we construe $A \extiff B$ as an abbreviation for the pair of sequents $A \seqar B $ and $B \seqar A$.
We (typically) require that the conclusion of an $\elndt$ proof is free of extension variables.

The \emph{size} $|P|$ of a proof or derivation $P$ is just the number of symbols occurring in it.
\end{definition}

Note that, despite the final condition that conclusions of $\elndt$ proofs are free of extension variables, 
we may sometimes consider intermediate `proofs' with extension variables in the conclusions.
In these cases we will always make explicit the underlying set of extension axioms.

\begin{figure}
    \textbf{Initial sequents and cut:}
    \smallskip
    \[   
    \vlinf{ 0}{}{0 \seqar                 }{}
    \qquad
    \vlinf{ 1}{}{\seqar                 1}{}
    \qquad
    \vlinf{\id}{}{p \seqar                 p}{}
    \qquad
    \vliinf{\cut}{}{\Gamma \seqar                 \Delta}{\Gamma \seqar                 \Delta, A}{\Gamma, A \seqar                 \Delta}
    \]
    
    \medskip
    
    \textbf{Structural rules:}
    \smallskip
    \[
\vlinf{\lefrul \wk}{}{\Gamma, A \seqar                 \Delta}{\Gamma \seqar                 \Delta}
\qquad
\vlinf{\rigrul \wk}{}{\Gamma \seqar                 \Delta, A}{\Gamma \seqar                 \Delta}
\qquad
\vlinf{\lefrul\cntr}{}{\Gamma, A \seqar                 \Delta}{\Gamma, A, A \seqar                 \Delta}
\qquad
\vlinf{\rigrul\cntr}{}{\Gamma \seqar                 \Delta, A}{\Gamma \seqar                 \Delta, A, A}
\]

\medskip

\textbf{Logical rules:}
\smallskip

\[
\vliinf{\lefrul p}{}{\Gamma, \dec A p B \seqar                 \Delta}{\Gamma, A \seqar                 \Delta, p}{\Gamma, p, B \seqar                 \Delta}
\qquad
\vliinf{\rigrul p}{}{\Gamma \seqar                 \Delta, \dec A p B}{\Gamma \seqar                 \Delta, A, p}{\Gamma, p \seqar                 \Delta , B}
\]
\[
\vliinf{\lefrul \lor}{}{\Gamma, A \lor B \seqar                 \Delta}{\Gamma , A \seqar                 \Delta}{\Gamma, B \seqar                 \Delta}
\qquad
\vlinf{\rigrul \lor}{}{\Gamma \seqar                 \Delta, A \lor B}{\Gamma \seqar                 \Delta, A, B}
\]
    \caption{Rules for system $(\mathsf e)\lndt$.}
    \label{fig:lndt}
\end{figure}

\begin{remark}
[Differences from original $\elndt$]
\label{differences-from-original-elndt}
In order to ease the exposition, we have slightly adjusted the definition of $\elndt$ from \cite{DBLP:conf/csl/BussDasKnop20}.
The variations are minor and, in particular, the current presentation is polynomially equivalent to that of \cite{DBLP:conf/csl/BussDasKnop20}.
Nonetheless, let us survey these differences here:
\begin{itemize}
    \item We admit constants $0$ and $1$ within the language. As mentioned in \cite{DBLP:conf/csl/BussDasKnop20}, this does not significantly affect proof size, since $0$ can be encoded as $\dec p p {\dual p}$ and $1$ as $\dec {\dual p} p p $, for an arbitrary propositional variable $p$.
    \item We do not have symbols for negative literals, to facilitate our later definition of `positivity'.
    Note, however, that $\dual p$ is equivalent to the formula $\dec 1 p 0$ in our language.
    \item More generally, we admit decisions on only positive literals, not negative ones, for the same reason. 
    Again, a formula $\dec A {\dual p} B$ may be replaced by the equivalent one $\dec B p A$.
\end{itemize}

\end{remark}

As shown in \cite{DBLP:conf/csl/BussDasKnop20}, the system $\elndt$ is adequate for reasoning about non-determi\-nis\-tic decision trees:

\begin{proposition}
[Soundness and completeness, \cite{DBLP:conf/csl/BussDasKnop20}]
 $\elndt$ proves a sequent $\Gamma \seqar \Delta$ (without extension variables) if and only if $\bigwedge \Gamma \cimp \bigvee \Delta$ is valid.
\end{proposition}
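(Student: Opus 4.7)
The plan is to handle the two directions separately by standard arguments, since the conclusion is restricted to be free of extension variables.

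For soundness, I would proceed by induction on the length of the $\elndt$ derivation, showing the stronger statement that if $P$ is an $\elndt$ derivation of $\Gamma \seqar \Delta$ from extension axioms $\mathcal A = \{e_i \extiff A_i\}_{i<n}$, then for every assignment $\alpha$, $\alpha \sat{\mathcal A} \bigwedge \Gamma \cimp \bigvee \Delta$. The initial sequents $0 \seqar$, $\seqar 1$, $p \seqar p$ and the sequents $e_i \seqar A_i$, $A_i \seqar e_i$ arising from $\mathcal A$ are immediate from the clauses of \cref{dfn:sat-endt-wrt-extax}. Structural rules and cut are entirely standard. The four logical rules are verified by inspection: for $\lefrul p$ and $\rigrul p$ the two premises cover the cases $\alpha(p)=0$ and $\alpha(p)=1$, matching exactly the semantic clause for $\dec A p B$; for $\cor$ the rules correspond directly to its semantic clause. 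Since $\Gamma, \Delta$ are free of extension variables in the conclusion, $\alpha \sat{\mathcal A} \bigwedge \Gamma \cimp \bigvee \Delta$ is independent of $\mathcal A$ and coincides with ordinary validity.

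For completeness, I would show that any valid sequent $\Gamma \seqar \Delta$ over extension-variable-free eNDT formulas is provable in plain $\lndt$ (so in particular in $\elndt$, taking $\mathcal A = \emptyset$). My preferred route is an invertibility/saturation argument: I would check that all four logical rules $\lefrul p, \rigrul p, \lefrul \cor, \rigrul \cor$ are invertible with respect to semantic consequence (this is routine, using the semantic clauses), and then describe a terminating proof-search procedure that saturates a sequent by exhaustively applying logical rules bottom-up until only sequents consisting of literals and constants remain. At that point, each leaf either contains $0$ on the left or $1$ on the right (closed by weakening from the $0$ or $1$ axioms), or some common propositional variable on both sides (closed by $\id$ plus weakening), or else yields an explicit falsifying assignment (contradicting validity). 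Termination is ensured by a standard measure on the multiset of formulas (e.g.\ the multiset of formula sizes decreases with each logical rule).

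The only mildly delicate step is handling the Boolean constants $0$ and $1$ uniformly inside this search, but since they can appear as subformulas we simply treat them as literals at the leaves and appeal to the constant axioms with weakening. No cut is needed in this search, so we obtain completeness of the cut-free fragment as a bonus, and the case with extension axioms reduces trivially since no extension variables appear in the conclusion. Alternatively, if a less self-contained approach were acceptable, I would simulate a complete $\lk$-proof of $\bigwedge \Gamma \seqar \bigvee \Delta$ after the equivalence $\dec A p B \equiv (\dual p \cand A) \cor (p \cand B)$, but the direct invertibility argument is cleaner and keeps the proof entirely within the eNDT language.
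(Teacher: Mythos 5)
Your argument matches the approach the paper attributes to \cite{DBLP:conf/csl/BussDasKnop20} and sketches in detail for the analogous positive statement (\cref{prop:poselndt-sound-compl}): soundness by a routine rule-by-rule induction, and completeness by cut-free, extension-free proof search driven by invertibility of the logical rules, terminating at atomic sequents that are closed via $0$, $1$, or $\id$ plus weakening. The proposal is correct and takes essentially the same route, with only cosmetic additions (an explicit multiset termination measure and the observation that cut is not used).
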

One sanity check here is that the set of valid extension-free sequents is indeed $\coNP$-complete, and so comprises an adequate logic for proof complexity.
This is shown explicitly in \cite{DBLP:conf/csl/BussDasKnop20}, but is also subsumed by the analogous statement for the `positive' fragment of this language that we consider in the next section, namely \cref{prop:val-pos-seqs-conp-complete}.

\section{Monotone functions and positive proofs}
In this section we shall recall monotone Boolean functions and positive NBPs that compute them, and introduce a restriction of the system $\elndt$ that reasons only with such positive NBPs.

\subsection{Monotone Boolean functions and positive programs}

A Boolean function $f: \bool^n \to \bool$ is usually called `monotone' if, whenever $\vec c\in \bool^n$ is obtained from $\vec b \in \bool^n$ by flipping $0$s to $1$s, we have $f(\vec b) \leq f(\vec c)$.
Rephrasing this into our setting formally:
\begin{definition}
[Monotonicity]
Given assignments $\alpha,\beta$, we write $\alpha \leq \beta$ if, for all propositional variables $p$, we have $\alpha(p)\leq \beta(p)$, i.e.\ if $\alpha(p) = 1$ then also $\beta(p)=1$.
A Boolean function $f$ is \emph{monotone} if $\alpha \leq \beta \implies f(\alpha) \leq f(\beta)$.
\end{definition}

There are several known non-uniform `positive' models for computing monotone functions, e.g.\ $\cnot$-free circuits or formulas, monotone span programs \cite{KarWig93:on-span-programs}, and, in our setting, positive NBPs:

\begin{definition}
[Positive NBPs, e.g.\ \cite{GrigniSnipser}]
A NBP is \emph{positive} if, for every $0$-edge from a node $u$ to a node $v$, there is also a $1$-edge from $u$ to $v$.
\end{definition}

\begin{figure}
    \centering
    \begin{tikzpicture}[scale=1, auto,swap]
\foreach \pos/\name/\disp in {
  {(0,4)/1/$p_1$}, 
  {(-1,3)/2/$p_2$},
  {(1,3)/3/$p_2$}, 
  {(-2,2)/4/$p_3$}, 
  {(0,2)/5/$p_3$},
  {(2,2)/6/$p_3$}, 
  {(-3,1)/8/$p_4$},
  {(-1,1)/10/$p_4$},
  {(1,1)/11/$p_4$},
  {(3,1)/12/$p_4$},
  {(-4,-0)/15/$0$},
  {(-2,-0)/16/$0$},
  {(0,-0)/17/$1$},
  {(+2,-0)/18/$0$},
  {(+4,-0.008)/19/$0$}}
\node[minimum size=20pt,inner sep=0pt] (\name) at \pos {\disp};

    \draw [->][thick,dotted](1) to (2);
    \draw [->][thin](1) to (3);
    
    \draw [->][thick,dotted](2) to (4);
    \draw [->][thin](2) to (5);
     
    \draw [->][thick,dotted](3) to (5);
    \draw [->][thin](3) to (6);

     \draw [->][thick,dotted](4) to (8);
     \draw [->][thin](4) to (10);
      \draw [->][thick,dotted](5) to (10);
    \draw [->][thin](5) to (11);
     \draw [->][thick,dotted](6) to (11);
     \draw [->][thin](6) to (12);
     \draw [->][thin](8) to (16);
     \draw [->][thick,dotted](8) to (15);
      \draw [->][thin](10) to (17);
     \draw [->][thick,dotted](10) to (16); \draw [->][thin](11) to (18);
     \draw [->][thick,dotted](11) to (17); \draw [->][thin](12) to (19);
     \draw [->][thick,dotted](12) to (18);
     \draw [->][thin]
    (1) [out=180, in=100] to  (2); \draw [->][thin]
    (2) [out=180, in=100] to  (4); \draw [->][thin]
    (4) [out=180, in=100] to  (8); \draw [->][thin]
    (8) [out=180, in=100] to  (15); \draw [->][thin]
    (3) [out=180, in=100] to  (5); \draw [->][thin]
    (5) [out=180, in=100] to  (10); \draw [->][thin]
    (10) [out=180, in=100] to  (16); \draw [->][thin]
    (6) [out=180, in=100] to  (11); \draw [->][thin]
    (11) [out=180, in=100] to  (17); \draw [->][thin]
    (12) [out=180, in=100] to  (18);
\end{tikzpicture}
    \caption{Positive closure of the OBDD for 2-out-of-4 Exact from \cref{fig:exact-obdd-4-2}.}
    \label{fig:pos-clo-ex-4-2}
\end{figure}

\begin{example}
    The NBP in \cref{fig:pos-clo-ex-4-2} is positive: for each $0$-edge there is a `parallel' $1$-edge.
\end{example}

\begin{proposition}
\label{fact:pos-nbp-comp-mon-fn}
A positive NBP computes a monotone Boolean function.
\end{proposition}
\begin{proof}
[Proof sketch]
Suppose $\alpha\leq \beta$ and $\alpha(G) = 1$. Let $\vec v = (v_0, \dots, v_n)$ be an accepting run, where $v_n = 1$ and, for $i<n$, each $v_i$ is labelled by some propositional variable $p_i$. 
We argue that $\vec v$ is also an accepting run of $\beta$. 
The critical case is when $\alpha (p_i) = 0$ but $\beta(p_i)=1$; in which case the positivity condition on $G$ ensures that there is nonetheless a $1$-edge from $v_i$ to $v_{i+1}$.

\end{proof}

\begin{example}
\label{ex:pos-clo-ex-4-2-computs-thresh-4-2}
    The positive NBP in \cref{fig:pos-clo-ex-4-2} in fact computes the 2-out-of-4 threshold function, returning $1$ if \emph{at least} two of its four inputs are $1$, which is a monotone function.
    This is verifiable by inspection, but we shall soon see that it is a special case of a more general class of positive NBPs.
\end{example}

\subsection{A digression on monotone complexity and closures}
NBPs are a non-uniform version of non-deterministic logspace ($\NL$): each $\NL$ language is accepted by a polynomial-size family of NBPs and, conversely, the evaluation problem for NBPs is complete for $\NL$.
In particular, $\NL$ is precisely the class of languages accepted by, say, $\Logspace$-uniform families of NBPs.

Naturally, our positive NBPs correspond to a positive version of $\NL$ too, called $\mon\NL$ by Grigni and Sipser in \cite{GrigniSnipser,grigni1991structure}.
Those works present a comprehensive development of positive models of computation and their underlying theory.
In particular there is a well-behaved notion of positive non-deterministic Turing machine based on a similar idea to that of positive NBPs: roughly speaking, whenever a transition is available when reading a $0$, the same transition is available when reading a $1$.
It turns out that the class $\mon\NL$, induced by this machine model restricted to logarithmic size work tapes, is equivalent to the class of languages recognised by $\Logspace$-uniform families of positive NBPs \cite{GrigniSnipser}.

One natural construction that is available in the NBP setting (as opposed to, say, Boolean formulas or circuits) is the notion of a `positive closure':

\begin{definition}
[Positive closure of a NBP]
\label{def:pos-clo}
For a NBP $G$ with $0$-edges $E_0$ and $1$-edges $E_1$, we write $\posclo G$ for the NBP with the same vertex set and $0$-edges $E_0$ and $1$-edges $E_0\cup E_1$. I.e., $\posclo G$ is obtained from $G$ by adding, for every $0$-edge from a node $u$ to a node $v$, a $1$-edge from $u$ to $v$ (if there is not already one).
\end{definition}

\begin{example}
\label{ex:pos-clo-ex-4-2}
The positive NBP given in \cref{fig:pos-clo-ex-4-2} is in fact the positive closure of the OBDD for 2-out-of-4 Exact from \cref{fig:exact-obdd-4-2}.
\end{example}

Note that $\posclo G$ is always a positive NBP, by definition.
Thus this construction gives us a `canonical' positive version of a NBP.
In many (but not all) cases, we can precisely characterise the semantic effect of taking positive closures, thanks to the following notion:

\begin{definition}
[Monotone closure]
\label{dfn:mon-clo-bool-fn}
For a Boolean function $f$, we define its \emph{monotone closure} $\monclo f$, by $\monclo f (\alpha) = 1$ if $\exists \beta \leq \alpha . f(\beta) = 1$. 
\end{definition}
The point of the monotone closure of a function $f$ is that it is the `least' monotone function that dominates $f$, i.e.\ such that $f\leq \monclo f$.
There is also a dual notion of the `greatest' monotone function dominated by $f$ which is similarly related to `positive co-NBPs', but we shall not make use of it in this work.

In certain cases, the positive closure of a NBP $G$ computes \emph{precisely} the monotone closure of the function computed by $G$.
Call a NBP \emph{read-once} if, on each path, each propositional variable appears at most once.
We have:

\begin{propC}
[\cite{GrigniSnipser}]
\label{pos-clo-read-once-mon-clo}
Let $G$ be a read-once NBP computing a Boolean function $f$.
Then $\posclo G$ computes $\monclo f$.
\end{propC}
\begin{proof}
[Proof sketch]
Suppose $\monclo f (\alpha) = 1$, and let $\vec v$ be an accepting run of $G$ on some $\beta \leq \alpha$.
Notice that $\vec v$ is also accepting for $\posclo G$ on $\alpha$: at any node $v_i$ labelled by some $p$ on which $\alpha $ and $\beta$ differ, i.e.\ $\beta(p) = 0$ and $\alpha(p) = 1$, by positivity we also have a $1$-edge  $v_i$ to $v_{i+1}$.

Now suppose that $\posclo G$ accepts $\alpha$ by a run $\vec v$ of nodes labelled by $\vec p$ respectively. Define $\beta  \leq \alpha$ by $\beta(p_i) = 1$ if there is no $0$-edge from $v_i$ to $v_{i+1}$ in $G$, otherwise $\beta(p_i) = 0$.
Note that $\beta$ is indeed well-defined, by the read-once property, and indeed $\beta \leq \alpha$ by definition of $\posclo G$.
\end{proof}

In particular, the above result holds when $G$ is `ordered' (an `OBDD'), i.e.\ propositional variables occur in the same relative order in each path through $G$.

\begin{example}
[Monotone closure of Exact]
The monotone closure of the 2-out-of-4 Exact function is the 2-out-of-4 Threshold function, returning 1 if \emph{at least} two of its four inputs are 1.
Since the branching programs from \cref{fig:exact-obdd-4-2} were read-once, by inspection, the fact that \cref{fig:pos-clo-ex-4-2} computes the 2-out-of-4 Threshold function is justified by \cref{pos-clo-read-once-mon-clo}, cf.~\cref{ex:pos-clo-ex-4-2,ex:pos-clo-ex-4-2-computs-thresh-4-2} earlier.
\end{example}

Note, however, that the result does not hold for arbitrary NBPs.

In fact, there is no feasible notion of `positive closure' on NBPs that always computes the monotone closure, due to the following result:
\begin{thmC}
[\cite{GrigniSnipser}]
There are monotone functions computed by polynomial-size families of NBPs, but no polynomial-size family of positive NBPs.
\end{thmC}

Note that this result is the analogue for the NBP model to Razborov's seminal results for the circuit model \cite{razborov1985,Raz85:lower-bds-mon-comp-log-permanent}. 
This result follows by establishing a non-uniform version of `$ \co\mon\NL \not\subseteq \mon\NL $'; in particular there is a monotone $\co\NL$ language (namely non-reachability in a graph) computed by no polynomial-size family of positive NBPs. 
The result above now follows by
 the Immerman-Szelepcs\'enyi theorem that $\NL = \co\NL$ \cite{Imm88:nspace-closed-under-complement,Sze88:method-forced-enumeration-nd-automata} and $\NL$-completeness of NBP evaluation.

\subsection{Representations of positive branching programs}
Let us now return to our representation of NBPs by extended formulas.
Recall that we implement non-determinism using disjunction,
so we may duly define the corresponding notion of positivity at the level of eNDT formulas themselves:

\begin{definition}
[Positive formulas]
An eNDT formula is \emph{positive} if, for each subformula of the form $\dec A p B$, we have $B = A \lor C$ for some $C$.

A set of extension axioms $\mathcal A = \{e_i \extiff A_i\}_{i<n}$ is \emph{positive} if each $A_i$ is positive.
\end{definition}

Positive eNDT formulas, under positive extension axioms, are just representations of positive NBPs.

\begin{example}
[2-out-of-4 Threshold, revisited]
Recall the positive NBP from \cref{fig:pos-clo-ex-4-2}.
We can represent this by the positive formula $\ee 1 1$ under the positive set of extension axioms:
    \[
\begin{array}{r@{\ \extiff \ }l}
    \ee 1 1 & \posdec {\ee 2 1 } {p_1}{\ee 2 2} \\
    \ee 2 1 & \posdec {\ee 3 1 } {p_2}{\ee 3 2} \\
    \ee 2 2  & \posdec {\ee 3 2 } {p_2}{\ee 3 3}
\end{array}
\qquad
\begin{array}{r@{\ \extiff \ }l}
    \ee 3 1  & \posdec {\ee 4 1 } {p_3}{\ee 4 2} \\
    \ee 3 2  & \posdec {\ee 4 2 } {p_3}{\ee 4 3} \\
    \ee 3 3  & \posdec {\ee 4 3 } {p_3}{\ee 4 4}
\end{array}
\qquad
\begin{array}{r@{\ \extiff \ }l}
    \ee 4 1  & \posdec 0 {p_4} 0 \\
    \ee 4 2  & \posdec 0 {p_4} 1 \\
    \ee 4 3  & \posdec 1 {p_4} 0 \\
    \ee 4 4  & \posdec 0 {p_4} 0
\end{array}
\]
Each $\ee i j $ represents the $j$\textsuperscript{th} node (left to right) on the $i$\textsuperscript{th} row (top to bottom), just like in \cref{ex:exact24-ext-axioms}. 
Indeed the positive set of extension axioms above may be seen formally as the positive closure of \eqref{eq:exact24-ext-axioms} defined at the level of eNDT formulas and extension axioms, cf.~\cref{def:pos-clo}.
\end{example}

Notice in particular that a positive decision $\posdec A p B$ is semantically equivalent to $A \lor (p \cand B)$, which is monotone in $A$, $p$ and $B$.
Since every other symbol/connective also computes a monotone function we may thus directly obtain the analogue of \cref{fact:pos-nbp-comp-mon-fn}:

\begin{proposition}
Suppose $\mathcal A = \{e_i \extiff A_i\}_{i<n}$ is a set of positive extension axioms. 
Each positive eNDT formula $A$ over $e_0, \dots, e_{n-1}$ computes a monotone Boolean function with respect to $\mathcal A$.
\end{proposition}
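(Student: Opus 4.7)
The plan is to proceed by $\mathcal A$-induction, as set up in Remark~\ref{A-induction}, to show the stronger statement that for every positive eNDT formula $A$ over $e_0, \dots, e_{n-1}$, and all assignments $\alpha \leq \beta$, if $\alpha \sat{\mathcal A} A$ then $\beta \sat{\mathcal A} A$. The conclusion about monotonicity of the induced Boolean function then follows immediately from the definition of $\sat{\mathcal A}$.

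The cases $A = 0$, $A = 1$, and $A = p$ are immediate from \cref{dfn:sat-endt-wrt-extax} together with the definition of $\alpha \leq \beta$. The case $A = B \cor C$ is handled directly by the inductive hypothesis applied to $B$ and $C$ (both of which are positive, and both smaller under $<_\mathcal A$). For $A = e_i$, the defining axiom gives $\alpha \sat{\mathcal A} A_i$, and since $\mathcal A$ is a positive set of extension axioms, $A_i$ is positive; since also $A_i <_\mathcal A e_i$, the inductive hypothesis applies and yields $\beta \sat{\mathcal A} A_i$, hence $\beta \sat{\mathcal A} e_i$.

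The critical case is the decision $A = \dec B p C$, which by positivity has the form $\posdec B p D$, i.e.\ $C = B \cor D$ for some $D$. Suppose $\alpha \sat{\mathcal A} A$. If $\alpha(p) = 0$ and $\alpha \sat{\mathcal A} B$, then by the inductive hypothesis $\beta \sat{\mathcal A} B$; if $\beta(p) = 0$ we are done, whereas if $\beta(p) = 1$ we use that $C = B \cor D$ to conclude $\beta \sat{\mathcal A} C$, hence $\beta \sat{\mathcal A} A$. If instead $\alpha(p) = 1$ and $\alpha \sat{\mathcal A} C$, then $\alpha \leq \beta$ forces $\beta(p) = 1$, and the inductive hypothesis applied to $C$ gives $\beta \sat{\mathcal A} C$, so again $\beta \sat{\mathcal A} A$. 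This is exactly the step where positivity of the decision is essential: without the disjunct $B$ inside $C$, raising $p$ from $0$ to $1$ could flip the decision from the $B$-branch to a $C$-branch that is not satisfied.

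The main obstacle is really only a bookkeeping one, namely ensuring that every recursive call is licensed by $<_\mathcal A$ (in particular for the extension-variable case) and that positivity of subformulas is preserved as we descend; both follow directly from the definitions of positive formulas and positive extension axioms, together with \cref{complexity-of-A-induction}. No new machinery beyond $\mathcal A$-induction is required.
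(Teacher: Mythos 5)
Your proof is correct and follows exactly the route the paper indicates (it only remarks that the argument "proceeds by $\mathcal A$-induction on $A$ and is routine," relying on the monotonicity of all connectives and the absence of negative literals). You have simply filled in the details, with the decision case handled correctly by using the disjunct $B$ hidden inside the $1$-branch $B \cor D$ when $p$ flips from $0$ to $1$; the citation of \cref{complexity-of-A-induction} is slightly off the mark (it concerns size bounds rather than preservation of positivity), but this does not affect the argument.
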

This argument proceeds by $\mathcal A$-induction on $A$ and is routine.
Other than the fact that all connectives are monotone, we also rely on the fact that we have no negative literals in our language.

\subsection{A system for positive branching programs}

The semantic equivalence of $\posdec A p B$ and $A \lor (p \cand B)$ motivates the following `positive decision' rules:

\begin{equation}
    \label{eq:pos-dec-rules}
    \vliinf{\lefrul{\pos p}}{}{\Gamma, \posdec A p B \seqar\Delta}{\Gamma, A \seqar                 \Delta}{\Gamma, p, B \seqar\Delta}
\qquad
\vliinf{\rigrul{\pos p}}{}{\Gamma \seqar                 \Delta, \posdec A p B}{\Gamma \seqar                 \Delta, A,p}{\Gamma \seqar\Delta, A, B}
\end{equation}
Naturally, these rules satisfy the subformula property and, as well as being sound, are moreover \emph{invertible}: the validity of the conlusion implies the validity of each premiss. 
This will be important for our completeness argument shortly in \cref{prop:poselndt-sound-compl}.

As expected, none of the arguments $A$, $p$ or $B$ above `change sides' by the rules above.
This is due to the fact that positive decisions are, indeed, monotone, unlike general decisions, for which the decision variable $p$ may change sides.
Note, however, that the two rules above are not `dual', in the logical sense, unlike general decisions.
This is because positive decisions are no longer self-dual.

Notably, the two positive decision rules comprise reasoning already available in $\elndt$: 

\begin{example}
    [Deriving positive rules]
    The two rules in \eqref{eq:pos-dec-rules} may be derived in $\elndt$ in constantly many lines:
    \[
    \vlderivation{
    \vliin{\lefrul p }{}{\Gamma , \posdec A p B \seqar \Delta }{
        \vlin{\rigrul\wk}{}{\Gamma, A \seqar \Delta,p}{
        \vlhy{\Gamma, A \seqar \Delta}
        }
    }{
        \vliin{\lefrul \lor}{}{\Gamma, p , A\lor B\seqar \Delta}{
            \vlin{\lefrul\wk}{}{\Gamma, p , A \seqar \Delta}{
            \vlhy{\Gamma, A \seqar \Delta}
            }
        }{
            \vlhy{\Gamma, p, B \seqar \Delta}
        }
    }
    }
    \qquad
    \vlderivation{
    \vliin{\rigrul p}{}{\Gamma \seqar \Delta, \posdec A p B}{
        \vlhy{\Gamma \seqar \Delta, A, p}
    }{
        \vlin{\lefrul\wk}{}{\Gamma, p \seqar \Delta, A\lor B}{
        \vlin{\rigrul \lor}{}{\Gamma  \seqar \Delta, A\lor B}{
        \vlhy{\Gamma \seqar \Delta, A, B}
        }
        }
    }
    }
    \]
\end{example}

This motivates the definition of our `positive subsystem' of $\elndt$:

\begin{definition}
[System $\poselndt$]
The system $\poselndt$ is defined just like $\elndt$, except replacing the $\lefrul p$ and $\rigrul p$ rules by the positive ones above in Equation~\eqref{eq:pos-dec-rules}.
Moreover, all extension axioms and formulas occurring in a proof (in particular cut-formulas) must be positive.
\end{definition}

As for $\elndt$, we should verify that the set of valid positive sequents (without extension variables) is actually sufficiently expressive to be meaningful for proof complexity, i.e.\ that they are $\coNP$-complete.
While this is fairly immediate for other positive systems, such as $\mlk$, it is not so clear here so we give a self-contained argument.

\begin{proposition}
\label{prop:val-pos-seqs-conp-complete}
 The set of valid positive sequents (without extension variables) is $\coNP$-complete.
\end{proposition}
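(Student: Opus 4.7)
The plan is to prove both halves of the $\coNP$-completeness separately: the upper bound is essentially routine, while the lower bound requires a careful polarity-style reduction that accommodates the absence of negation.

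For membership in $\coNP$, I would note that a positive eNDT formula $A$ without extension variables is a finite syntactic object built from variables, constants, disjunction, and (positive) decisions, and its semantics under a given assignment $\alpha$ can be computed in polynomial time by a straightforward bottom-up application of the clauses of \cref{dfn:sat-endt-wrt-extax}. Hence the complement of validity of $\Gamma \seqar \Delta$ is in $\NP$: guess an assignment to the propositional variables appearing in the sequent, then check in polynomial time that it satisfies every formula of $\Gamma$ and no formula of $\Delta$. Validity is therefore in $\coNP$.

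For $\coNP$-hardness I would reduce $\mathrm{UNSAT}$ for CNF formulas. Given $\phi = C_1 \cand \cdots \cand C_m$ over variables $p_1,\dots,p_n$, introduce fresh variables $\dual p_1,\dots,\dual p_n$ intended to stand for the negations, and let $C_j^+$ be the positive disjunction obtained from $C_j$ by replacing each occurrence of $\neg p_i$ by $\dual p_i$. Define the positive sequent
\[
 C_1^+,\dots,C_m^+,\, p_1\cor \dual p_1,\dots,p_n \cor \dual p_n \seqar \posdec{0}{p_1}{\dual p_1},\dots,\posdec{0}{p_n}{\dual p_n},
\]
noting that each $\posdec{0}{p_i}{\dual p_i}$ is a well-formed positive eNDT formula (with $A=0$, $C = \dual p_i$) whose semantics is exactly $p_i \cand \dual p_i$. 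I would then argue that this sequent is valid iff $\phi$ is unsatisfiable: if $\alpha$ satisfies $\phi$, then extending it by $\dual p_i := \neg \alpha(p_i)$ satisfies all antecedents and falsifies every right-hand formula; conversely, any falsifying assignment $\beta$ of the sequent must make each $p_i \cor \dual p_i$ true and each $p_i \cand \dual p_i$ false, forcing $\beta$ to be a consistent encoding of some Boolean $\alpha$ satisfying every $C_j$, hence $\phi$.

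The main obstacle is that positive eNDT offers no direct way to enforce the \emph{negative} constraint ``not both $p_i$ and $\dual p_i$'', which is what makes the encoded variables behave Boolean-like. The key trick is to place the at-most-one constraints $p_i \cand \dual p_i$ on the right-hand side of the sequent, where a disjunctive interpretation turns them into a permission to falsify them collectively; this is precisely what forces any falsifying assignment of the sequent to induce a genuine Boolean assignment satisfying $\phi$. Everything else—checking that the reduction is polynomial-time, that every formula used obeys the positivity condition, and that $\dec{0}{p_i}{(0\cor \dual p_i)}$ parses correctly—is routine bookkeeping.
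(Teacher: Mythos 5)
Your proof is correct, but it routes the hardness reduction differently from the paper. The paper reduces from \emph{validity of DNFs}: given a DNF $A$, replace each $\neg p_i$ by a fresh positive variable $p_i'$ to obtain $A' = \bigvee_j \bigwedge \vec p_j$, encode each conjunctive term $\bigwedge \vec p_j$ as a nested positive-decision formula $\posterm{\vec p_j}$, and produce the sequent $p_1\cor p_1', \dots, p_k\cor p_k' \seqar \posterm{\vec p_1}, \dots, \posterm{\vec p_n}$. Notably, this needs \emph{no} ``at most one of $p_i, p_i'$'' constraint: the at-least-one constraints on the left suffice, since to prove equi-validity one passes from a falsifying $\beta$ for the sequent to a falsifying $\alpha$ for $A$ by the standard $\alpha(p_i)=\beta(p_i)$, $\beta(p_i') = 1-\alpha(p_i)$ trick. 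You instead reduce from CNF-UNSAT, placing the (positivised) clauses on the left and, crucially, the positive-decision encodings of $p_i \cand \dual p_i$ on the \emph{right}, which enforces the missing at-most-one constraint via the disjunctive reading of the succedent. This is a genuinely dual formulation: your encoding does need the extra right-hand constraints (without them, the all-true assignment would trivially falsify the sequent), and you observe precisely why. Both reductions are polynomial-time and correct, and you also spell out the routine $\coNP$ upper bound, which the paper omits. The paper's reduction is marginally leaner (no exclusion constraints), while yours more explicitly exhibits how positive decisions emulate the negative constraint absent from the syntax.
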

\begin{proof}
 By the Cook-Levin theorem \cite{Coo71:comp-thm-prv-procs,Lev73:univ-seq-search-problems}, we know that the validity problem for DNFs is $\coNP$-complete, so we will show how to encode a DNF as an equi-valid positive sequent. 
 
 First, note that we may express positive terms (i.e.\ conjunctions of propositional variables) as positive NDT formulas by exploiting the equivalence $\posdec 0 p B \iff 0 \lor (p \land B) \iff p \land B$.
 Recursively applying this equivalence we obtain:
 \begin{equation}
     \label{eq:posterm-to-posndt}
     \bigwedge_{i=1}^m p_i
 \iff
 \posdec 0 {p_1} {\posdec 0 {p_2} { (\cdots \posdec 0 {p_{m-1}} {p_m )\cdots} } }
 \end{equation}
 Let us write $\posterm{p_1, \dots, p_m}$ for the positive NDT formula on the right above.
 
 Now, fix a DNF instance $A$ over propositional variables $p_1, \dots , p_k$ and let $A'$ be obtained by replacing each negative literal $\dual p_i$ by a fresh (positive) propositional variable $p_i'$.
 Write $A' = \bigvee\limits_{i=1}^n \bigwedge \vec p_i$, where each $\vec p_i$ is a sequence of propositional variables (among $p_1 , \dots, p_k, p_1' , \dots, p_k'$).
 Now we have that the following positive NDT sequent is equi-valid with $A$, as required:
 \[
 p_1 \lor p_1', \dots, p_k \lor p_k'
 \seqar
 \posterm{\vec p_1}, \dots, \posterm{\vec p_n} \qedhere
 \]
\end{proof}

Finally, our calculus $\poselndt$ is indeed adequate for reasoning about positive sequents:

\begin{proposition}
 [Soundness and completeness]
 \label{prop:poselndt-sound-compl}
 $\poselndt$ proves a positive sequent $\Gamma \seqar \Delta$ (without extension variables) if and only if $\bigwedge \Gamma \cimp \bigvee \Delta$. 
\end{proposition}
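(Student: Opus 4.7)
Soundness is the routine direction: by induction on $\poselndt$ proofs, we verify that each rule preserves validity of the interpreted sequent. The cases for $\id$, $\cut$, weakening, contraction, and the $\cor$ rules are standard, while the positive decision rules are justified by the semantic equivalence $\posdec A p B \iff A \cor (p \cand B)$ (valid since $B = A \cor C$) via a case split on the value of $p$.

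For completeness, the plan is a direct semantic argument by induction on the number of propositional variables appearing in the sequent $\Gamma \seqar \Delta$. The base case (no variables) has each formula evaluating to a constant, so validity is witnessed by some $0$ on the left or some $1$ on the right, which is derivable from the $0$- and $1$-axioms together with weakening. For the inductive step, fix a variable $p$ occurring in the sequent: validity implies that both substituted sequents $\Gamma[c/p] \seqar \Delta[c/p]$ (for $c \in \{0,1\}$) are valid and contain one fewer variable, so each admits a $\poselndt$ proof by the inductive hypothesis. We combine them into a proof of $\Gamma \seqar \Delta$ by first deriving the intermediate sequents $\Gamma \seqar \Delta, p$ and $\Gamma, p \seqar \Delta$ and then applying $\cut$ on $p$.

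The key technical ingredient is a family of \emph{bridging sequents} relating each positive formula $A$ to its substitutions: namely $A \seqar A[0/p], p$ and $A[0/p] \seqar A, p$, together with the analogues $p, A \seqar A[1/p]$ and $p, A[1/p] \seqar A$. Each is proved by a structural induction on $A$. The cases of variables, constants, and $\cor$ are immediate from $\id$, weakening, and the $\cor$ rules. The interesting case is a positive decision $\posdec B q C$ (with $C = B \cor D$): when $q = p$, the substitution unfolds to $B[0/p]$ or $C[1/p]$, and a positive decision rule reduces the bridging sequent to the inductive hypothesis on the relevant subformula, with a trivial axiom closing the other premise; when $q \neq p$, the substitution commutes with the decision, and the positive decision rules together with inductive hypotheses on $B$ and $C$ yield the required derivation. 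The positivity condition $C = B \cor D$ is essential throughout to ensure the premises have the shape required by the rules of $\poselndt$.

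The main obstacle is the uniform derivation of the bridging sequents, particularly the positive decision case; once they are in hand, assembling the final proof of $\Gamma \seqar \Delta$ by cuts against the inductive hypotheses is routine. An alternative completeness route would invoke the completeness of $\elndt$ from the previous proposition and translate the resulting proof into $\poselndt$, but such a translation is essentially the content of the paper's main theorem, so we prefer the self-contained semantic approach above.
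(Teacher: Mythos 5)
Your proof is correct, but it takes a genuinely different route from the paper's. The paper proves completeness by \emph{cut-free proof search}: it observes that every logical rule of $\poselndt$ ($\lefrul{\cor}$, $\rigrul{\cor}$, $\lefrul{\pos p}$, $\rigrul{\pos p}$) is \emph{invertible} — validity of the conclusion implies validity of each premiss — and that premisses have fewer connectives than conclusions, so bottom-up one decomposes to sequents of atomic formulas, which are closed by the $0$-, $1$-, and $\id$-axioms plus weakenings. Your argument instead does induction on the number of propositional variables via Shannon (Boole) expansion: from proofs of $\Gamma[0/p]\seqar\Delta[0/p]$ and $\Gamma[1/p]\seqar\Delta[1/p]$, the bridging sequents $A\seqar A[0/p],p$, $A[0/p]\seqar A,p$, $p,A\seqar A[1/p]$, $p,A[1/p]\seqar A$ (established by structural induction, using the positivity constraint in the decision case) produce $\Gamma\seqar\Delta,p$ and $\Gamma,p\seqar\Delta$, which cut to give $\Gamma\seqar\Delta$. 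Both arguments are sound, but note the trade-off: the paper's cut-free search establishes the strictly stronger cut-free completeness ``for free,'' which is often a useful by-product, whereas yours relies essentially on $\cut$ (both on $p$ and on the bridging sequents). On the other hand, your route is self-contained in a different sense — the substitution lemma is an independently useful fact, whereas invertibility has to be checked rule by rule. One small imprecision: in your base case, a variable-free sequent need not consist of atomic formulas (e.g.\ $0\cor 1$ may appear, though no decisions can, since those require a propositional variable); you still need to apply $\lefrul{\cor}$/$\rigrul{\cor}$ steps before closing with axioms and weakening. This is easily repaired and does not affect the correctness of the overall argument.
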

\begin{proof}
[Proof sketch]
Similarly to the argument in [BDK20] for $\elndt$, we may proceed by cut-free proof search, and will not make use of any extension variables.
Note that each logical rule is \emph{invertible}, i.e.\ the validity of the conclusion implies the validity of each premiss.
Moreover, each premiss (of a logical rule) has fewer connectives than the conclusion.
Thus, bottom-up, we may simply repeatedly apply the logical steps until we reach sequents of only atomic formulae. 
Such a sequent is valid if and only if there is a $0$ on the LHS, a $1$ on the RHS, or some propositional variable $p$ on both sides.
Each of these cases may be derived from initial sequents using the weakening rules, $\lefrul \wk$ and $\rigrul \wk$.
\end{proof}

\begin{remark}
[Completeness with respect to extension axioms]
\label{rem:compl-wrt-extaxs}
While it is standard to only consider extended proofs over extension-free theorems, let us point out that we also have a stronger version of completeness with respect to sets of (positive) extension axioms.

Given a set of (positive) extension axioms $\mathcal A = \{ e_i \extiff A_i\}_{i<n} $, a (positive) formula $A$ over $e_0, \dots, e_{n-1}$ is \emph{$\mathcal A$-valid} if, for every assignment $\alpha$, we have $\alpha \sat{\mathcal A} A$ (cf.~\cref{dfn:sat-endt-wrt-extax}).

The same argument as in Proposition~\ref{prop:poselndt-sound-compl} can now be applied to show completeness for $\mathcal A$-valid positive sequents by proofs using the extension axioms $\mathcal A$.
The only difference is that, when we reach a connective-free sequent (bottom-up), extension variables may also occur, not just propositional variables and constants.
In this case we must use the extension axioms to unwind the extension variables and continue the proof search algorithm.
Termination of this process now follows by appealing to $\mathcal A$-induction (cf.~\cref{A-induction}).
\end{remark}

\subsection{Some basic theorems}
Let us now present some basic theorems of $\poselndt$, which will all have polynomial-size proofs.
These will be useful for our later arguments and,
at the same time, exemplify how we will conduct proof complexity theoretic reasoning in what follows.

Let us first point out an expected property, that we can polynomially derive a general identity rule from the atomic version included in the definition of $\poselndt$.
Albeit a simple observation, it has the consequence that applying substitutions of formulas for variables in proofs has only  polynomial overhead in proof size.
\begin{proposition}
[General identity]
\label{prop:gen-identity}
Let $\mathcal A = \{ e_i \extiff A_i \}_{i<n}$ be a set of positive extension axioms.
There are polynomial-size $\poselndt$ proofs of $A \seqar                 A$, for positive formulas $A$ containing only extension variables among $e_0, \dots, e_{n-1}$.
\end{proposition}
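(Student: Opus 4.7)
The plan is to build polynomial-size proofs of $A \seqar A$ by structural induction on the positive formula $A$, treating both propositional and extension variables as base cases. The constants are handled by weakening from the initial sequents $0 \seqar$ and $\seqar 1$; for a propositional variable $p$, the axiom $p \seqar p$ is immediate; for an extension variable $e_i$, the extension-axiom sequents $e_i \seqar A_i$ and $A_i \seqar e_i$ are both available as hypotheses, so a single $\cut$ on the (positive) formula $A_i$ yields $e_i \seqar e_i$ directly, with no appeal to the IH.

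For the inductive case $A = B \cor C$, I apply the IH to get $B \seqar B$ and $C \seqar C$, weaken each to $B \seqar B, C$ and $C \seqar B, C$, combine via $\lefrul\cor$ to obtain $B \cor C \seqar B, C$, and finish with $\rigrul\cor$. For the inductive case $A = \posdec B p C$ (so $C = B \cor C'$ by positivity of $A$), I apply $\rigrul{\pos p}$ with premises $\posdec B p C \seqar B, p$ and $\posdec B p C \seqar B, C$; each such premise arises in turn from a single $\lefrul{\pos p}$ step over suitable weakenings of $B \seqar B$, $p \seqar p$, and $C \seqar C$, with the first and third supplied by the IH.

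For the size analysis, each sequent occurring in the proof has length $O(|A|)$, each inductive case adds only a constant number of inferences above its recursive calls, and the structural recursion on $A$ has depth at most $|A|$. The extension-variable leaves contribute at most $O(\max_{i<n} |A_i|)$ each, and there are at most $|A|$ such leaves; hence the total proof size is polynomial in $|A| + \sum_{i<n} |A_i|$. I expect no substantive obstacle: the only mildly delicate point is that the cut formula $A_i$ used in the extension-variable base case must itself be positive for the cut to remain legal in $\poselndt$, which is precisely what the positivity of $\mathcal A$ guarantees --- this is also why we can avoid any recursion into $A_i$ via $\mathcal A$-induction, keeping the complexity bookkeeping entirely elementary.
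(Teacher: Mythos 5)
Your proof is correct, and it takes a genuinely simpler route than the paper's in the extension-variable case. The paper proceeds by $\mathcal A$-induction, which at $A = e_i$ unfolds the axiom and recursively requires $A_i \seqar A_i$; this passes through the defining formulas and forces the paper to construct an explicitly \emph{dag-like} proof containing $B \seqar B$ for every $B \leq_{\mathcal A} A$, since a tree-like version of that recursion can blow up exponentially (extension variables permit exponentially compressed representations, so repeated re-derivation of $A_i \seqar A_i$ is dangerous). You avoid the issue entirely: by treating $e_i$ as a base case and cutting once on the positive formula $A_i$ between the two axiom sequents $e_i \seqar A_i$ and $A_i \seqar e_i$, the recursion never leaves the syntax tree of $A$. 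The apparent duplication in the positive-decision case is then harmless even tree-like: the two IH uses of $B \seqar B$ in proving $\posdec B p C \seqar \posdec B p C$ are mirrored exactly by the two syntactic occurrences of $B$ in $\dec B p {(B \cor C)}$, so the recursion-tree nodes inject into positions of the syntax tree of $A$ and the line count stays $O(|A|)$. Your complexity bookkeeping is therefore legitimately elementary, as you claim, where the paper must be careful about sharing. Two minor slips, neither affecting the argument: the parenthetical ``so $C = B \cor C'$ by positivity of $A$'' misreads the notation --- in $\posdec B p C$ the positivity is already baked in (the ``then''-branch is $B \cor C$), so $C$ is the \emph{extra} disjunct and need not itself decompose; and ``each sequent occurring in the proof has length $O(|A|)$'' should read $O(|A| + \max_i |A_i|)$ to cover the extension-axiom lines, which your next sentence implicitly corrects.
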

\begin{proof}
 We construct a (dag-like) proof of the required sequent by $\mathcal A$-induction.
 More precisely, for each such $A$, we construct a polynomial-size proof containing sequents $B \seqar                 B$ for each $B\leq_\mathcal A A$ by $\mathcal A$-induction on $A$.
 \begin{itemize}
     \item If $A$ is a propositional variable then we are done by the rule $\id$. 
     \item If $A = 0$ then we have:
     \[
     \vlderivation{
     \vlin{\rigrul\wk}{}{0 \seqar                 0}{
     \vlin{0}{}{0 \seqar                 }{\vlhy{}}
     }
     }
     \]
     \item If $A=1$ then we have:
     \[
     \vlderivation{
     \vlin{\lefrul \wk}{}{1 \seqar                 1}{
     \vlin{1}{}{\seqar                 1}{\vlhy{}}
     }
     }
     \]
     \item If $A = e_i$ for some $i<n$ and $e_i \dseqar A_i$ is an extension axiom of $\mathcal A$, then we simply cut the two extension axiom sequents against each other: 
     \[
     \vliinf{\cut}{}{e_i \seqar e_i}{e_i \seqar A_i}{A_i \seqar e_i}
     \]

     \item If $A = B \lor C$ then we extend the proof obtained by the inductive hypothesis as follows,
     \[
     \vlderivation{
     \vlin{\rigrul \lor}{}{B \lor C \seqar                 B \lor C}{
     \vliin{\lefrul \lor}{}{B \lor C \seqar                 B,C}{
        \vlin{\rigrul{\wk}}{}{B \seqar                 B,C}{
        \vliq{\IH}{}{B \seqar                 B}{\vlhy{}}
        }
     }{
        \vlin{\rigrul \wk}{}{C \seqar                 B,C}{
        \vliq{\IH}{}{C \seqar                 C}{\vlhy{}}
        }
     }
     }
     }
     \]
     where sequents marked $\IH$ are obtained by the inductive hypothesis.
     \item If $A = \posdec B p C$ then we extend the proof obtained by the inductive hypothesis as follows:
     \[
     \vlderivation{
        \vliin{\lefrul{\pos p}}{}{\posdec B p C \seqar                 \posdec B p C}{
            \vliin{\rigrul{\pos p}}{}{B \seqar                 \posdec B p C}{
                \vlin{\rigrul \wk}{}{B \seqar                 B,p}{
                \vliq{\IH}{}{B \seqar                 B}{\vlhy{}}
                }
            }{
                \vlin{\rigrul \wk}{}{B \seqar                 B,C}{
                \vliq{\IH}{}{B \seqar                 B}{\vlhy{}}
                }
            }
        }{
            \vliin{\rigrul{\pos p}}{}{p,C \seqar                 \posdec B p C}{
                \vlin{\lefrul \wk , \rigrul \wk}{}{p,C \seqar                 B,p}{
                \vlin{\id}{}{p \seqar                 p}{\vlhy{}}
                }
            }{
                \vlin{\lefrul \wk, \rigrul \wk}{}{p,C \seqar                 B,C}{
                \vliq{\IH}{}{C \seqar                 C}{\vlhy{}}
                }
            }
        }
     }
     \]

    where sequents marked $\IH$ are obtained by the inductive hypothesis. 
 \end{itemize}
 
 To evaluate proof size note that, at each step of the argument above, we add a constant number of lines of polynomial size in $A$ and $\mathcal A$.
 Thus a polynomial bound follows by \cref{complexity-of-A-induction}.
\end{proof}

Notice, in the final step above, that we do not formally `duplicate' the subproof for $B\seqar B$ as this, recursively applied, could cause an exponential blowup.
This is why the construction by $\mathcal A$-induction is phrased as constructing a \emph{single} proof that contains all `smaller' instances of identity already, with inductive steps just extending that proof.
In what follows we shall be less rigorous when constructing formal proofs in this way,
simply saying that we `construct them by $\mathcal A$-induction'.
We shall also typically leave proof complexity analysis like the one above implicit.

For our later simulations, the following `truth conditions' for positive decisions will prove useful:
\begin{proposition}
 [Truth conditions]
\label{Truth}
Let $\mathcal A=\{e_i \extiff A_i\}_{i<n}$ be a set of positive extension axioms and let $A$ and $B$ be formulas over $e_0, \dots, e_{n-1}$.
There are polynomial-size $\poselndt$ proofs of the following sequents with respect to $\mathcal A$:
\begin{enumerate}
    \item\label{item:truth-posdec-implies-0case} $\posdec A p B \seqar                 A,p$
    \item\label{item:truth-posdec-implies-1case} $\posdec A p B \seqar                 A,B$
    \item\label{item:truth-0case-implies-posdec} $A \seqar                 \posdec A p B$
    \item\label{item:truth-1case-implies-posdec} $p,B \seqar                 \posdec A p B$
\end{enumerate}

\end{proposition}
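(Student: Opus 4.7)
The plan is to derive each of the four sequents by a single application of the appropriate positive decision rule at the root, with the two resulting premisses then being closed using the general identity from Proposition~\ref{prop:gen-identity} together with weakenings. No induction on formulas is needed here at the top level; the inductive content is entirely offloaded to Proposition~\ref{prop:gen-identity}.

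Concretely, for \cref{item:truth-posdec-implies-0case} I would apply $\lefrul{\pos p}$ to $\posdec A p B \seqar A,p$, leaving premisses $A \seqar A,p$ (obtained from the general identity $A \seqar A$ by one $\rigrul\wk$) and $p,B \seqar A,p$ (obtained from the atomic axiom $p \seqar p$ by one $\lefrul\wk$ and one $\rigrul\wk$). For \cref{item:truth-posdec-implies-1case} I proceed symmetrically, applying $\lefrul{\pos p}$ and closing $A \seqar A,B$ from identity on $A$ and $p,B \seqar A,B$ from identity on $B$, both with weakenings. For \cref{item:truth-0case-implies-posdec} I apply $\rigrul{\pos p}$ to $A \seqar \posdec A p B$ and close both premisses $A \seqar A,p$ and $A \seqar A,B$ from the single identity $A \seqar A$ by weakening. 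For \cref{item:truth-1case-implies-posdec} I again apply $\rigrul{\pos p}$ and close $p,B \seqar A,p$ and $p,B \seqar A,B$ from identity on $p$ and $B$ respectively.

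For the size bound: in each case I am adding only a constant number of rule applications on top of at most two invocations of general identity on subformulas of $\posdec A p B$. Since Proposition~\ref{prop:gen-identity} yields proofs of size polynomial in $|A|$, $|B|$ and $|\mathcal A|$, and since $|A|, |B| \leq |\posdec A p B|$, the total size is polynomial in $|\posdec A p B|$ and $|\mathcal A|$, as required.

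I do not anticipate a substantive obstacle: the positive decision rules were designed precisely to match the semantics $\posdec A p B \equiv A \cor (p \cand B)$, and each of the four sequents is (up to weakening) an immediate premiss of one of those rules when read with a general identity as axiom. The only thing to be careful about is not to accidentally treat $B$ as if it already implied $\posdec A p B$ without $p$ on the left, and conversely not to drop $B$ in the $0$-case premisses — but the formulations of the four sequents already reflect this asymmetry correctly.
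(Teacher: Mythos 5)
Your proposal is correct and is essentially identical to the paper's own proof: both derive each sequent by a single root application of $\lefrul{\pos p}$ or $\rigrul{\pos p}$, closing the two premisses by weakening from instances of general identity (\cref{prop:gen-identity}), with the same size analysis.
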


\begin{proof}
We give the proofs explicitly:
\[
\cref{item:truth-posdec-implies-0case} : \ \vlderivation{
    \vliin{\lefrul {\pos p} }{}{\posdec A p B \seqar                 A, p}{
        \vlin{\rigrul \wk}{}{A \seqar                 A,p}{
        \vlin{\id}{}{A \seqar                 A}{\vlhy{}}
        }
    }{
        \vlin{\lefrul \wk, \rigrul \wk}{}{p,B \seqar                 A,p}{
        \vlin{\id}{}{p \seqar                 p}{\vlhy{}}
        }
    }
}
\qquad
\cref{item:truth-posdec-implies-1case} : \
\vlderivation{
    \vliin{\lefrul {\pos p} }{}{\posdec A p B \seqar                 A, B}{
        \vlin{\rigrul \wk}{}{A \seqar                 A,B}{
        \vlin{\id}{}{A \seqar                 A}{\vlhy{}}
        }
    }{
        \vlin{\lefrul \wk, \rigrul \wk}{}{p,B \seqar                 A,B}{
        \vlin{\id}{}{B \seqar                 B}{\vlhy{}}
        }
    }
}
\]
\[
\cref{item:truth-0case-implies-posdec} : \ 
\vlderivation{
    \vliin{\rigrul{\pos p}}{}{A \seqar                 \posdec A p B}{
        \vlin{\rigrul \wk}{}{A \seqar                 A,p}{
        \vlin{\id}{}{A \seqar                 A}{\vlhy{}}
        }
    }{
        \vlin{\rigrul \wk}{}{A \seqar                 A,B}{
        \vlin{\id}{}{A \seqar                 A}{\vlhy{}}
        }
    }
}
\qquad
\cref{item:truth-1case-implies-posdec} : \ 
\vlderivation{
    \vliin{\rigrul{\pos p}}{}{p,B \seqar                 \posdec A p B}{
        \vlin{\lefrul \wk, \rigrul \wk}{}{p,B \seqar                 A,p}{
        \vlin{\id}{}{p \seqar                 p}{\vlhy{}}
        }
    }{
        \vlin{\lefrul \wk, \rigrul \wk}{}{p,B \seqar                 A,B}{
        \vlin{\id}{}{B \seqar                 B}{\vlhy{}}
        }
    }
}
\]
where the steps marked $\id$ are derivable by Proposition~\ref{prop:gen-identity}.
\end{proof}

Notice that, given that we have polynomial-size proofs for general identity, \cref{prop:gen-identity}, the result above also just follows immediately from semantic validity of the sequents \cref{item:truth-posdec-implies-0case}-\cref{item:truth-1case-implies-posdec} and completeness of $\poselndt$, \cref{prop:poselndt-sound-compl}, by simply substituting the formulas $A$ and $B$ for appropriate constant-size instances of \cref{item:truth-posdec-implies-0case}-\cref{item:truth-1case-implies-posdec}.
We gave the argument explicitly to exemplify formal proofs of the system $\poselndt$.
We shall, however, make use of the aforementioned observation in the remainder of this work.

\begin{example}
[A positive `medial']
\label{pos-medial-example}
\label{pos-medial}
Branching programs enjoy elegant symmetries.
For instance, in our eNDT notation, we have validity of the following pair of sequents,
\[
\dec {(\dec A q B)} p {(\dec C q D)}
\ \dseqar \ 
\dec {(\dec A p C)} q {(\dec B p D )}
\]
corresponding to a certain permutations of nodes in NBPs.

We also have a \emph{positive} version of the law above, namely:
\begin{equation}
    \label{eq:pos-medial}
    \begin{array}{rl}
      & \posdec {(\posdec A q B)} p {(\posdec C q D)} \\
      \dseqar   & \posdec {(\posdec A p C)} q {(\posdec B p D )} 
    \end{array}
\end{equation}
        The validity of this equivalence can be seen by noticing that each side is equivalent to $A \lor  (p \cand C)  \lor (q \cand B) \lor (p \cand q \cand D) $.
By completeness and substitution, we thus have polynomial-size proofs of \eqref{eq:pos-medial}.

\end{example}

Finally, we will make use of the following consequence of the truth conditions:
\begin{corollary}
[Deep inference]
\label{refined-replacement}
 \label{mon-replacing-result}
There are polynomial size $\poselndt$ derivations of 
\[
\Gamma , \posdec A p B \seqar \Delta, \posdec{A' }p{B'}
\]
from hypotheses $\Gamma, A \seqar \Delta, A'$ and $\Gamma , B \seqar \Delta, B'$, over any positive extension axioms including all extension variables occurring in $A$ and $B$. 
\end{corollary}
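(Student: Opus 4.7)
My plan is to derive the conclusion by a single application of $\lefrul{\pos p}$ on the principal formula $\posdec A p B$, splitting the derivation into two branches that will be closed using the hypotheses together with the appropriate truth conditions from \cref{Truth}, combined via cut.

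More concretely, I would proceed as follows. First, I would derive the left premiss $\Gamma, A \seqar \Delta, \posdec{A'} p {B'}$: take the hypothesis $\Gamma, A \seqar \Delta, A'$ and cut it against $A' \seqar \posdec{A'}p{B'}$ (from \cref{Truth}.\ref{item:truth-0case-implies-posdec}, weakened by $\Gamma$ on the left and $\Delta$ on the right) on the formula $A'$. Second, I would derive the right premiss $\Gamma, p, B \seqar \Delta, \posdec{A'} p {B'}$: take the hypothesis $\Gamma, B \seqar \Delta, B'$, weaken by $p$ on the left to obtain $\Gamma, p, B \seqar \Delta, B'$, then cut against (a suitably weakened version of) $p, B' \seqar \posdec{A'}p{B'}$ (from \cref{Truth}.\ref{item:truth-1case-implies-posdec}) on $B'$, followed by a contraction on $p$. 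Finally, apply $\lefrul{\pos p}$ to the two resulting sequents to conclude $\Gamma, \posdec A p B \seqar \Delta, \posdec{A'}p{B'}$.

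For the complexity bound, note that the construction uses a constant number of cut, weakening and contraction steps on top of the two given hypotheses and two instances of the truth conditions from \cref{Truth}, each of which admits polynomial-size proofs in the sizes of $A, A', B, B'$ and the extension axioms. Hence the total derivation is of polynomial size. I do not foresee any real obstacle: the only mild subtlety is ensuring that the cut rule's shared-context requirement is respected, which is handled by preliminary weakenings of the truth-condition sequents by $\Gamma$ and $\Delta$.
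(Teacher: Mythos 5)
Your proposal is correct and follows essentially the same route as the paper's own proof: a single $\lefrul{\pos p}$ at the bottom, with the left premiss obtained by cutting the first hypothesis against \cref{Truth}.\ref{item:truth-0case-implies-posdec} and the right premiss by cutting the second hypothesis against \cref{Truth}.\ref{item:truth-1case-implies-posdec}, plus routine structural steps. The only cosmetic difference is that the paper simply waves at the omitted weakenings/contractions needed to match contexts, whereas you spell them out slightly more explicitly (including the contraction on $p$).
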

\begin{proof}
\renewcommand{\storageone}{\cref{Truth}.\cref{item:truth-0case-implies-posdec}}
\renewcommand{\storagetwo}{\cref{Truth}.\cref{item:truth-1case-implies-posdec}}
We give the derivation below:
\[
\vlderivation{
    \vliin{\lefrul{\pos p}}{}{\Gamma, \posdec A p B \seqar \Delta, \posdec {A'}p{B'} }{
        \vliin{\cut}{}{ \Gamma, A \seqar \Delta, \posdec{A'}{p}{B'} }{
            \vlhy{\Gamma, A \seqar \Delta , A' }
        }{
            \vliq{}{}{A' \seqar \posdec{A'}{p}{B'} }{\vlhy{ \text{\storageone} }}
        }
    }{
        \vliin{\cut}{}{\Gamma, p, B \seqar \Delta, \posdec{A'}{p}{B'} }{
            \vlhy{\Gamma, B \seqar \Delta, B'}
        }{
            \vliq{}{}{p,B' \seqar \posdec{A'}{p}{B'}}{\vlhy{ \text{\storagetwo} }}
        }
    }
}
\]
Note that we have omitted several structural steps, namely $\lefrul \wk, \rigrul \wk$ above $\cut$-steps, to match contexts.
We will typically continue to omit these, freely using `context splitting' and `context sharing' behaviour, under structural rules.
\end{proof}

\section{Programs for counting and their basic properties}
\label{sec:programs-for-counting-and-basic-props}
Let us now consider some of the Boolean counting functions that appeared in our earlier examples more formally.

The \emph{Exact} functions $\exact n k : \bool^n \to \bool$ are defined by:
\[
\exact n k (b_1, \dots, b_n) =1
\quad \iff \quad
\sum\limits_{i=1}^n b_i \, = \,  k
\]
I.e.\
$\exact n k (b_1, \dots, b_n) =1$ iff \emph{exactly} $k$ of $b_1, \dots, b_n$ are $1$.

Taking the monotone closures of these functions (cf.~Definition~\ref{dfn:mon-clo-bool-fn}), we obtain the \emph{Threshold} functions $\thresh n k : \bool^n \to \bool$ by:
\[
\thresh n k (b_1, \dots, b_n) =1
\quad \iff \quad
\sum\limits_{i=1}^n b_i \, \geq  \,  k \geq 0
\]
I.e.\
$\thresh n k (b_1, \dots, b_n) =1$ iff \emph{at least} $k$ of $b_1, \dots, b_n$ are $1$.\footnote{Note that this description of the monotone closure of Exact holds only when $k\geq 0$, hence the side condition in display.}

For consistency with the exposition so far, given a list $\vec p = p_0, \dots, p_{n-1}$ of propositional variables, we construe $\exact n k (\vec p) $ as a Boolean function from assignments to Booleans, writing, say, $\exact n k (\vec p) (\alpha)$ for its (Boolean) output.
Similarly for $\thresh n k (\vec p)$.

\subsection{OBDDs for Exact and their representations}
It is well-known that counting functions like those above are computable by `ordered' branching programs, or `OBDDs' (see, e.g., \cite{Weg00:bps-and-bdds}). 
These are deterministic branching programs where variables occur in the same relative order on each path. 
For instance we give an OBDD for $\exact n k (p_1, \dots, p_n)$ in \cref{fig:exact-obdd}.
\begin{figure}[t]
    \[
\begin{tikzpicture}[scale=1.2, auto,swap]
\foreach \pos/\name/\disp in {
  {(0,4)/1/$p_1$}, 
  {(-1,3)/2/$p_2$},
  {(1,3)/3/$p_2$}, 
  {(-2,2)/4/$p_3$}, 
  {(0,2)/5/$p_3$},
  {(2,2)/6/$p_3$}, 
  {(0,1.53)/7/$\vdots$}, 
  {(-2,1.53)/8/\vdots}, 
  {(2,1.53)/9/\vdots},
  {(-2.4,1.48)/10/},
  {(2.4,1.48)/11/},
  {(-3.4,0.22)/12/$p_n$},
  {(3.4,0.22)/13/$p_n$},
  {(0,0.22)/14/$p_n$},
  {(-4.4,-0.82)/15/$0$},
  {(-2.4,-0.82)/16/$0$},
  {(-1,-0.82)/17/$1$},
  {(+1,-0.82)/18/$0$},
  {(+2.4,-0.82)/19/$0$},
  {(+4.4,-0.82)/20/$0$},
  {(1,1.22)/21/},
  {(-1,1.22)/22/}}
\node[minimum size=20pt,inner sep=0pt] (\name) at \pos {\disp};

    \draw [->][thick,dotted](1) to (2);
    \draw [->][thin](1) to (3);
    
    \draw [->][thick,dotted](2) to (4);
    \draw [->][thin](2) to (5);
     
    \draw [->][thick,dotted](3) to (5);
    \draw [->][thin](3) to (6);
    \draw [-][thick,dotted](4) to (10);
    
    \draw [->][thick,dotted](10) to (12);
    \draw [->][thin](11) to (13);
     \draw [->][thick,dotted](12) to (15);
     \draw [->][thin](12) to (16);
      \draw [->][thick,dotted](21) to (14);
    \draw [->][thin](22) to (14);
     \draw [->][thick,dotted](14) to (17);
     \draw [->][thin](14) to (18);
     \draw [->][thin](13) to (19);
     \draw [->][thick,dotted](13) to (20);
     
\end{tikzpicture}
\]
    \caption{An `ordered' branching program (OBDD) for $\exact n k (p_1, \dots, p_n)$, where there are $k$ $0$s to the left of the $1$, and $n-k$ to the right.}
    \label{fig:exact-obdd}
\end{figure}

Recall that $\lor $ is used in the representation of NBPs only to model nondeterminism. 
Thus the Exact functions can be represented without using disjunction as follows:
\begin{definition}
[Representation of Exact]
\label{def:exact-ext-axs}
For each list $\vec p$ of propositional variables, and each integer $k$, we introduce an extension variable $\ex {\vec p} k$ and write $\exextaxs{}{}$ for the set of all extension axioms of the form (i.e.\ for all choices of $p$, $\vec p$ and $k$),
\begin{equation}
    \label{eq:exact-ext-axs}
    \begin{array}{c@{\quad \extiff \quad}ll}
     \ex \emptylist 0 & 1 & \\
     \ex \emptylist k & 0 & \text{if $k \neq 0$} \\     \ex {p\vec p} k & \dec {\ex {\vec p} k} p {\ex {\vec p} {k-1}}
\end{array}
\end{equation}
where we write $\emptylist$ for the empty list.
By convention we allow $k$ to be negative, for uniformity of the definition.

\end{definition}

\begin{remark}
    [Well-foundedness and complexity of extension axioms]
    \label{rem:infinite-set-of-ext-axs}
Note that, even though $\exextaxs{}{}$ is an infinite set, we may use it as the underlying set of extension axioms for proofs, with the understanding that only finitely many will actually ever be used in a particular proof.
We will typically not explicitly compute this set, but such a consideration will be subsumed by our analysis of proof complexity.

    While the extension variables above (and their axioms) do not strictly follow the subscripting conditions from \cref{extension-axioms-definition}, we may understand them to be `names' for the appropriate subscripting. 
It suffices to establish the well-foundedness of the extension axiom set in \eqref{eq:exact-ext-axs}, which is clear by induction on the length of the superscript.

Note that, strictly speaking, the indices of extension variables also contribute to proof size. 
However note that, if a proof involves only $n$ extension variables, then only indices of length $O(\log n)$ are required, possibly under suitable reindexing. 
Thus, when measuring proof size, we may safely count only the number of propositional and extension variables occurring.
\end{remark}

\begin{proposition}
\label{prop:exact-ext-vars-compute-exact}
Let $\vec p = (p_1, \dots, p_n)$. $\ex {\vec p} k$ computes $\exact n k (\vec p)$, with respect to $\exextaxs{}{}$.
\end{proposition}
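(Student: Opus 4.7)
The plan is to prove this by $\mathcal{A}$-induction, or more concretely by induction on the length of $\vec p$, appealing to \cref{dfn:sat-endt-wrt-extax} to unfold the definition of $\sat{\exextaxs{}{}}$. Strengthening the statement, I would show simultaneously for all $k$ and all lists $\vec p = (p_1,\dots,p_n)$ of propositional variables that $\alpha \sat{\exextaxs{}{}} \ex{\vec p}{k}$ iff $\exact{n}{k}(\vec p)(\alpha) = 1$.

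\textbf{Base case.} When $\vec p = \emptylist$, by definition $\ex{\emptylist}{0} \extiff 1$ and $\ex{\emptylist}{k} \extiff 0$ for $k\neq 0$. On the semantic side, $\exact{0}{k}(\emptylist)(\alpha) = 1$ iff the empty sum equals $k$, i.e.\ iff $k=0$. The two sides match under every $\alpha$.

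\textbf{Inductive step.} Suppose $\vec p = p\vec{p'}$ with $|\vec{p'}| = n-1$, and the claim holds for $\vec{p'}$ at every threshold. Unfolding the axiom $\ex{p\vec{p'}}{k} \extiff \dec{\ex{\vec{p'}}{k}}{p}{\ex{\vec{p'}}{k-1}}$ via \cref{dfn:sat-endt-wrt-extax}, we split on $\alpha(p)$:
\begin{itemize}
    \item If $\alpha(p) = 0$, then $\alpha \sat{\exextaxs{}{}} \ex{p\vec{p'}}{k}$ iff $\alpha \sat{\exextaxs{}{}} \ex{\vec{p'}}{k}$, which by IH is equivalent to $\sum_{i=2}^{n}\alpha(p_i) = k$; since $\alpha(p)=0$ this is the same as $\sum_{i=1}^n \alpha(p_i) = k$, i.e.\ $\exact{n}{k}(\vec p)(\alpha)=1$.
    \item If $\alpha(p) = 1$, then $\alpha \sat{\exextaxs{}{}} \ex{p\vec{p'}}{k}$ iff $\alpha \sat{\exextaxs{}{}} \ex{\vec{p'}}{k-1}$, which by IH is equivalent to $\sum_{i=2}^n \alpha(p_i) = k-1$, equivalent to $\sum_{i=1}^n \alpha(p_i) = k$.
\end{itemize}
The boundary case $k=0$ with $\alpha(p)=1$ involves $\ex{\vec{p'}}{-1}$; I would handle this either by convention (stipulating $\ex{\vec{p'}}{k'} \extiff 0$ for all $k'<0$, which extends $\exextaxs{}{}$ conservatively) or by observing that $\exact{n}{0}(\vec p)(\alpha) = 0$ whenever any $\alpha(p_i)=1$, so both sides agree vacuously.

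The only potential subtlety, and the one I would be most careful about, is ensuring that the induction is well-founded with respect to the $<_{\exextaxs{}{}}$ order of \cref{A-induction}, given the informal subscripting of the extension variables flagged after the definition. Since $\ex{p\vec{p'}}{k}$ unfolds only to extension variables $\ex{\vec{p'}}{k}$ and $\ex{\vec{p'}}{k-1}$ with strictly shorter superscripts, the natural well-founded subscripting (e.g.\ lexicographic on $(|\vec p|, k)$) makes $<_{\exextaxs{}{}}$ agree with induction on $|\vec p|$, and the argument goes through. Everything else is a straightforward unfolding of definitions.
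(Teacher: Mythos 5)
Your proof is correct and matches the paper's argument: both proceed by induction on the length of $\vec p$, unfolding the extension axiom via \cref{dfn:sat-endt-wrt-extax} and splitting on $\alpha(p)$. Your only extra worry --- the boundary case involving $\ex{\vec{p'}}{-1}$ --- is already handled by the axioms as written, since the second clause of \eqref{eq:exact-ext-axs} sets $\ex{\emptylist}{k}\extiff 0$ for all $k\neq 0$ (including negative $k$), so $\ex{\vec q}{k}$ for negative $k$ unfolds recursively to $0$ and no additional convention is needed.
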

\begin{proof}
We show that $\alpha \sat{\mathcal E} \ex {\vec p} k \iff \exact n k (\vec p) (\alpha) = 1$ by induction on the length $n$ of the list $\vec p$. 

 If $n=0$ then $\exact 0 k (\emptylist )$ attains the value $ 1$ iff $k=0$, so the result is immediate from the first two axioms of \eqref{eq:exact-ext-axs}.
For the inductive step we have:
    \[
    \begin{array}[b]{rc@{\ \iff \ \ }ll}
        \alpha \sat{\mathcal E} \ex{p\vec p}{k} & & \alpha \sat{\mathcal E} \dec {\ex {\vec p } k} p {\ex {\vec p} {k-1}} & \text{by \eqref{eq:exact-ext-axs} and \cref{dfn:sat-endt-wrt-extax}}\\
        \noalign{\smallskip}
         & & \begin{cases}
            \alpha \sat{\mathcal E} \ex {\vec p} k & \alpha(p) = 0 \\
            \alpha \sat{\mathcal E} \ex {\vec p} {k-1} & \alpha(p) = 1
         \end{cases} & \\
         \noalign{\smallskip}
        & & \begin{cases}
            \exact n k (\vec p) (\alpha) = 1 & \alpha(p)=0 \\
            \exact n {k-1} (\vec p) (\alpha) = 1 & \alpha(p) = 1
        \end{cases} & \text{by inductive hypothesis}\\
        \noalign{\smallskip}
        & & \exact {n+1} k (p,\vec p) (\alpha) = 1 &
    \end{array}
      \tag*{\qedhere}
    \]
\end{proof}

\subsection{Positive NBPs for Threshold via positive closure}

Notice that, since the Exact programs we considered were OBDDs, which are in particular read-once, the semantic characterisation of positive closure by monotone closure from \cref{pos-clo-read-once-mon-clo} applies.
Looking back to Figure~\ref{fig:exact-obdd}, the positive closures we are after (as NBPs) are given in Figure~\ref{fig:thresh-nbps-pos-clo-of-exact}.
Realising this directly as eNDT formulas and extension axioms we obtain the following:
\begin{figure}
    \[
\begin{tikzpicture}[scale=1.2, auto,swap]
\foreach \pos/\name/\disp in {
  {(0,4)/1/$p_1$}, 
  {(-1,3)/2/$p_2$},
  {(1,3)/3/$p_2$}, 
  {(-2,2)/4/$p_3$}, 
  {(0,2)/5/$p_3$},
  {(2,2)/6/$p_3$}, 
  {(0,1.53)/7/$\vdots$}, 
  {(-2,1.53)/8/\vdots}, 
  {(2,1.53)/9/\vdots},
  {(-2.4,1.48)/10/},
  {(2.4,1.48)/11/},
  {(-3.4,0.22)/12/$p_n$},
  {(3.4,0.22)/13/$p_n$},
  {(0,0.22)/14/$p_n$},
  {(-4.4,-0.82)/15/$0$},
  {(-2.4,-0.82)/16/$0$},
  {(-1,-0.82)/17/$1$},
  {(+1,-0.82)/18/$0$},
  {(+2.4,-0.82)/19/$0$},
  {(+4.4,-0.82)/20/$0$},
  {(-2,0.22)/21/$\hdots$},
  {(+2,0.22)/22/$\hdots$},
  {(1,1.22)/23/},
  {(-1,1.22)/24/}}
\node[minimum size=20pt,inner sep=0pt] (\name) at \pos {\disp};

      \draw [->][thick,dotted](23) to (14);
    \draw [->][thin](24) to (14);
    \draw [->][thin]
    (23) [out=180, in=100] to  (14);
  \draw [->][thin]
    (1) [out=180, in=100] to  (2);
    \draw [->][thick,dotted](1) to (2);
    \draw [->][thin](1) to (3);
    \draw [->][thin]
    (2) [out=180, in=100] to  (4);
    \draw [->][thick,dotted](2) to (4);
    \draw [->][thin](2) to (5);
     \draw [->][thin]
    (3) [out=180, in=100] to  (5);
    \draw [->][thick,dotted](3) to (5);
    \draw [->][thin](3) to (6);
    \draw [-][thick,dotted](4) to (10);
    \draw [->][thin]
    (10) [out=180, in=100] to  (12);
    \draw [->][thick,dotted](10) to (12);
    \draw [->][thin](11) to (13);
    \draw [->][thin]
    (12) [out=180, in=100] to  (15);
     \draw [->][thick,dotted](12) to (15);
     \draw [->][thin](12) to (16);
     \draw [->][thin]
    (14) [out=180, in=100] to  (17);
     \draw [->][thick,dotted](14) to (17);
     \draw [->][thin](14) to (18);
     \draw [->][thin](13) to (20);
     \draw [->][thin]
    (13) [out=180, in=100] to  (19);
     \draw [->][thick,dotted](13) to (19)
     ;
     
\end{tikzpicture}
\]
    \caption{The positive closure of the OBDD for Exact from Figure~\ref{fig:exact-obdd}, computing $\thresh n k (p_1, \dots, p_n)$.
    Again, there are $k$ $0$s to the left of the $1$, and $n-k$ to the right.}    \label{fig:thresh-nbps-pos-clo-of-exact}
\end{figure}

\begin{definition}
[Positive eNDTs for Threshold]
\label{Defin.thresh.}
For each list $\vec p$ of propositional variables, and each integer $k$, we introduce an extension variable $\thr {\vec p} k$ and write $\thrextaxs {}{}$ for the set of all extension axioms of the form (i.e.\ for all choices of $p$, $\vec p$ and $k$):
\begin{equation}
    \label{eq:thresh-ext-axs}
    \begin{array}{c@{\quad \extiff \quad}ll}
     \thr \emptylist 0 & 1 & \\
     \thr \emptylist k & 0 & \text{if $k \neq 0$} \\     \thr{p\vec p} k & \posdec {\thr {\vec p} k} p  {\thr {\vec p} {k-1} }
\end{array}
\end{equation}

\end{definition}

Note above that we allow $k$ to be negative, to be consistent with \cref{def:exact-ext-axs}.

Again, even though $\thrextaxs{}{}$ is an infinite set, we shall typically write $\poselndt$ proofs with respect to this set of extension axioms, with the understanding that only finitely many are ever used in any particular proof (see \cref{rem:infinite-set-of-ext-axs}).

Note that the extension variables $\thr {\vec p} k$ and extension axioms $\thrextaxs{}{}$ above are just the positive closures of $\ex {\vec p } k $ and $\exextaxs{}{}$ earlier, within the eNDT setting.

Thus, under \cref{pos-clo-read-once-mon-clo}, we have from \cref{prop:exact-ext-vars-compute-exact} that, for each non-negative $k$, $\thr {\vec p} k$ computes exactly the threshold function $\thresh n k (\vec p) $ with respect to $\thrextaxs{}{}$:

\begin{corollary}{\label{Thr}}
If $k\geq 0$, then
$\thr {\vec p} k$ computes $\thresh n k (\vec p)$, with respect to $\thrextaxs{}{}$.
\end{corollary}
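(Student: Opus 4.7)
The plan is to deduce this as a direct consequence of \cref{pos-clo-read-once-mon-clo} together with the immediately preceding proposition which established that $\ex{\vec p}{k}$ computes $\exact{n}{k}(\vec p)$ under $\exextaxs{}{}$. First I would observe that the only difference between the extension axiom sets $\exextaxs{}{}$ in~\eqref{eq:exact-ext-axs} and $\thrextaxs{}{}$ in~\eqref{eq:thresh-ext-axs} is that every decision $\dec{\ex{\vec p}{k}}{p}{\ex{\vec p}{k-1}}$ is replaced by its positive counterpart $\posdec{\thr{\vec p}{k}}{p}{\thr{\vec p}{k-1}}$. Unfolding $\posdec{A}{p}{B}$ as $\dec{A}{p}{A\cor B}$, this matches precisely the effect, at the level of underlying NBPs, of adding a parallel $1$-edge for every $0$-edge. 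Hence the NBP represented by $\thr{\vec p}{k}$ under $\thrextaxs{}{}$ is exactly the positive closure of the NBP represented by $\ex{\vec p}{k}$ under $\exextaxs{}{}$; diagrammatically, Figure~\ref{fig:thresh-nbps-pos-clo-of-exact} is the positive closure of Figure~\ref{fig:exact-obdd}.

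Next I would note that the Exact OBDDs are ordered (the variables appear in the fixed order $p_1, \dots, p_n$ along every path) and hence in particular read-once, so \cref{pos-clo-read-once-mon-clo} applies. Combined with the fact that $\ex{\vec p}{k}$ computes $\exact{n}{k}(\vec p)$ under $\exextaxs{}{}$, we obtain that $\thr{\vec p}{k}$ computes the monotone closure $\monclo{\exact{n}{k}(\vec p)}$ under $\thrextaxs{}{}$.

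It then remains to verify that $\monclo{\exact{n}{k}(\vec p)} = \thresh{n}{k}(\vec p)$ whenever $k \geq 0$. Unwinding \cref{dfn:mon-clo-bool-fn}, $\monclo{\exact{n}{k}}(\alpha) = 1$ iff some $\beta \leq \alpha$ has exactly $k$ ones among $p_1,\dots,p_n$; for $k \geq 0$, this is equivalent to $\alpha$ having at least $k$ ones (pick any $k$ of them to be $1$ in $\beta$, flip the rest to $0$), i.e.\ $\thresh{n}{k}(\alpha) = 1$. The hypothesis $k \geq 0$ is genuinely needed here: for $k<0$ the function $\exact{n}{k}$ is identically $0$ so its monotone closure is $0$, whereas $\thresh{n}{k}$ is identically $1$.

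The argument is essentially routine; the only potentially delicate point is making the correspondence between the positive-decision operator $\posdec{\cdot}{\cdot}{\cdot}$ in the eNDT syntax and the positive-closure construction on NBPs precise, but this is immediate once $\posdec{A}{p}{B}$ is unfolded as $\dec{A}{p}{A\cor B}$.
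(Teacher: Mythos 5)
Your proposal is correct and follows exactly the route the paper intends: it identifies $\thrextaxs{}{}$ as the positive closure (at the eNDT level) of $\exextaxs{}{}$, invokes \cref{pos-clo-read-once-mon-clo} (applicable since the Exact OBDDs are ordered, hence read-once) to conclude that $\thr{\vec p}{k}$ computes $\monclo{\exact n k(\vec p)}$, and checks that this equals $\thresh n k(\vec p)$ for $k \geq 0$. You simply spell out the steps that the paper leaves implicit (including the unfolding $\posdec A p B = \dec A p {(A \cor B)}$ and the failure for $k<0$, which the paper mentions only in passing afterwards).
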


Note that, for $k$ negative, we could have alternatively set $\thr \epsilon k $ to be $1$.
We could have also simply set $\thr {\vec p} 0 $ to be $1$ for arbitrary $\vec p$.
Instead, we have chosen to systematically take the positive closure of the aforementioned Exact programs, to make our exposition more uniform.

\subsection{Small proofs of basic counting properties}
Our main results rely on having small proofs of characteristic properties of counting formulae, which we duly give in this section.

First we need to establish a basic monotonicity property:

\begin{proposition}
[$\thr{\vec p}{k}$ is decreasing in $k$]
 \label{monotonicity-of-threshold-subscripts}
 There are polynomial-size $\poselndt$ proofs of the following sequents over extension axioms $\thrextaxs{}{}$:
 \begin{enumerate}
     \item\label{item:thr-0-true} $\seqar \thr {\vec p} 0$
     \item\label{item:thr-k+1-implies-thr-k} $\thr{\vec p}{k+1} \seqar \thr{\vec p}{k}$
     \item\label{item:thr-big-false} $\thr{\vec p}{k} \seqar\  $,  whenever $k>|\vec p|$
 \end{enumerate}
\end{proposition}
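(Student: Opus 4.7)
I would prove all three items by induction on the length of $\vec p$ (equivalently, by $\mathcal A$-induction with respect to $\thrextaxs{}{}$), handling \cref{item:thr-0-true} and \cref{item:thr-big-false} independently, and then using \cref{item:thr-0-true} to patch the boundary case of \cref{item:thr-k+1-implies-thr-k}. In each inductive step I add only a constant number of lines of polynomial size, so the size bound will follow from \cref{complexity-of-A-induction}.

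For \cref{item:thr-0-true}, the base case $\vec p = \emptylist$ is immediate from the extension axiom $1 \seqar \thr{\emptylist}{0}$ cut against the $1$-axiom $\seqar 1$. For the inductive step with $\vec p = q\vec q$, I would cut the IH $\seqar \thr{\vec q}{0}$ against the truth condition \cref{Truth}.\cref{item:truth-0case-implies-posdec}, namely $\thr{\vec q}{0} \seqar \posdec{\thr{\vec q}{0}}{q}{\thr{\vec q}{-1}}$, and then cut the resulting sequent with the extension axiom for $\thr{q\vec q}{0}$ to conclude $\seqar \thr{q\vec q}{0}$.

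For \cref{item:thr-big-false}, the base case $\vec p = \emptylist$ (with $k \geq 1$) follows from the extension axiom $\thr{\emptylist}{k} \seqar 0$ cut against the $0$-axiom $0 \seqar$. For the inductive step $\vec p = q\vec q$ with $k > |q\vec q|$, I would cut with the extension axiom to reduce the goal to $\posdec{\thr{\vec q}{k}}{q}{\thr{\vec q}{k-1}} \seqar$ and then apply $\lefrul{\pos p}$: the left premiss is the IH $\thr{\vec q}{k} \seqar$ (available since $k > |\vec q|$), and the right premiss is the IH $\thr{\vec q}{k-1} \seqar$ (available since $k - 1 > |\vec q|$, using $k > |\vec q| + 1$) weakened on the left by $q$.

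The main obstacle is \cref{item:thr-k+1-implies-thr-k}, specifically the boundary $k = 0$: the natural recursive call would need $\thr{\vec q}{0} \seqar \thr{\vec q}{-1}$, which is not valid, so \cref{refined-replacement} cannot be invoked uniformly. I would handle this by case-splitting on $k$. The base case $\vec p = \emptylist$ follows from $\thr{\emptylist}{k+1} \seqar 0$ and $\rigrul \wk$. For the inductive step with $\vec p = q\vec q$ and $k \geq 1$, I would cut with both extension axioms to reduce the goal to $\posdec{\thr{\vec q}{k+1}}{q}{\thr{\vec q}{k}} \seqar \posdec{\thr{\vec q}{k}}{q}{\thr{\vec q}{k-1}}$, which is an immediate consequence of \cref{refined-replacement} applied to the two IH instances $\thr{\vec q}{k+1} \seqar \thr{\vec q}{k}$ and $\thr{\vec q}{k} \seqar \thr{\vec q}{k-1}$. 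For $k = 0$, I would invoke \cref{item:thr-0-true} (already established for $q\vec q$) to obtain $\seqar \thr{q\vec q}{0}$ and then weaken on the left by $\thr{q\vec q}{1}$ via $\lefrul \wk$.
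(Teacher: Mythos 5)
Your proposal is correct and follows the paper's overall strategy: induction on the length of $\vec p$, with the base cases handled by the $\thrextaxs{}{}$ axioms for the empty list and the inductive step for \cref{item:thr-k+1-implies-thr-k} reduced via the extension axioms to an application of \cref{refined-replacement}. Two things differ. First, your proof of \cref{item:thr-big-false} is more direct: you apply $\lefrul{\pos p}$ to the unfolded decision with both premisses supplied by the inductive hypothesis (using $k>|\vec q|+1$ to get $k-1>|\vec q|$), whereas the paper detours through \cref{Truth}.\cref{item:truth-posdec-implies-1case} and an application of \cref{item:thr-k+1-implies-thr-k} followed by contraction. Both are fine; yours uses one fewer previously established fact. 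Second, and more substantively, you correctly identified a boundary issue in \cref{item:thr-k+1-implies-thr-k} at $k=0$: the paper's inductive step invokes \cref{refined-replacement} with ``the inductive hypotheses'' $\thr{\vec p}{k+1}\seqar\thr{\vec p}{k}$ and $\thr{\vec p}{k}\seqar\thr{\vec p}{k-1}$, but when $k=0$ the second would be $\thr{\vec p}{0}\seqar\thr{\vec p}{-1}$, which is not only outside the range of \cref{item:thr-k+1-implies-thr-k} but actually invalid (by \cref{item:thr-0-true} the left side is a tautology, whereas the extension axioms make $\thr{\vec p}{-1}$ identically false). Your fix---deriving the $k=0$ case directly from \cref{item:thr-0-true} and $\lefrul\wk$, and reserving \cref{refined-replacement} for $k\geq 1$---is exactly what is needed and makes the argument airtight. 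The size analysis is unchanged, as you note.
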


\begin{proof}
 We proceed by induction on the length of $\vec p$.
  In the base case, when $\vec p = \epsilon$, all three properties follow easily from $\thrextaxs{}{}$ initial sequents, weakening and cuts.
 
 For the inductive steps, we construct polynomial-size proofs as follows:
 \[
 \begin{array}[b]{rl}
  \text{\cref{item:thr-0-true}}
  \ :
& \ 
 \begin{array}{r@{\ \seqar \ }ll}
    & \thr{\vec p} 0 & \text{by the inductive hypothesis} \\
    & \posdec{\thr{\vec p} 0} p {\thr{\vec p}{-1}} & \text{by \cref{Truth}.\cref{item:truth-0case-implies-posdec} } \\
    & \thr{p\vec p}{0} & \text{by extension axioms $\thrextaxs{}{}$}
\end{array}
 \\
 \noalign{\medskip}
 \text{\cref{item:thr-k+1-implies-thr-k}}
 \ : &\ 
 \begin{array}{r@{\ \seqar\ }ll}
    \thr{p \vec p}{k+1} & \posdec{\thr{\vec p}{k+1}} p {\thr{\vec p} k} & \text{by extension axioms $\thrextaxs{}{}$} \\
    & \posdec{\thr{\vec p}{k}} p {\thr{\vec p} {k-1}} & \text{by inductive hypotheses and \cref{mon-replacing-result}} \\
    & \thr{p \vec p} k & \text{by extension axioms $\thrextaxs{}{}$ again} 
 \end{array}
\\
\noalign{\medskip}
 \text{\cref{item:thr-big-false}}
\ : &\ 
 \begin{array}[b]{r@{\ \seqar \ }ll}
    \thr{p\vec p} k & \posdec{\thr{\vec p}k } p {\thr{\vec p}{k-1}} & \text{by extension axioms $\thrextaxs{}{}$ } \\
    & \thr{\vec p}k, \thr{\vec p}{k-1} & \text{by \cref{Truth}.\cref{item:truth-posdec-implies-1case} } \\
    & \thr{\vec p}{k-1} & \text{by \cref{item:thr-k+1-implies-thr-k} and contraction} \\
    & & \text{by inductive hypothesis} \tag*{\qedhere}
 \end{array}
   \end{array}
 \]
\end{proof}

The arguments above should be read by obtaining each sequent by the justification given on the right, possibly with some cuts and structural rules.

Note that \cref{mon-replacing-result} allows us to apply previously proven implications or equivalences `deeply' within a formula.
We will use such reasoning throughout this work, but shall typically omit further mentioning such uses of \cref{mon-replacing-result} to lighten the exposition.

For the complexity bound, note that only polynomially many lines occur: we only require $k+1$ instantiations of all the sequents above, one for each choice of threshold $i\leq k$.
This sort of complexity analysis will usually suffice for later arguments, in which case we shall suppress them unless further justification is required.

One of the key points we shall exploit in what follows is the provable \emph{symmetry} of $\thr {\vec p} k$, in terms of the ordering of $\vec p$.
We shall establish this through a series of results, beginning by showing a form of `case analysis' on a propositional variable occurring in a list:
\begin{lemma}
[Case analysis]
\label{thresh-case-analysis}
There are polynomial-size $\poselndt$ proofs of,
\[
\thr {\vec p q \vec q} k
\ \dseqar \ 
\thr {q \vec p \vec q} k 
\]
over the extension axioms $\thrextaxs{}{}$.
\end{lemma}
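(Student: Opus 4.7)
The plan is to proceed by induction on $|\vec p|$, using the positive medial of \cref{pos-medial-example} as the key nontrivial step that swaps the order of two adjacent decision variables.

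The base case $\vec p = \emptylist$ is immediate: both sides coincide, so we conclude by \cref{prop:gen-identity}. For the inductive step, write $\vec p = r \vec p'$. First, unfolding the outer extension axiom rewrites $\thr{r \vec p' q \vec q}{k}$ as $\posdec{\thr{\vec p' q \vec q}{k}}{r}{\thr{\vec p' q \vec q}{k-1}}$. Next, two appeals to the inductive hypothesis (at values $k$ and $k-1$), applied deeply via \cref{mon-replacing-result}, yield $\posdec{\thr{q \vec p' \vec q}{k}}{r}{\thr{q \vec p' \vec q}{k-1}}$. Unfolding the two inner extension axioms produces a formula of shape $\posdec{(\posdec{A}{q}{B})}{r}{(\posdec{B}{q}{D})}$, where $A = \thr{\vec p' \vec q}{k}$, $B = \thr{\vec p' \vec q}{k-1}$ and $D = \thr{\vec p' \vec q}{k-2}$; note that both middle slots in the medial pattern coincide. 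Applying \eqref{eq:pos-medial} swaps $r$ and $q$ to produce $\posdec{(\posdec{A}{r}{B})}{q}{(\posdec{B}{r}{D})}$, and folding the axioms back identifies this with $\thr{q r \vec p' \vec q}{k} = \thr{q \vec p \vec q}{k}$. Since every step along the way is an equivalence (extension axioms, the inductive hypothesis, and the positive medial itself), both directions of $\dseqar$ are obtained simultaneously.

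The main obstacle is keeping the proof polynomial-size. A naive tree-like recursion would duplicate the inductive hypothesis each time a variable is peeled off from $\vec p$, leading to exponential blowup. Instead, I would build a single dag-like proof which, at induction level $i$ (i.e.\ for the suffix of $\vec p$ of length $|\vec p| - i$), contains one sequent per relevant threshold value $k' \in \{k, k-1, \ldots, k-i\}$. The total number of sequents is then $O(|\vec p|^2)$, and each is obtained from sequents at the next level by a constant number of fixed-size applications of extension axioms, \cref{mon-replacing-result}, and the positive medial. Invoking \cref{complexity-of-A-induction} for the size bookkeeping, the whole proof is of polynomial size in $|\vec p| + k$, as required.
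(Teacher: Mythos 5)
Your proof is correct and follows essentially the same route as the paper: induction on $|\vec p|$, base case by general identity, and the inductive step by unfolding one layer of $\thrextaxs{}{}$, applying the inductive hypothesis deeply via \cref{mon-replacing-result} at thresholds $k$ and $k-1$, unfolding two more layers, applying the positive medial \eqref{eq:pos-medial}, and folding back. Your explicit dag-like bookkeeping for polynomial size is slightly more detailed than what the paper writes out, but it matches the convention the paper establishes after \cref{prop:gen-identity} (constructing a single shared proof ``by $\mathcal A$-induction'' rather than a tree), so there is no substantive difference.
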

\begin{proof}
We proceed by induction on the length of $\vec p$.
The base case, when $\vec p$ is empty, follows immediately by general identity, \cref{prop:gen-identity}.

For the inductive step we construct polynomial-size proofs  as follows:
\[
\begin{array}[b]{rll}
    & \thr{p \vec p q \vec q} k & \\
    \dseqar & \posdec{\thr{\vec p q \vec q} k } p { \thr{\vec p q \vec q}{k-1} } & \text{by $\thrextaxs{}{}$} \\
    \dseqar & \posdec{\thr{q \vec p  \vec q} k } p { \thr{q \vec p  \vec q}{k-1} } & \text{by $\IH$ and \cref{mon-replacing-result}} \\
    \dseqar & \posdec{ \posdec{\thr{\vec p \vec q} k} q {\thr{\vec p \vec q}{k-1} } }   p { \posdec{\thr{\vec p \vec q}{k-1}} q {\thr{\vec p \vec q}{k-2}} } & \text{by $\thrextaxs{}{}$} \\
    \dseqar & \posdec { \posdec{\thr{\vec p\vec q} k } p { \thr{\vec p \vec q}{k-1} } } q { \posdec{\thr{\vec p \vec q}{k-1}} p { \thr{\vec p \vec q}{k-2} } } & \text{by \cref{pos-medial}} \\
    \dseqar & \posdec{ \thr{p \vec p \vec q} k }  q { \thr{p \vec p \vec q }{k-1} } & \text{by $\thrextaxs{}{}$ }\\
    \dseqar & \thr{q p \vec p\vec q}k & \text{by $\thrextaxs{}{}$} \tag*{\raisebox{\baselineskip}[0pt][0pt]{\qedhere}}
\end{array}
\]
\end{proof}

 Similarly to the proof of \cref{monotonicity-of-threshold-subscripts}, the above argument should be read as providing polynomial-size proofs `in both directions', by the justifications given on the right.
 Note that we restrict cedents to singletons when using $\dseqar$ in this way, to avoid ambiguity of the comma delimiter.
 Polynomial proof size is, again, immediate by inspection on the number of lines.
 
\begin{theorem}
[Symmetry]
\label{thr-symmetric-permutations}
Let $\pi$ be a permutation of $\vec p$. Then there are polynomial-size $\poselndt$ proofs over the extension axioms $\thrextaxs{}{}$ of:
\[
\thr {\vec p} k 
\ \dseqar \
\thr{\pi (\vec p)} k
\]
\end{theorem}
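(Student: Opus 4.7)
The plan is to proceed by induction on the length of $\vec p$, using the case analysis lemma \cref{thresh-case-analysis} as the workhorse to move one element at a time to the head, and then recursing on the tail. The argument mirrors a bubble-sort-style realization of permutations, which yields a polynomial number of rewrites.

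\textbf{Base case.} When $|\vec p|\leq 1$, the only permutation is the identity, and the claim follows from general identity, \cref{prop:gen-identity}.

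\textbf{Inductive step.} Let $\vec p = p_1 \cdots p_n$ and suppose $\pi(\vec p) = q\vec r$, where $q = p_i$ for some $i\leq n$, so that $\vec r$ is a permutation of $\vec p' := p_1 \cdots p_{i-1} p_{i+1} \cdots p_n$. The plan is to construct the chain of provable equivalences
\[
\thr{\vec p}{k}
\ \dseqar \
\thr{p_i\, \vec p'}{k}
\ \dseqar \
\posdec{\thr{\vec p'}{k}}{p_i}{\thr{\vec p'}{k-1}}
\ \dseqar \
\posdec{\thr{\vec r}{k}}{p_i}{\thr{\vec r}{k-1}}
\ \dseqar \
\thr{q\,\vec r}{k}.
\]
The first equivalence is a single application of the case analysis lemma, \cref{thresh-case-analysis}, moving $p_i$ to the front. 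The second and fourth are just unfoldings by the extension axioms $\thrextaxs{}{}$. For the middle equivalence, we invoke the inductive hypothesis on the strictly shorter list $\vec p'$ at the two thresholds $k$ and $k-1$, obtaining polynomial-size $\poselndt$ proofs of $\thr{\vec p'}{j} \dseqar \thr{\vec r}{j}$ for $j\in\{k,k-1\}$; we then lift these equivalences into the positive decision using \cref{mon-replacing-result} (in both directions).

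\textbf{Complexity analysis.} Each inductive step adds only a constant number of macro-steps whose individual sizes are polynomial in $|\vec p|$ and $k$, plus two invocations of the inductive hypothesis on a list of length $n-1$. Unfolding the recursion, we therefore accumulate $O(n)$ layers of polynomial-size derivations, giving an overall polynomial bound, as required.

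\textbf{Expected obstacle.} The only subtlety is making sure that the recursion on the tail does not blow up: naively, each inductive step might require the IH at several values of $k$, which if iterated could cause a superpolynomial growth. Here this is avoided because only two values ($k$ and $k-1$) are needed per step, so the recursion remains polynomially bounded. The management of extension variables and the implicit use of \cref{mon-replacing-result} at each level are routine given the earlier lemmas.
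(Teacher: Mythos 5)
Your proof is correct, but it takes a genuinely different route from the paper's. The paper avoids a fresh induction at this level: writing $\pi(\vec p) = q_1\cdots q_n$, it simply chains $n$ applications of the case analysis lemma (\cref{thresh-case-analysis}), moving $q_n$, then $q_{n-1}$, \dots, then $q_1$ to the front of the superscript one at a time, so that after $n$ steps the superscript reads $q_1\cdots q_n = \pi(\vec p)$. Each step is an independent application of the lemma followed by a cut, so polynomial size is immediate without any sharing argument at the theorem level. Your approach instead re-runs an induction on $|\vec p|$, and after one case-analysis step unfolds the threshold by an extension axiom and pushes the permutation of the tail inside the positive decision via \cref{mon-replacing-result}. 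This is sound, but it essentially replays, at the theorem level, the same two-threshold recursive structure already present in the proof of \cref{thresh-case-analysis}, so it is more work than needed.

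The one place where your write-up is not quite tight is the complexity bookkeeping. You correctly flag the obstacle that each inductive step calls the IH at two thresholds $k$ and $k-1$, but the resolution you offer (``only two values are needed per step'') does not by itself give a polynomial bound: a naive tree-like unfolding of a binary recursion of depth $n$ has $2^n$ leaves. What actually saves the construction is that the tail list at depth $d$ is uniquely determined by $\pi$ (it is $\vec p$ with the first $d$ entries of $\pi(\vec p)$ removed), so the distinct subgoals at depth $d$ are exactly the pairs of that single list with a threshold $j$ in the range $k-d\le j\le k$. This gives $O(n^2)$ distinct sequents overall, and one should construct a single dag-like proof containing all of them, exactly as the paper does in \cref{prop:gen-identity} and in the complexity remark following \cref{monotonicity-of-threshold-subscripts}. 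Once that sharing is made explicit, your argument is a valid alternative; the paper's direct chain is simply the leaner choice, since it sidesteps the need for any such bookkeeping here.
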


\begin{proof}
Write $\vec p = p_1 \cdots p_n$ and write $\pi(\vec p) = q_1 \cdots q_n$.
We construct polynomial-size proofs by repeatedly applying \cref{thresh-case-analysis} as follows:
\[
\begin{array}{rcll}
    \thr{\vec p} k & \dseqar & \thr{q_n \vec p^n}k & \text{by \cref{thresh-case-analysis}}\\
    & \dseqar &  \thr{q_{n-1}q_n \vec p^{n-1,n}} k & \text{by \cref{thresh-case-analysis}} \\
    & \vdots & & \\
    & \dseqar & \thr{q_1\cdots q_n}k & \text{by \cref{thresh-case-analysis}} \\
    & \dseqar & \thr{\pi(\vec p)}k & \text{by definition of $\vec q$ }
\end{array}
\]

where $\vec p^{i,\dots, n}$ is just $\vec p$ with the elements $q_i, \dots, q_{n}$ removed, otherwise preserving relative order of the propositional variables.
\end{proof}

\section{Case study: the pigeonhole principle}
\label{sec:php-proofs}

The \emph{pigeonhole principle} is usually encoded in propositional logic by a family of $\neg$-free sequents of the following form:
$$
\bigwedge\limits_{i=1}^{n+1} \bigvee\limits_{j=1}^n \pij i j 
\seqar
\bigvee\limits_{j=1}^n \bigvee\limits_{i = 1}^{n} \bigvee\limits_{ i' =  i+1}^{n+1} (\pij i j   \land \pij {i'} j) 
$$
Here it is useful to think of the propositional variables $\pij i j$ as expressing ``pigeon $i$ sits in hole $j$''. In this way the left-hand side (LHS) above expresses that each pigeon $1, \dots, n+1$ sits in some hole $1, \dots , n$, and the right-hand side (RHS) expresses that there is some hole occupied by two (distinct) pigeons.
Note that this encoding allows the mapping from pigeons to holes to be `multi-functional', i.e.\ the LHS allows for a pigeon to sit in multiple holes.

For many propositional proof systems, (quasi)polynomial-size proofs of the pigeonhole principle are obtained by formalising basic properties of counting formulae, as we did in the previous section for $\poselndt$.
This methodology orginated in \cite{DBLP:journals/jsyml/Bussandpigeons87} who showed that Frege admits polynomial-size proofs of the propositional pigeonhole principle.
More pertinent to the present work, \cite{DBLP:journals/mlq/AtseriasGG01} formalised such arguments for quasipolynomial-size Boolean \emph{monotone} counting formulae in the monotone sequent calculus $\mlk$, later improved to a polynomial via \cite{DBLP:journals/jcss/AtseriasGP02,DBLP:journals/apal/Jerabek11a,BKKK17}.

\medskip

In the setting of $\poselndt$ we may not natively express conjunctions, so we adopt a slightly different encoding.
Being a sequent, the outermost conjunctions on the LHS above can simply be replaced by commas;
the subformulas $\pij i j \cand \pij {i'} j$ may be encoded as $\posdec 0 {\pij i j} {\pij {i'} j}$.

Thus we shall work with the following encoding of the pigeonhole principle throughout this section:
\begin{definition}
[Pigeonhole principle]
$\php n$ is the following positive sequent:
\[
\left\{
\bigvee\limits_{j=1}^n \pij i j
\right\}_{i=1}^{n+1}
\seqar
\bigvee\limits_{j=1}^n \bigvee\limits_{i = 1}^{n} \bigvee\limits_{ i' =  i+1}^{n+1} \posdec 0 {\pij i j} {\pij {i'} j}
\]
We write $\lphp n $ and $\rphp n$ for the LHS and RHS, respectively, of $\php n$.
\end{definition}

The main result of this section uses the counting formulas from the previous section to recover small proofs of the pigeonhole principle:
\begin{theorem}
\label{small-proofs-of-php}
There are polynomial-size $\poselndt$ proofs of $\php n$.
\end{theorem}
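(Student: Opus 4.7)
The plan is to convert the sequent $\php n$ into an assertion about threshold formulas, reduce it to a purely \emph{counting} form of the pigeonhole principle, and then establish that counting form by induction on $n$. For notation, write $\vec p_i := (\pij i 1, \dots, \pij i n)$ for the variables associated with pigeon $i$ (length $n$), and $\vec p^{(j)} := (\pij 1 j, \dots, \pij {n+1} j)$ for the variables associated with hole $j$ (length $n+1$). The two concatenations $\vec p_1 \cdots \vec p_{n+1}$ and $\vec p^{(1)} \cdots \vec p^{(n)}$ list the same $n(n+1)$ variables, up to permutation, so \cref{thr-symmetric-permutations} will allow us to freely switch between the pigeon-indexed and hole-indexed view.

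The first major step is to derive a global lower bound on the count from $\lphp n$. I would prove the ``pigeon summary'' $\bigvee_{j=1}^n \pij i j \dseqar \thr{\vec p_i}{1}$ directly from $\thrextaxs{}{}$ (unfolding positive decisions and using the truth conditions of \cref{Truth}), and then a \emph{sum lemma}
\[
\thr{\vec p}{i}, \thr{\vec q}{j} \seqar \thr{\vec p \vec q}{i+j}
\]
by induction on $|\vec p|$ using \cref{mon-replacing-result} to work inside the positive decision $\posdec{\thr{\vec p'}{k}}{p}{\thr{\vec p'}{k-1}}$. Iterating these over the $n+1$ hypotheses of $\lphp n$ yields
\[
\lphp n \seqar \thr{\vec p_1 \cdots \vec p_{n+1}}{n+1}
\]
and then, by symmetry, $\lphp n \seqar \thr{\vec p^{(1)} \cdots \vec p^{(n)}}{n+1}$.

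The combinatorial heart is the \emph{counting PHP}: for all $n\geq 1$,
\[
\thr{\vec p^{(1)}\cdots \vec p^{(n)}}{n+1} \seqar \bigvee_{j=1}^n \thr{\vec p^{(j)}}{2}.
\]
I would prove this by induction on $n$. The base $n=1$ is immediate from general identity (\cref{prop:gen-identity}). For the inductive step, I would use a ``reverse sum lemma''
\[
\thr{\vec p \vec q}{k} \seqar \bigvee_{a+b=k} \bigl(\thr{\vec p}{a}\cand \thr{\vec q}{b}\bigr)
\]
(provable by induction on $|\vec q|$ via the axioms $\thrextaxs{}{}$, with $\cand$ coded positively as $\posdec{0}{\cdot}{\cdot}$) to split $\thr{\vec p^{(1)} \cdots \vec p^{(n)}}{n+1}$ according to the count in $\vec p^{(n)}$: if that count is $\geq 2$ we are done by the $j=n$ disjunct, and otherwise we retain $\thr{\vec p^{(1)}\cdots \vec p^{(n-1)}}{n}$ (using the monotonicity \cref{monotonicity-of-threshold-subscripts}.\cref{item:thr-k+1-implies-thr-k} to absorb the case $a = n+1$) and apply the inductive hypothesis.

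Finally, a short ``collision lemma'' $\thr{\vec p^{(j)}}{2} \seqar \bigvee_{i<i'} \posdec 0 {\pij i j}{\pij {i'} j}$, proved by induction on $|\vec p^{(j)}|$ via case analysis on the leading variable (again through \cref{Truth}), converts the existence of a doubly-occupied hole into a literal pigeon collision, matching the shape of $\rphp n$. Composing all the steps via cuts gives polynomial-size proofs of $\lphp n \seqar \rphp n$; polynomiality holds since each lemma contributes $\mathrm{poly}(n)$ lines of polynomial size, and we only instantiate them on $O(n^2)$ many thresholds. The main obstacle I anticipate is the positive formalisation of the reverse sum lemma in Step 4: the ``$\exists a,b$ with $a+b=k$'' decomposition introduces many disjuncts, and combining it with monotonicity without incurring a quadratic blow-up at each inductive step requires care.
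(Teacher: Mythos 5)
Your overall strategy matches the paper's: translate $\lphp n$ into a count $\thr{\PPi{}}{n+1}$ via a merging lemma and a per-pigeon summary, use \cref{thr-symmetric-permutations} to pass to the hole-indexed listing, then split that count across the $n$ holes and extract a collision in one of them. The per-pigeon summary, the merging lemma, and the collision lemma are all exactly the paper's \cref{pij-implies-thr-PPi}, \cref{further-counting-results-splitting-and-merging}.\cref{merging} and \cref{thrx2-implies-disj-x-and-y}.

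There is, however, a genuine gap in your ``reverse sum lemma''. You state it as $\thr{\vec p\vec q}{k}\seqar\bigvee_{a+b=k}\bigl(\thr{\vec p}{a}\cand\thr{\vec q}{b}\bigr)$ and propose to code $\cand$ as $\posdec{0}{\cdot}{\cdot}$. But the eNDT syntax only permits \emph{propositional variables} as decision variables; $\thr{\vec p}{a}$ is an extension variable, so $\posdec{0}{\thr{\vec p}{a}}{\thr{\vec q}{b}}$ is not a legal formula (cf.\ the remark on distinguishing extension variables). Thus this lemma cannot be written down in the system as proposed, quite apart from the blow-up you already worried about. The paper sidesteps the need for conjunction entirely by formulating the split at the \emph{cedent} level: \cref{further-counting-results-splitting-and-merging}.\cref{splitting} gives $\thr{\vec p\vec q}{k+l}\seqar\thr{\vec p}{k+1},\thr{\vec q}{l}$, and iterating this peels off $\thr{\PPj j}{2}$ for each hole in turn, yielding $\thr{\PPj{}}{n+1}\seqar\thr{\PPj1}{2},\dots,\thr{\PPj n}{2}$ directly. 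That is exactly the ``either this hole has $\geq 2$ or the remainder has $\geq n$'' case analysis you describe informally, but without the illegal conjunction and with one extension variable per step rather than a disjunction over all $a+b=k$. Swapping your reverse sum lemma for this splitting lemma repairs the gap and recovers the paper's proof.
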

This result serves somewhat as a warmup, and sanity check, before our more general simulation later in \cref{sec:poselndt-psim-elndt}.
The rest of this section is devoted to its proof.

\subsection{Summary of proof structure}
\label{sect:summary-of-php-proof}
At a high level we shall employ a known proof structure for proving $\php n$, going back to \cite{DBLP:journals/jsyml/Bussandpigeons87} and later employed by \cite{DBLP:journals/mlq/AtseriasGG01}, specialising to our setting for certain intermediate results.
Before surveying this, let us introduce some notation.

\begin{notation}
We fix $n \in \Nat$ throughout this section and write:
\begin{itemize}
    \item $\PPi{i}$ for the list $\pij i 1 , \dots, \pij i n$, and just $\PPi{}$ for the list $\PPi 1 , \dots, \PPi {n+1} $.
    \item $\PPj j$ for the list $\pij 1 j, \dots, \pij{n+1} j $ and just $\PPj{}$ for the list $\PPj 1 , \dots, \PPj n$.
\end{itemize}
\end{notation}
The notation $\PPj{}$ is suggestive since, construing $\PPi{}$ as an $(n+1) \times n$ matrix of propositional variables, $\PPj{}$ is just the transpose $n\times (n+1)$ matrix.

Our approach towards proving $\php n$ in $\poselndt$ (with small proofs) will be broken up into the three smaller steps, proving the following sequents respectively:
\begin{enumerate}
    \item\label{item:lphp-implies-thr-pij} $\lphp n \seqar \thr{\PPi{}}{n+1}$
    \item\label{item:thr-pij-implies-thr-pji} $\thr{\PPi{}}{n+1} \seqar  \thr{\PPj{}}{n+1}$
    \item\label{item:thr-pji-implies-rphp} $\thr{\PPj{}}{n+1} \seqar \rphp n$
\end{enumerate}

Notice that, since $\PPj{}$ is just a permutation of $\PPi{}$, we already have small proofs of \cref{item:thr-pij-implies-thr-pji} from \cref{thr-symmetric-permutations}.
In the next two subsections we shall focus on the other two implications, for which the following lemma will be quite useful:

\begin{lemma}
[Merging and splitting threshold arguments]
\label{further-counting-results-splitting-and-merging}
There are polynomial-size $\poselndt$ proofs, over extension axioms $\thrextaxs{}{}$ of the following sequents:
\begin{enumerate}

    \item \label{merging} $\thr{\vec p } k , \thr {\vec q } l \seqar \thr {\vec p \vec q}{k+l}$
     \item \label{splitting} $ \thr{\vec p \vec q}{k+l} \seqar \thr{\vec p}{k+1}, \thr{\vec q}{l} $
\end{enumerate}
\end{lemma}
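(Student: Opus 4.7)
The plan is to prove both sequents by induction on $|\vec p|$, in each case exposing the head of $\vec p$ via the extension axioms $\thrextaxs{}{}$ and then applying $\lefrul{\pos p}$ to the topmost positive decision. By \cref{thr-symmetric-permutations} one could equally induct on $|\vec q|$, but induction on $|\vec p|$ is more directly suited to the shape of both goals. The base case $\vec p = \emptylist$ reduces for \cref{merging} to identity (when $k = 0$, using \cref{prop:gen-identity}) or to a trivial sequent with $0$ on the left (when $k > 0$); for \cref{splitting} it reduces to $\thr{\vec q}{k+l} \seqar \thr{\vec q}{l}$ via chained applications of \cref{monotonicity-of-threshold-subscripts}.\cref{item:thr-k+1-implies-thr-k}, followed by weakening.

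For the inductive step $\vec p = p \vec p'$, both claims follow the same template. For \cref{merging} I unfold $\thr{p\vec p'}{k}$ via $\thrextaxs{}{}$ into $\posdec{\thr{\vec p'}{k}}{p}{\thr{\vec p'}{k-1}}$ and apply $\lefrul{\pos p}$: the $0$-branch is closed by $\IH(k,l)$ together with \cref{Truth}.\cref{item:truth-0case-implies-posdec}, which wraps the resulting $\thr{\vec p' \vec q}{k+l}$ into $\thr{p\vec p' \vec q}{k+l}$; the $1$-branch is closed by $\IH(k-1,l)$ together with \cref{Truth}.\cref{item:truth-1case-implies-posdec}, which turns the exposed $p$ on the left together with $\thr{\vec p' \vec q}{k+l-1}$ into $\thr{p\vec p'\vec q}{k+l}$. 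For \cref{splitting} the situation is dual on the right: I unfold $\thr{p\vec p' \vec q}{k+l}$ and apply $\lefrul{\pos p}$; the $0$-branch uses $\IH(k,l)$ and \cref{Truth}.\cref{item:truth-0case-implies-posdec} to wrap $\thr{\vec p'}{k+1}$ into $\thr{p\vec p'}{k+1}$, while the $1$-branch uses $\IH(k-1,l)$ together with \cref{Truth}.\cref{item:truth-1case-implies-posdec} to combine the available $p$ with $\thr{\vec p'}{k}$ into $\thr{p\vec p'}{k+1}$.

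The main thing to be careful about is the boundary behaviour when $k = 0$, where the $1$-branch invokes the inductive hypothesis at index $-1$. Syntactically $\thr{\vec p'}{-1}$ does not immediately collapse to $1$; it unfolds to a nested positive decision whose decision leaves are ultimately $0$. The cleanest fix is to establish, as a small auxiliary induction on $|\vec p|$, that $\thr{\vec p}{k} \seqar$ admits polynomial-size $\poselndt$ proofs for every $k < 0$; the degenerate instances of both main inductions are then dispatched by weakening from this fact. Polynomial size is then clear by inspection: each inductive step adds only constantly many lines and makes at most two recursive calls with strictly smaller list length, so the total proof size is polynomial in $|\vec p \vec q|$ and $k + l$.
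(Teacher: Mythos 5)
Your inductive strategy matches the paper's: induction on $|\vec p|$, unfolding the head decision via $\thrextaxs{}{}$ and discharging the two branches of $\lefrul{\pos p}$ with the inductive hypotheses at $(\vec p',k)$ and $(\vec p',k-1)$ together with the truth conditions \cref{Truth}.\cref{item:truth-0case-implies-posdec} and \cref{Truth}.\cref{item:truth-1case-implies-posdec}. The paper packages exactly this step as an application of \cref{refined-replacement} (itself proved from $\lefrul{\pos p}$ and those same truth conditions), so the two arguments are the same modulo that factoring. Your base cases for \cref{merging} and \cref{splitting} likewise coincide with the paper's.

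Your caution about the negative-index boundary is well placed (it is glossed over in the paper), but the fix you propose is right for \cref{merging} only. For \cref{merging}, $\thr{\vec p}k\seqar$ for $k<0$ indeed closes the degenerate sequent $\thr{\vec p}{k},\thr{\vec q}{l}\seqar\thr{\vec p\vec q}{k+l}$ by weakening. For \cref{splitting}, however, the degenerate instance produced when $k=0$ in the $1$-branch is $\thr{\vec p'\vec q}{l-1}\seqar\thr{\vec p'}{0},\thr{\vec q}{l}$, and what is needed to close it is $\seqar\thr{\vec p'}{0}$ (\cref{monotonicity-of-threshold-subscripts}.\cref{item:thr-0-true}) followed by weakening --- your auxiliary fact $\thr{\vec p}k\seqar$ gives no traction here, since $k+l$ need not be negative. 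Moreover you must \emph{stop} the recursion for \cref{splitting} at $k=-1$: if you apply the inductive step template once more you would invoke the claim at $k=-2$, where it is not even valid (take $\vec p=\emptylist$, $\vec q=q_1q_2$, $l=2$: the sequent $\thr{q_1q_2}{0}\seqar\thr{\emptylist}{-1},\thr{q_1q_2}{2}$ is falsified by $q_1=0$, since $\thr{q_1q_2}{0}$ is constantly true and $\thr{\emptylist}{-1}$ is constantly false). So the correct treatment is: handle $k=-1$ for \cref{splitting} directly via $\seqar\thr{\vec p}{0}$, handle $k<0$ for \cref{merging} directly via your auxiliary lemma, and only run the inductive step for $k\geq 0$, noting that it calls the IH at indices $k$ and $k-1\geq -1$, which are always covered.
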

\begin{proof}
We proceed by induction on the length of $\vec p$.
In the base case, when $\vec p = \emptylist$, we have two cases for \cref{merging}:
\begin{itemize}
    \item if $k=0$ then $\thr \emptylist 0 , \thr{\vec q} l \seqar \thr {\vec q} l$ follows by $\id$ and $\lefrul \wk$.
    \item if $k\neq 0$ then we have an axiom $\thr \emptylist k \seqar 0$ from $\thrextaxs{}{}$, whence $\thr \emptylist k, \thr{\vec q}l \seqar \thr{\vec q}{k+l}$ follows by $0$, $\cut$ and weakenings.
\end{itemize}
For \cref{splitting}, we have polynomial-size proofs of $\thr{\vec q}{k+l} \seqar \thr{\vec q} l $ already from \cref{monotonicity-of-threshold-subscripts}, whence we obtain $\thr{\vec q}{k+l} \seqar \thr{\emptylist}{k+1}, \thr{\vec q}l$ by $\rigrul{\wk}$.

For the inductive step we shall appeal to \cref{refined-replacement}. 
First, for \cref{merging}, by the inductive hypothesis we already have a polynomial-size proof of:
\[
\begin{array}{r@{ \ \seqar\ }l}
     \thr{\vec p}k, \thr{\vec q}l & \thr{\vec p \vec q }{k+l} \\
     \thr{\vec p}{k-1}, \thr{\vec q} l & \thr{\vec p \vec q}{k+l-1}
\end{array}
\]
Thus, by \cref{refined-replacement} we can derive,
\[
\posdec{\thr{\vec p} k } p {\thr{\vec p}{k-1}} , \thr {\vec q} l \seqar \posdec{\thr{\vec p \vec q}{k+l}} p { \thr{\vec p \vec q }{k+l - 1}  }
\]
whence the required sequent
$ \thr{p \vec p } k , \thr{\vec q} l \seqar \thr{p \vec p \vec q }{k+l}$
follows by the extension axioms $\thrextaxs{}{}$.

For \cref{splitting}, by the inductive hypothesis we have a polynomial-size proof of:
\[
\begin{array}{r@{ \ \seqar\ }l}
     \thr{\vec p \vec q }{k+l} & \thr{\vec p}{k+1}, \thr{\vec q}l \\
     \thr{\vec p \vec q}{k+l-1} &
     \thr{\vec p}{k}, \thr{\vec q} l  
\end{array}
\]
Thus, by \cref{refined-replacement}, we can derive,
\[
\posdec{ \thr{\vec p \vec q}{k+l} } p { \thr{\vec p \vec q}{ k+l-1 } } \seqar \posdec{ \thr{\vec p}{k+1} } p { \thr{\vec p} k } , \thr{\vec q} l
\]
whence the required sequent 
$ \thr{p\vec p \vec q}{k+l} \seqar \thr{p \vec p}{k+1}, \thr{\vec q } l $ follows by the extension axioms $\thrextaxs{}{}$.
\end{proof}

\subsection{From \texorpdfstring{$\lphp n $}{LPHPn} to \texorpdfstring{$(n+1)$}{(n+1)}-threshold}
In this subsection we will give small proofs of the sequent \cref{item:lphp-implies-thr-pij} from \cref{sect:summary-of-php-proof}.

\begin{lemma}
\label{pij-implies-thr-PPi}
Let $\vec q = q_0, \dots, q_{k-1}$.
For all $j<k$,
there are polynomial-size $\poselndt$ proofs over extension axioms $\thrextaxs{}{}$ of:
\[
q_j \seqar \thr{\vec q}{1}
\]
\end{lemma}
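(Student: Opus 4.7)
My plan is to reduce to the case $j=0$ using the symmetry result, and then use the extension axioms together with the truth conditions and basic monotonicity already established. The semantic content is completely transparent: if $q_j$ holds then at least one element of $\vec q$ holds. The task is to engineer this with only polynomially many lines.

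First, by \cref{thr-symmetric-permutations}, there are polynomial-size proofs of $\thr{q_j q_0 \cdots q_{j-1} q_{j+1} \cdots q_{k-1}}{1} \seqar \thr{\vec q}{1}$, so it suffices (modulo a $\cut$) to construct polynomial-size proofs of $q_j \seqar \thr{q_j \vec q'}{1}$, where $\vec q' = q_0, \dots, q_{j-1}, q_{j+1}, \dots, q_{k-1}$.

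For this remaining sequent, I would unfold using the extension axiom $\thr{q_j \vec q'}{1} \extiff \posdec{\thr{\vec q'}{1}}{q_j}{\thr{\vec q'}{0}}$ from $\thrextaxs{}{}$. So it suffices, by a further $\cut$ with this extension axiom, to derive
\[
q_j \seqar \posdec{\thr{\vec q'}{1}}{q_j}{\thr{\vec q'}{0}} \, .
\]
From \cref{Truth}.\cref{item:truth-1case-implies-posdec} we have a polynomial-size proof of $q_j, \thr{\vec q'}{0} \seqar \posdec{\thr{\vec q'}{1}}{q_j}{\thr{\vec q'}{0}}$, and from \cref{monotonicity-of-threshold-subscripts}.\cref{item:thr-0-true} we have a polynomial-size proof of $\seqar \thr{\vec q'}{0}$. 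A single $\cut$ on the cut-formula $\thr{\vec q'}{0}$ (together with the appropriate weakening) gives the required sequent.

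The total proof size is polynomial: the symmetry step from \cref{thr-symmetric-permutations} contributes polynomially many lines, the truth-condition and monotonicity subproofs are each polynomial by their respective propositions, and only a constant number of additional $\cut$ and structural steps are used to glue things together. There is no real obstacle here; the only subtlety is the book-keeping around $\cut$s on the extension axioms and the use of symmetry to avoid having to induct directly on the list structure, which would otherwise force a case split depending on whether $j=0$ or $j>0$ at each step.
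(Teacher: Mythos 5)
Your proof is correct, but it takes a genuinely different route from the paper's. You first invoke the symmetry theorem (\cref{thr-symmetric-permutations}) to move $q_j$ to the front of the list, after which a single unfolding of the extension axiom $\thr{q_j\vec q'}{1}\extiff\posdec{\thr{\vec q'}{1}}{q_j}{\thr{\vec q'}{0}}$, a truth condition (\cref{Truth}.\cref{item:truth-1case-implies-posdec}) and a cut against $\seqar\thr{\vec q'}{0}$ (from \cref{monotonicity-of-threshold-subscripts}.\cref{item:thr-0-true}) finish the job. The paper instead first proves $q_j \dseqar \thr{q_j}{1}$ and then builds $\thr{q_j}{1}\seqar\thr{\vec q}{1}$ by $k-1$ applications of the merging lemma (\cref{further-counting-results-splitting-and-merging}.\cref{merging}), padding with $\thr{q_i}{0}$ on the left (via $\seqar\thr{q_i}{0}$, again \cref{monotonicity-of-threshold-subscripts}.\cref{item:thr-0-true}, and $k-1$ cuts); the symmetry theorem is never used here, only later for step~(2) of the pigeonhole argument. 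Your version is arguably cleaner — one symmetry step replaces an iterated merging loop — at the cost of leaning on the heavier Theorem~\ref{thr-symmetric-permutations}; the paper's version is more self-contained in its reliance on the lighter-weight Lemma~\ref{further-counting-results-splitting-and-merging}, which it has just proved and will reuse in the same section. Both yield polynomial size, and both are sound within the toolbox already available at that point in the paper.
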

\begin{proof}
First we derive,
\begin{equation}
    \label{eq:p-implies-thrp1}
    q_j \dseqar \thr{q_j}1
\end{equation}
as follows:
\[
\begin{array}{r@{\ \dseqar \ }ll}
    q_j & \posdec 0 {q_j } 1 & \text{by \cref{Truth}, axioms and $\cut$ } \\
    & \posdec{\thr \emptylist 1 }{q_j}{\thr\emptylist 0} & \text{by extension axioms $\thrextaxs{}{}$ and \cref{mon-replacing-result}}\\
    & \thr{q_j} 1 & \text{by extension axioms $\thrextaxs{}{}$}
\end{array}
\]
 By repeatedly applying \cref{further-counting-results-splitting-and-merging}.\cref{merging} we obtain polynomial-size proofs of,
\[
    \thr{q_0} 0 , \dots, \thr{q_{j-1}}0, \thr{q_j }1, \thr{q_{j+1} } 0 , \dots, \thr{q_{k-1}} 0 \seqar \thr{\vec q} 1
\]

However, we also have small proofs of $\seqar \thr{q_j}0$ by \cref{monotonicity-of-threshold-subscripts}.\cref{item:thr-0-true}, and so applying $k-1$ cuts we obtain a polynomial-size proof of:
\begin{equation}
\label{eq:thrp1-implies-thrpp1}
    \thr{q_j}1 \seqar \thr{\vec q}1
\end{equation}
The required sequent now follows by simply cutting \cref{eq:p-implies-thrp1} against \cref{eq:thrp1-implies-thrpp1}. 
\end{proof}

\begin{proposition}
\label{lphp-implies-thrPijn+1}
There are polynomial-size $\poselndt$ proofs of \cref{item:lphp-implies-thr-pij}, i.e.,
\[
\lphp n \seqar \thr{\PPi{}}{n+1}
\]
over extension axioms $\thrextaxs{}{}$.
\end{proposition}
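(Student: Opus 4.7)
The plan is to combine \cref{pij-implies-thr-PPi} (a single literal implies a $1$-threshold over any containing list) with the merging part of \cref{further-counting-results-splitting-and-merging}, chaining them by cuts. Concretely, for each pigeon $i \in \{1, \dots, n+1\}$ I will first derive
\[
\bigvee_{j=1}^n \pij i j \ \seqar \ \thr{\PPi i}{1}
\]
by applying $\lefrul \cor$ repeatedly (i.e.\ $n-1$ times) to decompose the LHS, leaving $n$ premisses of the form $\pij i j \seqar \thr{\PPi i}{1}$, each of which is a direct instance of \cref{pij-implies-thr-PPi} (with $\vec q = \PPi i$). This yields $n+1$ polynomial-size proofs, one per pigeon.

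Second, I will iterate \cref{further-counting-results-splitting-and-merging}.\cref{merging} a total of $n$ times, gluing the $1$-thresholds for individual pigeons together:
\[
\thr{\PPi 1}{1}, \thr{\PPi 2}{1}, \dots, \thr{\PPi {n+1}}{1}
\ \seqar \
\thr{\PPi 1 \PPi 2 \cdots \PPi{n+1}}{n+1}
\ = \
\thr{\PPi{}}{n+1}
\]
Each merging step increments both the list being counted over and the threshold value by the appropriate amounts ($1 + 1 + \cdots + 1 = n+1$). By the polynomial-size guarantee of \cref{further-counting-results-splitting-and-merging}, this entire cascade remains polynomial in $n$.

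Finally, I will cut the $n+1$ sequents from the first step against the merged sequent from the second step, producing
\[
\bigvee_{j=1}^n \pij 1 j,\ \dots,\ \bigvee_{j=1}^n \pij {n+1} j
\ \seqar\
\thr{\PPi{}}{n+1}
\]
which is exactly $\lphp n \seqar \thr{\PPi{}}{n+1}$. The proof size is dominated by the $n+1$ case analyses (each of size polynomial in $n$) plus the $n$ applications of merging, so the total is polynomial as required.

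There is no real obstacle here, since all the heavy lifting was done in \cref{pij-implies-thr-PPi} and \cref{further-counting-results-splitting-and-merging}. The only mild bookkeeping concern is making sure the lists concatenate in the right order so that $\PPi 1 \PPi 2 \cdots \PPi{n+1}$ agrees with our notation $\PPi{}$; this is immediate from the definition of $\PPi{}$, so no appeal to \cref{thr-symmetric-permutations} is needed at this stage.
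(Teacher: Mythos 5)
Your proof is correct and follows essentially the same route as the paper: derive $\bigvee \PPi i \seqar \thr{\PPi i}{1}$ for each pigeon via $\lefrul\cor$ and \cref{pij-implies-thr-PPi}, chain \cref{further-counting-results-splitting-and-merging}.\cref{merging} $n$ times to get $\thr{\PPi 1}{1}, \dots, \thr{\PPi{n+1}}{1} \seqar \thr{\PPi{}}{n+1}$, then cut. Your observation that no appeal to \cref{thr-symmetric-permutations} is needed because $\PPi{}$ is by definition the concatenation $\PPi 1 \cdots \PPi{n+1}$ matches the paper's (implicit) treatment.
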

\begin{proof}
Let $i \in \{1, \dots, n+1\}$.
By \cref{pij-implies-thr-PPi} above, we have small proofs of,
\[
\pij ij \seqar \thr{\PPi i } 1 
\]
for each $j=1,\dots, n$.
By applying $n-1$ $\lefrul \lor$ steps we derive:
\begin{equation}
    \label{eq:disjunction-implies-thr-1-php}
    \bigvee \PPi i \seqar \thr{\PPi i}1
\end{equation}
Now, applying \cref{further-counting-results-splitting-and-merging}.\cref{merging} $n$ times (and using cuts), we obtain small proofs of:
\begin{equation}
    \label{eq:merging-php}
    \thr{\PPi 1}1, \dots, \thr{\PPi{n+1}} 1 \seqar \thr{\PPi{}}{n+1}
\end{equation}
Finally, by instantiating \cref{eq:disjunction-implies-thr-1-php} for each $i = 1, \dots, n+1$ and applying $n+1$ $\cut$ steps against \cref{eq:merging-php} we derive the required sequent: 
\[
\bigvee \PPi 1, \dots, \bigvee \PPi {n+1} \seqar \thr{\PPi{}}{n+1} \qedhere
\]

\end{proof}

\subsection{From \texorpdfstring{$(n+1)$}{(n+1)}-threshold to \texorpdfstring{$\rphp n$}{RPHPn}}
Before deriving the final sequent \cref{item:thr-pji-implies-rphp} for our proof of $\php n$, we will need some lemmas.
\begin{lemma}
\label{x-and-thry1-implies-disj-x-and-y}
Let $\vec q = q_0, \dots, q_{k-1}$.
There are polynomial-size $\poselndt$ proofs of,
\[
q, \thr{\vec q} 1 \seqar \{ \posdec 0 q {q_i} \}_{i<k}
\]
over extension axioms $\thrextaxs{}{}$.
\end{lemma}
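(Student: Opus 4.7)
The plan is to proceed by induction on the length $k$ of $\vec q$. The base case $k=0$ is immediate: from the extension axiom $\thr{\emptylist}{1} \extiff 0$ and the initial sequent $0 \seqar$ we obtain $\thr\emptylist 1 \seqar$ by a cut, then weaken on the left by $q$. The right-hand side being empty in this case is fine.

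For the inductive step, write $\vec q = q_0 \vec q'$ with $\vec q' = q_1,\dots,q_{k-1}$. Using the extension axioms $\thrextaxs{}{}$, unfold
\[
\thr{q_0 \vec q'}{1} \ \extiff \ \posdec{\thr{\vec q'}{1}}{q_0}{\thr{\vec q'}{0}}
\]
so by a cut it suffices to prove
$q, \posdec{\thr{\vec q'}{1}}{q_0}{\thr{\vec q'}{0}} \seqar \{\posdec 0 q {q_i}\}_{i<k}$. I would then apply $\lefrul{\pos p}$ on this formula, producing two premisses:
\begin{itemize}
\item $q, \thr{\vec q'}{1} \seqar \{\posdec 0 q {q_i}\}_{i<k}$, obtained from the inductive hypothesis (which gives the same sequent restricted to $1 \le i < k$) by weakening on the right with the missing $\posdec 0 q {q_0}$;
\item $q, q_0, \thr{\vec q'}{0} \seqar \{\posdec 0 q {q_i}\}_{i<k}$, which follows from the Truth condition \cref{Truth}.\cref{item:truth-1case-implies-posdec} applied to $q, q_0 \seqar \posdec 0 q {q_0}$, followed by weakening to add $\thr{\vec q'}{0}$ on the left and the remaining disjuncts on the right.
\end{itemize}

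Intuitively, the case split corresponds exactly to the two ways $\thr{q_0 \vec q'}{1}$ can be witnessed: either $q_0$ itself is the witness (handled by the right premiss via $\posdec 0 q {q_0}$), or the witness lies further down in $\vec q'$ (handled by the inductive hypothesis).

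For the complexity bound, each inductive step adds only a constant number of new sequents (one cut, one $\lefrul{\pos p}$ inference, one weakening, plus the constant-size derivation of $q, q_0 \seqar \posdec 0 q {q_0}$ from Truth conditions), each of size $O(k)$. Since there are $k$ inductive steps, the total proof is of polynomial size in $k$, as required. I do not anticipate any real obstacle; the only point that needs care is consistently folding/unfolding $\thr{q_0 \vec q'}{1}$ via the extension axioms to line up with the $\lefrul{\pos p}$ rule.
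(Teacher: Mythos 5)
Your proof is correct and is a genuinely different argument from the paper's. You proceed by a single direct induction on $k$, unfolding $\thr{q_0\vec q'}{1}$ via its defining extension axiom and applying $\lefrul{\pos p}$ so that the two premisses correspond exactly to the two ways the threshold can be witnessed: by $q_0$ (closed off immediately via the Truth condition $q, q_0 \seqar \posdec 0 q {q_0}$ and weakening) or by some later $q_i$ (handled by the inductive hypothesis). The paper instead factors the argument through a disjunction of singleton thresholds: it first builds $q, \bigvee_{i<k}\thr{q_i}1 \seqar \{\posdec 0 q {q_i}\}_{i<k}$ from the same Truth condition combined with $q_i \dseqar \thr{q_i}1$ (equation~\cref{eq:p-implies-thrp1}) and $\lefrul\cor$, separately derives $\thr{\vec q}1 \seqar \bigvee_{i<k}\thr{q_i}1$ by repeatedly applying the splitting lemma (\cref{further-counting-results-splitting-and-merging}.\cref{splitting}), and cuts the two together. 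Your route is more elementary and self-contained — it never invokes the splitting lemma or the equivalence $q_i \dseqar \thr{q_i}1$ — at the mild cost of re-doing an induction that the paper effectively delegates to Lemma~\cref{further-counting-results-splitting-and-merging}. Both give polynomial bounds; your proof is $O(k)$ lines each of size $O(k)$, matching the paper's.
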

\begin{proof}
For each $i<k$, by \cref{Truth}.\cref{item:truth-1case-implies-posdec} we have a (constant-size) proof of,
\[
q, q_i \seqar \posdec 0 q {q_i}
\]
and so by cutting against appropriate instances of \cref{eq:p-implies-thrp1} we obtain:
\[
    q, \thr{q_i} 1 \seqar \posdec 0 q {q_i}
\]
Instantiating the above for each $i<k$ and applying several $\rigrul \wk$ and $\lefrul \lor$ steps we obtain:
\begin{equation}
\label{eq:x-and-dis-thr-implies-dis-x-and-xi}
    q, \bigvee\limits_{i<k} \thr{q_i} 1 \seqar \{ \posdec 0 q {q_i} \}_{i<k}
\end{equation}

Now, by repeatedly applying \cref{further-counting-results-splitting-and-merging}.\cref{splitting} (under cuts) and $\rigrul \lor$ steps we obtain polynomial-size proofs of:
\begin{equation}
\label{eq:thr1-implies-disj-thr1}
    \thr{\vec q} 1 \seqar \bigvee\limits_{i<k} \thr{q_i}1
\end{equation}
Finally, we conclude by
cutting \cref{eq:thr1-implies-disj-thr1} above against \cref{eq:x-and-dis-thr-implies-dis-x-and-xi}.
\end{proof}

\begin{lemma}
\label{thrx2-implies-disj-x-and-y}
Let $\vec q = q_0, \dots, q_{k-1}$.
There are polynomial-size $\poselndt$ proofs of,
\[
\thr{\vec q} 2 \seqar \{ \posdec 0 {q_i} {q_{i'}} \}_{i<i'<k}
\]
over extension axioms $\thrextaxs{}{}$.
\end{lemma}
\begin{proof}
We proceed by induction on the length $k$ of $\vec q$. In the base case, when $\vec q = \emptylist$, we have an axiom $\thr \emptylist 2 \seqar 0$ from $\thrextaxs{}{}$, whence we conclude by a cut against the $0$-axiom and $\rigrul \wk$.

For the inductive step, we obtain by two applications of \cref{further-counting-results-splitting-and-merging}.\cref{splitting} the following sequents:
\[
\begin{array}{r@{\ \seqar \ }l}
    \thr{q\vec q}{2} & \thr q 1, \thr{\vec q} 2 \\
    \thr{q\vec q}{2} & \thr q 2, \thr{\vec q} 1
\end{array}
\]
Now we already have small proofs of $\thr q 1 \seqar q$ from \cref{eq:p-implies-thrp1} and of $\thr q 2 \seqar $ from \cref{monotonicity-of-threshold-subscripts}.\cref{item:thr-big-false}, and so cutting against the respective sequents above we obtain:

\setlength{\jot}{0pt} 
\begin{align}
    \thr{q\vec q} 2 & \ \seqar\   q,\thr{\vec q} 2 \label{eq:thrxy2-implies-x-or-thry2} \\
    \thr{q\vec q} 2 & \ \seqar\   \thr{\vec q} 1 \label{eq:thrxy2-implies-thry1}
\end{align}

Finally, we combine these sequents using cuts as follows:

\renewcommand{\storageone}{\cref{eq:thrxy2-implies-x-or-thry2}}
\renewcommand{\storagetwo}{\cref{eq:thrxy2-implies-thry1}}
\renewcommand{\storagethree}{\cref{x-and-thry1-implies-disj-x-and-y}}
\[
\vlderivation{
\vliin{\cut}{}{ \thr{q\vec q} 2 \seqar \{ \posdec 0 {q_i} {q_{i'}} \}_{i<i'<k} }{
    \vliiin{2\cut}{}{ \thr{q\vec q} 2 \seqar \{ \posdec 0 {q} {q_{i}}  \}_{i<k}, \thr{\vec q} 2 }{
        \vliq{}{}{ \thr{q\vec q}2 \seqar q,\thr{\vec q} 2  }{\vlhy{\text{\storageone} } }
    }{
        \vliq{}{}{\thr{q\vec q} 2 \seqar \thr{\vec q} 1 }{\vlhy{\text{\storagetwo}}}
    }{
        \vliq{}{}{q,\thr{\vec q} 1 \seqar \{ \posdec 0 {q} {q_{i}}  \}_{i<k} }{ \vlhy{ \text{\storagethree} } }
    }
}{
    \vliq{}{}{ \thr{\vec q} 2 \seqar \{ \posdec 0 {q_i} {q_{i'}} \}_{i<i'<k} }{\vlhy{\IH } }
}
}
\]
where the proof marked $\IH$ is obtained from the inductive hypothesis.
\end{proof}

\begin{proposition}
\label{thrPjin+1-implies-rphp}
There are polynomial-size $\poselndt$ proofs over $\thrextaxs{}{}$ of \cref{item:thr-pji-implies-rphp}, i.e.\ of:
\[
\thr{\PPj{}}{n+1} \seqar \rphp n
\]
\end{proposition}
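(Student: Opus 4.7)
The plan is to formalise the usual pigeonhole argument within $\poselndt$: if at least $n+1$ of the $n(n+1)$ variables $\pij i j$ are true, then by pigeonhole at least one of the $n$ columns $\PPj j$ must contain two true variables, which will give $\rphp n$. Concretely, I would route the proof through the auxiliary sequent
\[
\thr{\PPj{}}{n+1} \seqar \thr{\PPj 1}{2}, \thr{\PPj 2}{2}, \dots, \thr{\PPj n}{2}
\]
and then apply \cref{thrx2-implies-disj-x-and-y} columnwise to conclude.

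To establish this auxiliary sequent, my plan is to argue by induction on $n$. The base case ($n=1$) follows immediately by general identity (\cref{prop:gen-identity}). In the inductive step I would apply \cref{further-counting-results-splitting-and-merging}.\cref{splitting} with $\vec p = \PPj 1$, $\vec q = \PPj 2 \cdots \PPj n$, $k=1$, $l=n$, yielding
\[
\thr{\PPj{}}{n+1} \seqar \thr{\PPj 1}{2}, \thr{\PPj 2 \cdots \PPj n}{n},
\]
and then cut against the inductive hypothesis applied to the $n-1$ groups $\PPj 2, \dots, \PPj n$ with total threshold $n$. Unwinding the induction gives a dag-like proof with $O(n)$ applications of the splitting lemma joined by cuts, each of polynomial size.

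Next, I would invoke \cref{thrx2-implies-disj-x-and-y} applied to each column to obtain polynomial-size proofs of $\thr{\PPj j}{2} \seqar \{\posdec 0 {\pij i j}{\pij{i'}j}\}_{1\leq i<i'\leq n+1}$ for $j = 1, \dots, n$. Cutting each of these against the auxiliary sequent (one cut per column) yields a sequent whose right-hand side is the full collection $\{\posdec 0 {\pij i j}{\pij{i'}j}\}_{j,\, i<i'}$, which I would finally bundle into the double disjunction $\rphp n$ by repeated $\rigrul \cor$ steps.

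The main obstacle is essentially bookkeeping: because the splitting lemma is asymmetric, sending $\thr{\vec p \vec q}{k+l}$ to $\thr{\vec p}{k+1}, \thr{\vec q}{l}$ rather than to a symmetric split, one cannot simply peel off all $n$ columns at threshold $2$ independently --- the inductive framing is needed to absorb the decreasing threshold budget from $n+1$ down to $2$ across the $n$ columns. Once that is set up, the remaining steps are routine applications of previously established results.
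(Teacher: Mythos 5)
Your proof is correct and matches the paper's proof essentially step for step: both route through the auxiliary sequent $\thr{\PPj{}}{n+1} \seqar \thr{\PPj 1}{2}, \dots, \thr{\PPj n}{2}$ obtained by iterating \cref{further-counting-results-splitting-and-merging}.\cref{splitting} (the paper says ``$n-1$ applications ... and cuts'' where you unfold the induction explicitly, handling the decreasing threshold budget exactly as the paper intends), and then cut against the columnwise instances of \cref{thrx2-implies-disj-x-and-y} and finish with $\rigrul\cor$ steps. The only small discrepancy is an index typo in the paper's instance of \cref{thrx2-implies-disj-x-and-y} (which writes $1\le i<i'\le n$ where it should read $1\le i<i'\le n+1$, as $\PPj j$ has $n+1$ entries); your version has the correct bound.
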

\begin{proof}
Recall that $\PPj{} = \PPj 1,\dots, \PPj n$, so by $n-1$ applications of \cref{further-counting-results-splitting-and-merging}.\cref{splitting} (and cuts) we have small proofs of:
\begin{equation}
    \label{eq:thrPPn+1-implies-disj-thrPPj2}
    \thr{\PPj{}}{n+1} \seqar \thr{\PPj 1} 2 , \dots, \thr{\PPj n} 2
\end{equation}
Now, instantiating \cref{thrx2-implies-disj-x-and-y} with $\vec q = \PPj j$ we also have small proofs of,
\begin{equation}
    \label{eq:thrPPj2-implies-disj-pij-and-pi'j}
    \thr{\PPj j} 2 \seqar \{\posdec 0 {\pij i j } {\pij {i'} j }  \}_{1\leq i<i'\leq n}
\end{equation}
for each $j=1,\dots, n$. 
Finally, we may apply $n$ $\cut $ steps on \cref{eq:thrPPn+1-implies-disj-thrPPj2} against each instance of \cref{eq:thrPPj2-implies-disj-pij-and-pi'j} (for $j=1,\dots , n$) and apply $\rigrul \lor$ steps to obtain the required sequent. 
\end{proof}

\subsection{Putting it all together}
We are now ready to assemble our proofs for $\php n$.

\begin{proof}
[Proof of \cref{small-proofs-of-php}]
We simply cut together the proofs of \cref{item:lphp-implies-thr-pij}, \cref{item:thr-pij-implies-thr-pji} and \cref{item:thr-pji-implies-rphp} that we have so far constructed:
\renewcommand{\storageone}{\cref{lphp-implies-thrPijn+1}}
\renewcommand{\storagetwo}{\cref{thr-symmetric-permutations}}
\renewcommand{\storagethree}{\cref{thrPjin+1-implies-rphp}}
\[
\vlderivation{
    \vliiin{2\cut}{}{ \lphp n \seqar \rphp n }{
        \vliq{}{}{\lphp n \seqar \thr{\PPi{}}{n+1} }{\vlhy{ \text{\storageone} }}
    }{
        \vliq{}{}{ \thr{\PPi{}}{n+1} \seqar \thr{\PPj{}}{n+1} }{\vlhy{\text{\storagetwo} }}
    }{
        \vliq{}{}{ \thr{\PPj{}}{n+1} \seqar \rphp n }{\vlhy{\text{\storagethree}}}
    }
}
\qedhere
\]
\end{proof}

\section{Positive simulation of non-positive proofs}
\label{sec:poselndt-psim-elndt}
In the previous section we showcased the capacity of the system $\poselndt$ to formalise basic counting arguments by giving polynomial-size proofs of the pigeonhole principle.

In this section we go further and give a general polynomial simulation of $\elndt$, over positive sequents, by adapting a method from \cite{DBLP:journals/jcss/AtseriasGP02}.

\begin{theorem}
\label{poselndt-psims-elndt-pos-seqs}
$\poselndt$ polynomially simulates $\elndt$ over positive sequents.
\end{theorem}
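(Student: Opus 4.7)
The plan is to adapt the translation-based approach of \cite{DBLP:journals/jcss/AtseriasGP02} from $\mlk$ to our eNDT setting, using the counting formulas and small proofs of their properties developed in \cref{sec:programs-for-counting-and-basic-props}. Given an $\elndt$ proof $\pi$ of a positive sequent $\Gamma \seqar \Delta$ over extension axioms $\mathcal A$, the high-level steps are: (i) translate $\pi$ by introducing dual variables for each propositional variable occurring in $\pi$; (ii) simulate each $\elndt$ inference step by a small $\poselndt$ derivation modulo auxiliary ``Boolean'' hypotheses on the dual variables; (iii) discharge the Boolean hypotheses via a counting argument based on the threshold formulas.

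For (i), I would introduce a fresh propositional variable $\dual p$ for each $p$ in $\pi$, and define a positive translation $\negtrans A$ of each $\elndt$ formula $A$ by recursion, preserving constants, propositional variables, extension variables and disjunctions, and setting
\[
\negtrans{(\dec A p B)} \ :=\  \posdec{\posdec 0 {\dual p}{\negtrans A}}{p}{\negtrans B}
\]
which is semantically equivalent to $(\dual p \cand \negtrans A) \cor (p \cand \negtrans B)$. Extending this to $\mathcal A$ yields a positive set of extension axioms $\negtrans{\mathcal A}$. The key semantic property is that, whenever we interpret $\dual p$ as $\cnot p$ for every $p$, $\negtrans A$ computes the same Boolean function as $A$.

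For (ii), I would simulate each inference step of $\pi$ by a polynomial-size $\poselndt$ derivation over $\negtrans{\mathcal A}$, possibly invoking auxiliary Boolean sequents $p,\dual p \seqar$ (non-contradiction) and $\seqar p,\dual p$ (excluded middle) as hypotheses. Disjunction rules, structural rules and cuts translate directly: cuts on general $\elndt$ formulas just become cuts on their positive translations. The only nontrivial cases are the decision rules $\lefrul p$ and $\rigrul p$, each of which unfolds into two nested positive-decision inferences together with a use of one of the Boolean sequents to account for the positive vs.\ negative polarity of $p$. Since $\Gamma$ and $\Delta$ are already positive, $\negtrans \Gamma = \Gamma$ and $\negtrans \Delta = \Delta$, so this stage yields a $\poselndt$ derivation of $\Gamma \seqar \Delta$ from polynomially many Boolean hypotheses over $\negtrans{\mathcal A}$.

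The main obstacle is step (iii): discharging the Boolean hypotheses without incurring more than polynomial-size blow-up. For this I would adapt the method of \cite{DBLP:journals/jcss/AtseriasGP02}, refined as in \cite{DBLP:journals/apal/Jerabek11a,BKKK17} to get polynomial rather than merely quasi-polynomial size, working over the threshold formulas and their characteristic properties from \cref{sec:programs-for-counting-and-basic-props}. The rough idea is a counting case analysis: when no $p_i \cand \dual p_i$ holds (non-contradiction) and no $p_i \cor \dual p_i$ fails (excluded middle), the $\dual p_i$ behave as genuine negations and the translated proof simulates $\pi$ on the nose; the complementary cases are absorbed into small positive proofs using thresholds, echoing the counting argument of \cref{sect:summary-of-php-proof} for $\php n$. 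Carrying this out requires substituting threshold formulas, themselves representations of positive NBPs, into the translations $\negtrans A$ and into the axioms of $\negtrans{\mathcal A}$, and doing so without blow-up; this is the ``iterated substitutions'' difficulty flagged in the introduction, and I expect it to consume most of the technical work.
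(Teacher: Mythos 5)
Your proposal follows essentially the same three-stage plan as the paper: a positive normal form translation introducing a fresh dual $\dual p$ for each variable, simulation of $\elndt$ inference steps modulo Boolean initial sequents $p,\dual p\seqar$ and $\seqar p,\dual p$, and discharge of those by substituting thresholds $\thr{\ppdel i}{k}$ for each $\dual p_i$ followed by a chain of cuts over $k=0,\dots,m+1$ bounded by $\seqar \thr{\vec p}{0}$ and $\thr{\vec p}{m+1}\seqar$. Your translation $\negtrans{(\dec A p B)}:=\posdec{\posdec 0 {\dual p}{\negtrans A}}{p}{\negtrans B}$ is a semantically equivalent syntactic variant of the paper's $\posdec 0 {\dual p}{\negtrans A}\cor\posdec 0 p {\negtrans B}$, and would serve just as well.

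One genuine misconception is your appeal to \cite{DBLP:journals/apal/Jerabek11a,BKKK17} to upgrade from quasi-polynomial to polynomial size. Those refinements are needed for $\mlk$ only because negation-free \emph{formula} representations of thresholds are quasi-polynomial-size; in the NBP setting the positive closures of OBDDs for counting are already polynomially small, so the bare counting strategy of \cite{DBLP:journals/jcss/AtseriasGP02} directly yields a polynomial simulation, and chasing the Je\v r\'abek/BKKK17 machinery would be an unnecessary detour that misses the point of working with branching programs. Otherwise you correctly flag the central technicality: since extension variables cannot serve as decision variables, one cannot literally substitute $\thr{\ppdel i}{k}$ for $\dual p_i$ where $\dual p_i$ decides a positive decision. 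The paper handles this by introducing the mutually defined ``threshold decision'' extension variables $\refthr{\vec p}{k}{A}{B}$ with their own extension axioms and truth conditions, which is exactly the bespoke substitution mechanism you would need to make step (iii) go through.
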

While the high-level structure of the argument is similar to that of \cite{DBLP:journals/jcss/AtseriasGP02}, we must make several specialisations to the current setting due to the peculiarities of eNDT formulas and extension axioms.

\subsection{Summary of proof structure}
Before giving the low-level details, let us survey our approach towards proving \cref{poselndt-psims-elndt-pos-seqs}, in particular comparing it to the analogous methodology from \cite{DBLP:journals/jcss/AtseriasGP02}.

Our strategy is divided into three main parts, which mimic the analogous proof structure from \cite{DBLP:journals/jcss/AtseriasGP02}.

In the first part, \cref{sec:pos-norm-form}, we deal with the non-positive formulas occurring in an $\elndt$ proof. 
The intuition is similar to what is done in \cite{DBLP:journals/jcss/AtseriasGP02} where they first reduced all negations to the variables using De Morgan duality. 
In our setting formulas are no longer closed under duality but, nonetheless, we are able to devise for each formula $A$ an appropriate `positive normal form' $\negtrans A$. 
$\negtrans A$ may contain negative literals (in particular as decision variables), but all decisions themselves are positive.

We duly consider an extension ${\negposelndt}$ of $\poselndt$ which admits negative literals $\dual p$ and has two extra axioms: $p,\dual{p}\seqar $ and $\seqar p,\dual{p}$.

The main result of this first part is that $\negposelndt$ polynomially simulates $\elndt$ over positive sequents (\cref{negposelndt-psims-elndt-pos-seqs}), by essentially replacing each formula occurrence $A$ by $\negtrans A$ and locally repairing the proof (\cref{trans-elndt-to-negposelndt}).

In the second part the aim is to `replace' negative literals in an $\negposelndt$ proof by certain threshold formulas from \cref{Defin.thresh.}. 
This is the same idea as in \cite{DBLP:journals/jcss/AtseriasGP02}, 
but in our setting we must deal with certain technicalities encountered when substituting extended formulas in $\negposelndt$ and $\poselndt$.
In particular, if a literal occurs as a decision variable, then we cannot directly substitute it for an extension variable (e.g.\ a threshold formula $\thr {\vec p} k$), since the syntax of $\elndt$ (and its fragments) does not allow for this.
To handle this issue appropriately, we introduce in \cref{sec:ref-counting-formulas} a refinement of our previous threshold extension variables and axioms, defined mutually inductively with eNDT formulas themselves, that accounts for all such substitution situations (\cref{Thresh.Def}).

For the remainder of the argument, in \cref{sec:subst-thr-for-neg-lits} we fix a $\negposelndt$ proof $P$ of $\Gamma \seqar \Delta$ over extension axioms $\mathcal A$ and propositional variables $\vec p = p_0, \dots, p_{m-1}$. 
We define, for $k\geq 0$, systems $\tposelndt k $ that each have polynomial-size proofs $\ttrans k P$ of $\Gamma \seqar \Delta$  (\cref{k-trans-result}). 
Morally speaking, this simulation is by `substituting' thresholds for negative literals, and the consequent new axioms required in $\tposelndt k$ are parametrised by the threshold $k$.
We point out that $\tposelndt k$ itself is tailored to the specific set of extension axioms $\mathcal A$ and propositional variables $\vec p$ to facilitate the choice of threshold formulas and required extension variables/axioms.

The final part, \cref{sec:main-res-proof}, essentially stitches together proofs obtained in each ${\tposelndt k}$ for $0\leq k \leq m+1$.
More precisely, using basic properties of threshold formulas, we show that each $\tposelndt k$ proof of a positive sequent $\Gamma \seqar \Delta$ can be polynomially transformed into a $\poselndt$ proof of $\thr{\vec{p}}{k},\Gamma \seqar \Delta , \thr{\vec{p}}{k+1}$, over appropriate extension axioms (\cref{simulation:k-true-implies-k+1-true}).
We conclude the argument for our main result \cref{poselndt-psims-elndt-pos-seqs} by simply cutting these together and appealing to \cref{monotonicity-of-threshold-subscripts}.

\subsection{Positive normal form of \texorpdfstring{$\elndt$}{eLNDT} proofs}
\label{sec:pos-norm-form}

We shall temporarily work with a presentation of $\elndt$ within $\poselndt$ by allowing \emph{negative} literals, in order to facilitate our later translations.
For this reason, let us introduce, for each propositional variable $p$, a distinguished propositional variable $\dual p$, which we shall also refer to as `negative literals'.

The system $\negposelndt$ is defined just like $\poselndt$ but also allows negative literals $\dual p$ to appear in (positive) decision steps.
All syntactic positivity constraints remain.
Furthermore, $\negposelndt$ has two additional initial sequents:
\[
\vlinf{\lefrul \neg}{}{p, \dual p \seqar }{}
\qquad
\vlinf{\rigrul \neg}{}{\seqar p, \dual p}{}
\]
The system $\negposelndt$ admits a `normal form' of $\elndt$ proofs:

\begin{definition}
[Positive normal form]
We define a (polynomial-time) translation from an $\elndt$ formula $A$ to a $\negposelndt$ formula $\negtrans A$ as follows:
\[
\begin{array}{r@{\ :=\ }l}
    \negtrans 0 &  0 \\
    \negtrans 1 & 1
\\
    \negtrans p & p\\
    \negtrans{\dual p } & \dual p
\end{array}
\qquad
\begin{array}{r@{\ :=\ }l}
    \negtrans{e_i} & e_i \\
    \noalign{\smallskip}
    \negtrans{(A \lor B)} & \negtrans A \lor \negtrans B\\
    \noalign{\smallskip}
    \negtrans{(\dec A p B ) } & \posdec 0 {\dual p} {\negtrans A} \lor \posdec 0 p {\negtrans B}
\end{array}
\]

For a multiset of formulas $\Gamma = A_1, \dots, A_n$ we write $\negtrans \Gamma := \negtrans A_1, \dots, \negtrans A_n$.

For a set of extension axioms $\mathcal A = \{e_i \extiff A_i\}_{i<n} $, we write $\negtrans {\mathcal A}$ for $\{ e_i \extiff \negtrans A_i \}_{i<n} $.
\end{definition}

We can lift this translation to the level of proofs:

\begin{theorem}
\label{trans-elndt-to-negposelndt}
Let $P$ be an $\elndt$ proof of $\Gamma \seqar \Delta$ over extension axioms $\mathcal A$.
There is a $\negposelndt$ proof $\negtrans P $ of $\negtrans \Gamma \seqar \negtrans \Delta$ over $\negtrans{\mathcal A}$ of size polynomial in $|P|$.
\end{theorem}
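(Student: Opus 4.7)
The plan is to proceed by induction on the structure of the $\elndt$ proof $P$, defining $\negtrans P$ by translating each rule application into a constant-size $\negposelndt$ derivation. For most rules of $\elndt$, namely the initial sequents $0 \seqar$, $\seqar 1$ and $\id$, the structural rules $\wk$ and $\cntr$, the disjunction rules $\lefrul \cor$ and $\rigrul \cor$, and $\cut$, the translation will be essentially componentwise: every formula $A$ is replaced by $\negtrans A$ and the same rule is reapplied (the cut-formula $A$ becomes $\negtrans A$). For the extension axioms $e_i \extiff A_i$ of $\mathcal A$ (used in $P$ as the sequents $e_i \seqar A_i$ and $A_i \seqar e_i$), note that these translate to $e_i \seqar \negtrans{A_i}$ and $\negtrans{A_i} \seqar e_i$, which are precisely the axioms of $\negtrans{\mathcal A}$. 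Along the way I would verify that $\negtrans A$ is always a positive $\negposelndt$-formula: the only nontrivial case is the decision, where $\posdec 0 {\dual p} {\negtrans A}$ unfolds to a decision whose $1$-branch is $0 \cor \negtrans A$ and hence dominates its $0$-branch $0$, so positivity holds; likewise for $\posdec 0 p {\negtrans B}$.

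The main obstacle will be simulating the two decision rules $\lefrul p$ and $\rigrul p$ using only the positive decision rules of $\negposelndt$ together with the new axioms $\lefrul \neg : p, \dual p \seqar$ and $\rigrul \neg : \seqar p, \dual p$. For $\lefrul p$, given the inductively translated premisses $\negtrans \Gamma, \negtrans A \seqar \negtrans \Delta, p$ and $\negtrans \Gamma, p, \negtrans B \seqar \negtrans \Delta$, I would first apply $\lefrul \cor$ on the principal formula $\negtrans{(\dec A p B)} = \posdec 0 {\dual p} {\negtrans A} \cor \posdec 0 p {\negtrans B}$ and then dispatch each disjunct by a single $\lefrul{\pos{}}$ step: the $0$-branch premisses are trivial via the $0$ axiom and weakening; the $(\dual p, \negtrans A)$-branch of the first disjunct is obtained by cutting the first IH against $\lefrul \neg$ on $p$; and the $(p, \negtrans B)$-branch of the second disjunct is exactly the second IH. The $\rigrul p$ case is the symmetric dual but slightly more intricate: after $\rigrul \cor$ I would use two nested $\rigrul{\pos{}}$ steps, whose leaves are discharged either by a weakening of $\rigrul \neg$, by a cut on $p$ between the second IH and $\rigrul \neg$ (where $\dual p$ needs to appear on the right alongside $\negtrans B$), or by a single cut on $p$ between the two IHs (to derive $\negtrans \Gamma \seqar \negtrans \Delta, \negtrans A, \negtrans B$). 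Each of these local derivations is of constant size.

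For the polynomial size bound, observe that each $\elndt$ rule application is simulated by a bounded number of $\negposelndt$ rules, and that $|\negtrans A| = O(|A|)$ by a straightforward structural induction on $A$ (the only superlinear-looking clause, the decision, still introduces only a constant amount of new structure). Consequently $|\negtrans P| = O(|P|)$, so the construction is indeed polynomial. This reduces the theorem to a routine case analysis, the only mildly subtle point being the careful placement of cuts against the negative-literal axioms in the decision cases.
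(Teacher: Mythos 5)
Your proposal is correct and takes essentially the same approach as the paper: translate formulas via $\negtrans\cdot$, proceed by induction on the proof, and observe that the only nontrivial cases are the two decision rules, which you simulate using $\cor$-rules, positive-decision rules, the $\neg$ axioms, and cuts against the inductive hypotheses. The only difference is organizational: the paper first proves a separate ``truth-conditions'' lemma (four short sequents about $\negtrans{(\dec A p B)}$) and simulates each decision step by cutting against two of those conditions, whereas you inline the same derivations directly at each decision step; unwinding the paper's cuts yields essentially your derivations, so the two constructions are the same in substance.
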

\noindent
Notice that, since $\negtrans{\cdot}$ commutes with all connectives except for decisions, to prove the above result it suffices to just derive the translation of decision steps.
For this it will be useful to have another `truth' lemma:

\begin{lemma}
 [Truth for $\negtrans\cdot$-translation]
\label{Truth-negtrans}
Let $\mathcal A=\{e_i \extiff A_i\}_{i<n}$ be a set of positive extension axioms and let $A$ and $B$ be formulas over $e_0, \dots, e_{n-1}$.
There are polynomial-size $\negposelndt$ proofs of the following sequents over $\negtrans {\mathcal A}$:
\begin{enumerate}
    \item\label{item:truth-negtrans-dec-implies-0case} $\negtrans{(\dec A p B)} \seqar                 \negtrans A,p$
    \item\label{item:truth-negtrans-dec-implies-1case} $\negtrans{(\dec A p B)} , p \seqar \negtrans B$
    \item\label{item:truth-negtrans-0case-implies-dec} $\negtrans A \seqar                 \negtrans{(\dec A p B)} , p$
    \item\label{item:truth-negtrans-1case-implies-dec} $p,\negtrans B \seqar \negtrans{(
\dec A p B)}$
\end{enumerate}
\end{lemma}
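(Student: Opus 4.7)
The plan is to unfold the definition $\negtrans{(\dec A p B)} = \posdec{0}{\dual{p}}{\negtrans{A}} \cor \posdec{0}{p}{\negtrans{B}}$ and reduce each case to a combination of (i) the truth conditions of Proposition~\ref{Truth} (applied to the two positive decisions above), (ii) the initial sequent $0 \seqar$, (iii) the negation axioms $\lefrul{\neg}$ and $\rigrul{\neg}$ of $\negposelndt$, and (iv) a single application of $\lefrul{\cor}$ or $\rigrul{\cor}$ together with some weakenings. Since $\posdec{0}{\ell}{C}$ is semantically equivalent to $\ell \cand C$, the four sequents merely witness that the translation of $\dec A p B$ behaves like ``$(\dual p \cand \negtrans A) \cor (p \cand \negtrans B)$'', which is exactly a decision on $p$.

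For \cref{item:truth-negtrans-dec-implies-0case}, I would apply $\lefrul{\cor}$ to split the antecedent into two branches. On the left branch, $\posdec{0}{\dual{p}}{\negtrans{A}} \seqar \negtrans A, p$ follows from Proposition~\ref{Truth}.\cref{item:truth-posdec-implies-1case} (giving $\posdec{0}{\dual p}{\negtrans A} \seqar 0, \negtrans A$), one cut against $0 \seqar$, and $\rigrul{\wk}$ to add $p$. On the right branch, $\posdec{0}{p}{\negtrans{B}} \seqar \negtrans A, p$ is obtained symmetrically from Proposition~\ref{Truth}.\cref{item:truth-posdec-implies-0case} and a cut against $0\seqar$. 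Case \cref{item:truth-negtrans-dec-implies-1case} is analogous, except that on the left branch (containing $\posdec{0}{\dual p}{\negtrans A}$ and $p$) we must additionally cut $\posdec{0}{\dual p}{\negtrans A} \seqar \dual p$ against the $\negposelndt$-axiom $\lefrul{\neg}: p, \dual p \seqar$ to discharge the $p$ on the left-hand side.

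For \cref{item:truth-negtrans-1case-implies-dec} the derivation is immediate: Proposition~\ref{Truth}.\cref{item:truth-1case-implies-posdec} gives $p, \negtrans B \seqar \posdec{0}{p}{\negtrans B}$, whence $\rigrul{\wk}$ and $\rigrul{\cor}$ yield the required sequent. For \cref{item:truth-negtrans-0case-implies-dec}, I would begin with $\dual p, \negtrans A \seqar \posdec{0}{\dual p}{\negtrans A}$ (again by Proposition~\ref{Truth}.\cref{item:truth-1case-implies-posdec}) and cut against the $\negposelndt$-axiom $\rigrul{\neg}: \seqar p, \dual p$ on the formula $\dual p$, obtaining $\negtrans A \seqar p, \posdec{0}{\dual p}{\negtrans A}$. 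Then $\rigrul{\wk}$ adds $\posdec{0}{p}{\negtrans B}$ on the right and $\rigrul{\cor}$ merges the two decisions.

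Each of the four derivations has constant size (depending only on the sizes of $A$, $B$ and $p$, modulo the polynomial-size proofs of general identity and the truth conditions imported from Proposition~\ref{Truth}), so the polynomial-size bound is immediate. There is no real obstacle: the only subtle point is identifying precisely which negation axiom of $\negposelndt$ is needed in each direction, namely $\lefrul{\neg}$ in case \cref{item:truth-negtrans-dec-implies-1case} (to eliminate a spurious $\dual p$ appearing on the right after applying a truth condition) and $\rigrul{\neg}$ in case \cref{item:truth-negtrans-0case-implies-dec} (to introduce the $p$ on the right from nothing). Cases \cref{item:truth-negtrans-dec-implies-0case} and \cref{item:truth-negtrans-1case-implies-dec} do not require the negation axioms at all.
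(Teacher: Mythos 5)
Your proposal is correct and takes essentially the same route as the paper: unfold $\negtrans{(\dec A p B)}$ as $\posdec{0}{\dual p}{\negtrans A}\cor\posdec{0}{p}{\negtrans B}$, split on $\cor$, and dispatch each positive decision via truth conditions and the negation axioms, with $\lefrul\neg$ needed only in case \cref{item:truth-negtrans-dec-implies-1case} and $\rigrul\neg$ only in case \cref{item:truth-negtrans-0case-implies-dec}. The only cosmetic difference is that you invoke \cref{Truth} (plus cuts against $0\seqar$) for the decision subproofs, whereas the paper applies the $\pos p$-rules directly, but these come to the same thing.
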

\begin{proof}
We give the proofs explicitly below:
\[
\vlderivation{
\vliin{\lefrul \lor}{}{ \negtrans{(
\dec A p B)} \seqar \negtrans A, p }{
    \vliin{\lefrul{\pos{\dual p}}}{}{ \posdec 0 {\dual p} {\negtrans A} \seqar \negtrans A , p }{
        \vlin{2\rigrul \wk}{}{0 \seqar \negtrans A, p }{
        \vlin{0}{}{0 \seqar }{\vlhy{}}
        }
    }{
        \vlin{\lefrul \wk, \rigrul \wk}{}{\dual p , \negtrans A \seqar \negtrans A, p}{
        \vlin{\id}{}{\negtrans A \seqar \negtrans A}{\vlhy{}}
        }
    }
}{
    \vliin{\lefrul{\pos{p}} }{}{\posdec 0 p {\negtrans B} \seqar \negtrans A, p }{
        \vlin{2\rigrul\wk}{}{ 0 \seqar \negtrans A,p }{
        \vlin{0}{}{0 \seqar }{\vlhy{}}
        }
    }{
        \vlin{\lefrul \wk, \rigrul \wk}{}{ p, \negtrans B \seqar \negtrans A, p }{
        \vlin{\id}{}{p \seqar p}{\vlhy{}}
        }
    }
}
}
\]
\[
\vlderivation{
\vliin{\lefrul \lor}{}{ \negtrans{(
\dec A p B)}, p \seqar \negtrans B }{
    \vliin{\lefrul{\pos{\dual p}}}{}{ \posdec 0 {\dual p}{\negtrans A} , p \seqar \negtrans B }{
        \vlin{\lefrul \wk, \rigrul \wk}{}{0, p \seqar \negtrans B}{
        \vlin{0}{}{0 \seqar }{\vlhy{}}
        }
    }{
        \vlin{\lefrul\wk, \rigrul \wk}{}{\dual p , \negtrans A, p \seqar \negtrans B }{
        \vlin{\lefrul \neg}{}{\dual p , p \seqar }{\vlhy{}}
        }
    }
}{
    \vliin{\lefrul{\pos p}}{}{ \posdec 0 p {\negtrans B} , p \seqar \negtrans B }{
        \vlin{\lefrul \wk, \rigrul \wk}{}{0 , p \seqar \negtrans B}{
        \vlin{0}{}{0 \seqar }{\vlhy{}}
        }
    }{
        \vlin{2\lefrul \wk}{}{p,\negtrans B, p \seqar \negtrans B}{
        \vlin{\id}{}{\negtrans B\seqar \negtrans B}{\vlhy{}}
        }
    }
}
}
\]
\[
\vlderivation{
\vlin{\rigrul \wk, \rigrul \lor}{}{\negtrans A \seqar \negtrans{(\dec A p B)},p }{
\vliin{\rigrul{\pos {\dual p}}}{}{ \negtrans A \seqar \posdec 0 {\dual p} {\negtrans A},p }{
    \vlin{\lefrul\wk, \rigrul \wk}{}{\negtrans A \seqar 0 ,\dual p , p  }{
    \vlin{\rigrul \neg}{}{\seqar \dual p , p }{\vlhy{}}
    }
}{
    \vlin{2\rigrul\wk}{}{ \negtrans A \seqar 0 , \negtrans A, p }{
    \vlin{\id}{}{\negtrans A \seqar \negtrans A}{\vlhy{}}
    }
}   
}
}
\qquad
\vlderivation{
\vlin{\rigrul \wk, \rigrul \lor}{}{ p,\negtrans B \seqar \negtrans{(\dec A p B)} }{
\vliin{\rigrul{\pos{\dual p}}}{}{p, \negtrans B \seqar \posdec 0 p {\negtrans B} }{
    \vlin{\lefrul \wk , \rigrul \wk }{}{p,\negtrans B \seqar 0 , p}{
    \vlin{\id}{}{p \seqar p}{\vlhy{}}
    }
}{
    \vlin{\lefrul \wk, \rigrul \wk}{}{p, \negtrans B \seqar 0 , \negtrans B}{
    \vlin{\id}{}{\negtrans B \seqar \negtrans B}{\vlhy{}}
    }
}
}
} \qedhere
\]
\end{proof}

We can now prove our polynomial-time interpretation of $\elndt$ within $\negposelndt$:
\begin{proof}
[Proof of \cref{trans-elndt-to-negposelndt}]
We proceed by a straightforward induction on the length of $P$. 
The critical cases are when $P$ ends with decision steps, which we translate as follows.
A left decision step,
\[
\vliinf{\lefrul p}{}{ \Gamma, \dec A p B \seqar \Delta }{ \Gamma , A \seqar \Delta, p }{ \Gamma, p , B \seqar \Delta }
\]
is simulated by the following derivation:
\renewcommand{\storageone}{\cref{Truth-negtrans}.\cref{item:truth-negtrans-dec-implies-0case}}
\renewcommand{\storagetwo}{\cref{Truth-negtrans}.\cref{item:truth-negtrans-dec-implies-1case} }
\[
\vlderivation{
\vliin{\cut}{}{ \negtrans\Gamma, \negtrans{(\dec A p B)} \seqar \negtrans \Delta }{
    \vliin{\cut}{}{ \negtrans \Gamma, \negtrans{(\dec A p B)} \seqar \negtrans \Delta, p }{
        \vliq{}{}{\negtrans{(\dec A p B)} \seqar \negtrans A, p}{\vlhy{\text{\storageone}} }
    }{
        \vlhy{\negtrans\Gamma, \negtrans A \seqar \negtrans \Delta, p }
    }
}{
    \vliin{\cut}{}{\negtrans \Gamma, \negtrans{(\dec A p B) }, p \seqar \negtrans \Delta }{
        \vliq{}{}{\negtrans{(\dec A p B)}, p \seqar \negtrans B }{\vlhy{\text{\storagetwo}}}
    }{
        \vlhy{\negtrans\Gamma, p , \negtrans B \seqar \negtrans\Delta}
    }
}
}
\]
A right decision step,
\[
\vliinf{\rigrul p}{}{\Gamma \seqar \Delta, \dec A p B }{ \Gamma \seqar \Delta, A, p }{ \Gamma , p \seqar \Delta, B }
\]
is simulated by the following derivation:
\renewcommand{\storageone}{\cref{Truth-negtrans}.\cref{item:truth-negtrans-0case-implies-dec}}
\renewcommand{\storagetwo}{\cref{Truth-negtrans}.\cref{item:truth-negtrans-1case-implies-dec}}
\[
\vlderivation{
\vliin{\cut}{}{ \negtrans \Gamma \seqar \negtrans \Delta, \negtrans{(\dec A p B)} }{
    \vliin{\cut}{}{ \negtrans\Gamma \seqar \negtrans\Delta, \negtrans{(\dec A p B)}, p }{
        \vlhy{\negtrans \Gamma \seqar \negtrans \Delta, \negtrans A, p}
    }{
        \vliq{}{}{\negtrans A \seqar \negtrans{(\dec A p B)}, p }{\vlhy{\text{\storageone}}}
    }
}{
    \vliin{\cut}{}{ \negtrans \Gamma, p \seqar \negtrans\Delta, \negtrans{(\dec A p B)} }{
        \vlhy{\negtrans \Gamma, p \seqar \negtrans \Delta, \negtrans B}
    }{
        \vliq{}{}{p,\negtrans B \seqar \negtrans{(\dec A p B)} }{\vlhy{\text{\storagetwo }}}
    }
}
}\qedhere
\]

\end{proof}

Finally we note that the translation above gives rise to a bona fide polynomial simulation of $\elndt$ by $\negposelndt$ over positive sequents:
\begin{corollary}
\label{negposelndt-psims-elndt-pos-seqs}
$\negposelndt$ polynomially simulates $\elndt$, over positive sequents.
\end{corollary}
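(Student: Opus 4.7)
The plan is to compose the translation of Theorem~\ref{trans-elndt-to-negposelndt} with polynomial-size $\negposelndt$ derivations witnessing the equivalence $A \dseqar \negtrans{A}$ for positive, extension-variable-free $A$. Given an $\elndt$ proof $P$ of a positive sequent $\Gamma \seqar \Delta$ over extension axioms $\mathcal{A}$, Theorem~\ref{trans-elndt-to-negposelndt} already yields a polynomial-size $\negposelndt$ proof $\negtrans{P}$ of $\negtrans{\Gamma} \seqar \negtrans{\Delta}$ over $\negtrans{\mathcal{A}}$. Since $\negtrans{(\cdot)}$ is not the identity on positive decisions (for instance $\negtrans{(\posdec B p C)} = \posdec 0 {\dual p} {\negtrans B} \cor \posdec 0 p {(\negtrans B \cor \negtrans C)}$), a residual step is required to convert this into a proof of the original sequent $\Gamma \seqar \Delta$.

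The key auxiliary lemma I would establish first is: for every positive eNDT formula $A$ without extension variables, there are polynomial-size $\negposelndt$ proofs of both $A \seqar \negtrans{A}$ and $\negtrans{A} \seqar A$, using no extension axioms. The proof proceeds by induction on $A$. The cases $A \in \{0, 1, p\}$ are immediate, since $\negtrans{A} = A$, and general identity (\cref{prop:gen-identity}) suffices. The disjunction case $A = B \cor C$ is routine, applying the $\cor$-rules to the inductive hypotheses. The only delicate case is the positive decision $A = \posdec B p C$. Appealing to the inductive hypothesis via \cref{refined-replacement}, it suffices to derive $\posdec B p C \dseqar \posdec 0 {\dual p} B \cor \posdec 0 p {(B \cor C)}$; this reduces to a straightforward case split on $p$ using the initial sequents $p, \dual p \seqar$ and $\seqar p, \dual p$ that distinguish $\negposelndt$ from $\poselndt$.

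Once the lemma is available, I would compose $\negtrans{P}$ with cuts against appropriate instances of $A \seqar \negtrans{A}$ (for $A \in \Gamma$) and $\negtrans{A} \seqar A$ (for $A \in \Delta$), yielding a $\negposelndt$ proof of $\Gamma \seqar \Delta$ over $\negtrans{\mathcal{A}}$ whose size is polynomial in $|P|$. The main obstacle is the positive-decision case of the lemma, but this amounts to a bounded case analysis using the negative-literal axioms and contributes only a constant number of inference steps per formula symbol, so the total overhead remains polynomial, as required.
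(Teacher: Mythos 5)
Your proposal is correct and takes essentially the same route as the paper: compose the translation of Theorem~\ref{trans-elndt-to-negposelndt} with polynomial-size $\negposelndt$ proofs of $A \dseqar \negtrans A$ for positive, extension-free $A$, established by structural induction with the positive decision as the critical case, then cut to recover $\Gamma \seqar \Delta$. The only organizational difference is in the inductive step: the paper plugs the inductive hypothesis into the four truth conditions from \cref{Truth} and \cref{Truth-negtrans} and assembles them via cuts, whereas you first use \cref{refined-replacement} to eliminate the $\negtrans{(\cdot)}$ on the immediate subformulas and then prove the residual identity $\posdec B p C \dseqar \posdec 0 {\dual p} B \cor \posdec 0 p {(B\cor C)}$ directly by a case split on $p$ via the negation axioms. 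One small point you gloss over: \cref{refined-replacement} is stated for positive decisions whose decision variable is a propositional variable $p$, but you invoke it on $\posdec 0 {\dual p}{\negtrans B}$ with decision variable $\dual p$; this does hold in $\negposelndt$, but it requires observing that $\dual p \seqar \dual p$ is derivable from the initial sequents $\seqar p,\dual p$ and $p,\dual p\seqar$ by a cut on $p$, after which the same argument as in \cref{Truth} and \cref{refined-replacement} goes through. This is a benign gap, and the rest of your argument (including the size analysis) is sound.
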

\begin{proof}
 From \cref{trans-elndt-to-negposelndt} above, it suffices to derive $\Gamma \seqar \Delta$ from $\negtrans \Gamma \seqar \negtrans \Delta$. For this we shall give short proofs of,
 \begin{equation}
     A\ \dseqar \ \negtrans A
 \end{equation}
 when $A$ is positive and free of extension variables, whence $\Gamma \seqar \Delta $ follows from $\negtrans \Gamma \seqar \negtrans \Delta$ by several cuts.
We proceed by structural induction on $A$, for which the critical case is when $A$ is a decision formula.
We prove the two directions separately.

First, note that we have polynomial-size proofs of the following sequents:
\setlength{\jot}{0pt} 
\begin{align}
\posdec A p B & \ \seqar\  A,p & & \text{by \cref{Truth}.\cref{item:truth-posdec-implies-0case}}  \label{eq:posdec-implies-0negtrans} \\
\posdec A p B &\ \seqar\   A \lor  B & & \text{by \cref{Truth}.\cref{item:truth-posdec-implies-1case} and $\rigrul \lor$ }  \label{eq:posdec-implies-1negtrans} \\
A & \ \seqar \   \negtrans{(\posdec A p B )},p & & \text{by \cref{Truth-negtrans}.\cref{item:truth-negtrans-0case-implies-dec} and $\IH$} \label{eq:0case-implies-posdec-negtrans} \\
p,  A \lor  B &\ \seqar\   \negtrans{(\posdec A p B)} & & \text{by \cref{Truth-negtrans}.\cref{item:truth-negtrans-1case-implies-dec} and $\IH$} \label{eq:1case-implies-posdec-negtrans}
\end{align}

We arrange these into a proof of the left-right direction as follows:
\renewcommand{\storageone}{\cref{eq:posdec-implies-0negtrans}}
\renewcommand{\storagetwo}{\cref{eq:0case-implies-posdec-negtrans}}
\renewcommand{\storagethree}{\cref{eq:posdec-implies-1negtrans}}
\renewcommand{\storagefour}{\cref{eq:1case-implies-posdec-negtrans}}
\[
\vlderivation{
\vliin{p\text{-}\cut}{}{ \posdec { A} p { B} \seqar \negtrans{(\posdec A p B)} }{ 
    \vliin{ A\text{-}\cut }{}{\posdec { A} p { B} \seqar \negtrans{(\posdec A p B)} ,p }{
        \vlhy{\text{\storageone} \quad }
    }{
        \vlhy{\quad \text{\storagetwo}}
    }
}{
    \vliin{ A \lor  B\text{-}\cut}{}{ \posdec { A} p { B},p \seqar \negtrans{(\posdec A p B)} }{
        \vlhy{\text{\storagethree}\quad }
    }{
        \vlhy{\quad \text{\storagefour}}
    }
}
}
\]

Next, note that we have small proofs of the following sequents:
\setlength{\jot}{0pt} 
\begin{align}
A  & \ \seqar\ \posdec A p B  & & \text{by \cref{Truth}.\cref{item:truth-0case-implies-posdec}}  \label{eq:A-implies-posdec} \\
p,B  & \ \seqar\ \posdec A p B  & & \text{by \cref{Truth}.\cref{item:truth-1case-implies-posdec}}  \label{eq:xandB-implies-posdec} \\
\negtrans{(\posdec A p B )} & \ \seqar \  {A} , p & & \text{by \cref{Truth-negtrans}.\cref{item:truth-negtrans-dec-implies-0case} and $\IH$} \label{eq:negtransposdec-implies-negtransAorx} \\
 \negtrans{(\posdec A p B)} &\ \seqar\ {A\lor B}  & & \text{by \cref{Truth-negtrans}.\cref{item:truth-negtrans-dec-implies-1case} and $\IH$ } \label{eq:negtransposdecandx-implies-negtransB}
\end{align}

We arrange these into a proof of the left-right direction as follows:
\renewcommand{\storageone}{\cref{eq:A-implies-posdec}}
\renewcommand{\storagetwo}{\cref{eq:xandB-implies-posdec}}
\renewcommand{\storagethree}{\cref{eq:negtransposdec-implies-negtransAorx}}
\renewcommand{\storagefour}{\cref{eq:negtransposdecandx-implies-negtransB}}
\[
\vlderivation{
\vliin{p\text{-}\cut}{}{ \negtrans{\posdec { A} p { B}} \seqar \posdec A p B} { \vliin{A\text{-}\cut}{}{\negtrans{\posdec { A} p { B}} \seqar \posdec A p B,p }{
     \vlhy{ \text{\storagethree} \quad }  }{ \vlhy{\quad \text{\storageone} }}
}{
    \vliin{ A \lor  B\text{-}\cut}{}{\negtrans{\posdec { A} p { B}},p \seqar \posdec A p B }{
     \vlhy{ \text{\storagefour} \quad }  }{
     \vliin{\lefrul \lor}{}{A\lor B, p \seqar \posdec A p B}{\vlhy{\quad \text{\storageone} }}{\vlhy{\quad \text{\storagetwo} }}
     }
}
}\qedhere
\]
\end{proof}

\subsection{Generalised counting formulas}
\label{sec:ref-counting-formulas}
The argument of \cite{DBLP:journals/jcss/AtseriasGP02} relies heavily on substitution of formulas for variables in proofs of $\lk$.
Being based on usual Boolean formulae, this is entirely unproblematic in that setting, whereas in our setting we deal with extension variables that represent NBPs via extension axioms, and so handling substitutions is much more subtle and notationally heavy.

We avoid giving a uniform treatment of this, instead specialising to counting formulas, but we must nonetheless carefully give an appropriate mutually recursive formulation of formulas and extension variables.

\begin{definition}
[Threshold decisions]
\label{Thresh.Def}
We introduce extension variables $\refthr{\vec p}{k}{A}{B}$ for each list $\vec p$ of propositional variables, integer $k$, and formulas $A,B$.

We extend $\thrextaxs{}{}$ to include all extension axioms of the following form, with $\vec p, k,A,B$ ranging as just described:
\begin{equation}
  \label{refined-thresh-axioms}
    \begin{array}{r@{\ \extiff \  }ll}
        \refthr{\epsilon}{0}{A}{B} & A \lor B & \\
        \refthr{\epsilon }{k}{A}{B} & A & k\neq 0 \\
        \refthr{p\vec p}{k}{A}{B} & \posdec{\refthr{\vec p}{k}{A}{B}}{p}{\refthr{\vec p}{k-1}{A}{B} }
    \end{array}
\end{equation}
\end{definition} 

Note that, despite the presentation, $\refthr{\vec p}{k}{A}{B}$ is, formally speaking, a single extension variable, not a decision on the extension variable $\thr {\vec p} k$ which, recall, we do not permit.
This is why we use the square brackets to distinguish it from other formulas, though we shall justify this notation shortly.

One should view the extension variables above and the notion of an $\poselndt$ formula as being \emph{mutually} defined, so as to avoid foundational issues. 
For instance, we allow an extension variable $\refthr{\vec p} k C D$, where $C$ or $D$ may themselves contain extension variables of the form $\refthr{\vec q} k A B$.
By building up formulas and extension variables by mutual induction we ensure that such constructions are well-founded.
Let us briefly make this formal in the following remark:
\begin{remark}
[Well-foundedness of $\thrextaxs{}{}$]
\label{remark-on-well-foundedness-of-refthr}
We may define the extension variables $\refthr {\vec p} k A B$ and $\poselndt$ formulas `in stages' as follows:
\begin{itemize}
    \item Write $\Phi_0$ for the set of all $\poselndt$ formulas over some base set of extension variables $E_0 = \{e_{00}, e_{01}, \dots\}$.
    \item Write $T_n$ for the set of all extension variables $\thr{\vec p} k$ and of the form $\refthr{\vec p} k A B$ with $A,B \in \Phi_n$.
    \item Write $\Phi_{n+1}$ for the set of all formulas built from propositional variables, disjunctions, positive decisions and extension variables from $T_{n}$, $E_n$ and a fresh set of new extension variables $E_{n+1} = \{e_{(n+1)0}, e_{(n+1)1}, \dots \}$.
\end{itemize}
Within each $T_n$ we (partially well-)order extension variables by the length of the superscript $\vec p$, and within each $E_n$ we (well-)order the $e_{ni}$s by the subscript $i$.
Finally, we set $E_n < T_n < E_{n+1}$ (i.e.\, if $e \in E_n$, $ t \in T_n$ and $e' \in E_{n+1}$ then $e<t<e'$). 
In this way the extension axioms from \cref{refined-thresh-axioms} (and \cref{eq:thresh-ext-axs}) indeed satisfy the required well-foundedness criterion. 
We may also admit further extension axioms allowing elements of $E_n$ to abbreviate formulas in $\Phi_n$ (satisfying the indexing condition internal to $E_n$), preserving well-foundedness. 
We shall indeed do this later.

Note, however, that the required order type on indices of these extension variables, a priori, exceeds the ordinal $\omega$.  
This causes no issue for us since, in any finite proof, we will only use finitely many extension variables, and so may construe each index as a (relatively small) natural number while preserving the aforementioned order. 
We shall gloss over this issue in what follows.
\end{remark}

As previously mentioned, our notation $\refthr{\vec p }k A B$ is designed to be suggestive, justified by the following counterpart of the truth conditions from \cref{Truth}:

\begin{proposition}
[Truth conditions for threshold decisions]
\label{Truecond}
\label{Truth-refthr}
There are polynomial size $\poselndt$ proofs over $\thrextaxs{}{}$ of:
\begin{enumerate}
\item \label{item:truth-refthr-posdec-implies-0case}$ \refthr{\vec p}{k}{A}{B} \seqar A , \thr{\vec p}{k} $
    \item \label{item:truth-refthr-posdec-implies-1case}$  \refthr{\vec p}{k}{A}{B} \seqar A , B$
    \item \label{item:truth-refthr-0case-implies-posdec}$A \seqar \refthr{\vec p}{k}{A}{B}$
    \item \label{item:truth-refthr-1case-impliesposdec}$ \thr{\vec p}{k} , B \seqar \refthr{\vec p}{k}{A}{B} $.
\end{enumerate}

\end{proposition}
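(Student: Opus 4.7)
The plan is to prove all four items simultaneously by induction on the length of $\vec p$, uniformly in $k$ and in the parameter formulas $A,B$. This matches the mutual recursive structure of the extension axioms in \eqref{refined-thresh-axioms} and \eqref{eq:thresh-ext-axs}, and the four items correspond precisely to the four ``truth conditions'' for the positive decision obtained by unfolding $\refthr{p \vec p}{k}{A}{B}$ (paired against the unfolding of $\thr{p \vec p}{k}$). By \cref{remark-on-well-foundedness-of-refthr}, this induction is well-founded even though $A,B$ may themselves contain refined threshold variables. Polynomial size of all proofs will be immediate by inspection on the number of added lines at each inductive step, exactly as in the complexity bound of \cref{complexity-of-A-induction}.

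For the base case $\vec p = \emptylist$, I split on whether $k=0$. When $k=0$, the axioms reduce $\refthr{\emptylist}{0}{A}{B}$ to $A \cor B$ and $\thr{\emptylist}{0}$ to $1$; all four items then amount to trivial consequences of $\lefrul\cor$, $\rigrul\cor$, weakening, and the $1$-axiom, together with the general identity of \cref{prop:gen-identity} for $A$ and $B$. When $k \neq 0$, $\refthr{\emptylist}{k}{A}{B}$ unfolds to $A$ and $\thr{\emptylist}{k}$ to $0$; items \cref{item:truth-refthr-posdec-implies-0case}, \cref{item:truth-refthr-posdec-implies-1case}, \cref{item:truth-refthr-0case-implies-posdec} then reduce to $A \seqar A$ up to weakening, while \cref{item:truth-refthr-1case-impliesposdec} becomes $0,B \seqar A$, immediate from the $0$-axiom and weakening.

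For the inductive step, I unfold the outer extension variables via $\thrextaxs{}{}$ and reason against the resulting positive decisions. For \cref{item:truth-refthr-posdec-implies-0case} I apply \cref{refined-replacement} to the two instances of the inductive hypothesis $\refthr{\vec p}{k}{A}{B} \seqar A, \thr{\vec p}{k}$ and $\refthr{\vec p}{k-1}{A}{B} \seqar A, \thr{\vec p}{k-1}$; an analogous application of \cref{refined-replacement} handles \cref{item:truth-refthr-1case-impliesposdec} from the two instances of $\thr{\vec p}{k}, B \seqar \refthr{\vec p}{k}{A}{B}$. For \cref{item:truth-refthr-posdec-implies-1case} I apply $\lefrul{\pos p}$ to the inductive hypothesis at $k$ and $k-1$ (adding $p$ by $\lefrul\wk$ in the second premise). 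For \cref{item:truth-refthr-0case-implies-posdec} I cut the inductive hypothesis $A \seqar \refthr{\vec p}{k}{A}{B}$ against \cref{Truth}.\cref{item:truth-0case-implies-posdec} applied to $\posdec{\refthr{\vec p}{k}{A}{B}}{p}{\refthr{\vec p}{k-1}{A}{B}}$.

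I do not anticipate any serious obstacle: the induction decouples cleanly across the four items because each uses its own inductive hypotheses (possibly at two values of $k$), and the necessary ``lifting'' through the outermost positive decision is in every case handled either by \cref{refined-replacement} or by a direct application of the positive decision rules together with the truth lemma \cref{Truth}. The only point requiring mild care is maintaining the bookkeeping that at length $|\vec p|+1$ we invoke IH at both thresholds $k$ and $k-1$, so the induction is best phrased as establishing, at each length, the full family of sequents parametrised by $k$ (and by $A,B$) at once; polynomiality then follows because each inductive step contributes only constantly many new sequents of size polynomial in $|\vec p|$, $|A|$ and $|B|$.
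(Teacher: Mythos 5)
Your proposal takes essentially the same route as the paper: induction on the length of $\vec p$ (uniformly in $k$, $A$, $B$), unfolding $\refthr{p\vec p}{k}{A}{B}$ and $\thr{p\vec p}{k}$ via $\thrextaxs{}{}$, using \cref{refined-replacement} for items \cref{item:truth-refthr-posdec-implies-0case} and \cref{item:truth-refthr-1case-impliesposdec}, and applying the positive decision rules for items \cref{item:truth-refthr-posdec-implies-1case} and \cref{item:truth-refthr-0case-implies-posdec}. The only cosmetic differences are that you handle the $k\neq 0$ base case explicitly (the paper only writes out $k=0$) and that you obtain item \cref{item:truth-refthr-0case-implies-posdec} by cutting against \cref{Truth}.\cref{item:truth-0case-implies-posdec} instead of applying $\rigrul{\pos p}$ directly with weakenings, which amounts to the same thing.
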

\begin{proof}
 We proceed by induction on the length of $\vec p$.
For the base case when $\vec p = \emptylist$, we have the following proofs:
\[
\begin{array}{c@{\qquad}c}
    \vlderivation{
    \vlin{\lefrul \wk, \rigrul \wk}{}{\refthr{\emptylist}{0}{A}{B} \seqar A, \thr \emptylist 0  }{
    \vlin{\thrextaxs{}{}, \cut}{}{\seqar \thr \emptylist 0}{
    \vlin{1}{}{\seqar 1}{\vlhy{}}
    }
    }
}
&
\vlderivation{
    \vlin{\id, \lor , \cut}{}{\refthr{\emptylist}{0}{A}{B} \seqar A,B}{
    \vlin{\thrextaxs{}{}}{}{\refthr{\emptylist}{0}{A}{B} \seqar A \lor B}{\vlhy{}}
    }
}
\\
\vlderivation{
    \vlin{\thrextaxs{}{}, \cut}{}{A \seqar \refthr{\emptylist}{0}{A}{B} }{
    \vlin{\rigrul \wk, \rigrul \lor}{}{A \seqar A \lor B}{
    \vlin{\id}{}{A \seqar A}{\vlhy{}}
    }
    }
}
& 
\vlderivation{
    \vlin{\thrextaxs{}{}, \cut}{}{\thr \emptylist 0 , B \seqar \refthr{\emptylist}{0}{A}{B}}{
    \vlin{\lefrul \wk, \rigrul \wk, \rigrul \lor}{}{\thr \emptylist 0 , B \seqar A \lor B}{
    \vlin{\id}{}{B \seqar B}{\vlhy{}}
    }
    }
}
\end{array}
\]

For the inductive step for \cref{item:truth-refthr-posdec-implies-0case} we derive the following sequents:
\[
\begin{array}{r@{\ \seqar \ }ll}
    \refthr{\vec p} k A B & A, \thr{\vec p} k & \text{by $\IH$}  \\
    \refthr{\vec p} {k-1} A B & A, \thr{\vec p}{k-1} & \text{by $\IH$}  \\
    \posdec{\refthr{\vec p} k A B } p { \refthr{\vec p}{k-1} A B } & A, \posdec{\thr{\vec p}k }p{\thr{\vec p} {k-1} } & \text{by \cref{refined-replacement} } \\
    \refthr{p\vec p} k A B & A, \thr{p \vec p} k & \text{by $\thrextaxs{}{}$ and $2\cut$}
\end{array}
\]

For the inductive step for \cref{item:truth-refthr-posdec-implies-1case} we derive the following sequents:
\[
\begin{array}{r@{\ \seqar \ }ll}
    \refthr{\vec p} {k} A B & A,B & \text{by $\IH$}\\
    p,\refthr{\vec p} {k-1} A B & A,B & \text{by $\IH$ and $\lefrul\wk$}\\
    \posdec{\refthr{\vec p} k A B }p{\refthr{\vec p}{k-1} A B } & A,B & \text{by $\lefrul{\pos p}$} \\
    \refthr{p\vec p} k A B & A,B & \text{by $\thrextaxs{}{}$ and $\cut$}
\end{array}
\]

For the inductive step for \cref{item:truth-refthr-0case-implies-posdec} we derive the following sequents:
\[
\begin{array}{r@{\ \seqar \ }ll}
    A & \refthr{\vec p} k A B , p & \text{by $\IH$ and $\rigrul \wk$}\\
    A & \refthr{\vec p} k A B , \refthr{\vec p}{k-1} A B & \text{by $\IH$ and $\rigrul \wk$}\\
    A & \posdec{\refthr{\vec p} k A B }p{\refthr{\vec p}{k-1} A B} & \text{by $\rigrul{\pos p}$ } \\
    A & \refthr{p\vec p} k A B & \text{by $\thrextaxs{}{}$ and $\cut$}
\end{array}
\]

For the inductive step for \cref{item:truth-refthr-1case-impliesposdec} we derive the following sequents:
\[
\begin{array}[b]{r@{\ \seqar \ }ll}
    \thr{\vec p} k , B & \refthr{\vec p} k A B & \text{by $\IH$}\\
    \thr{\vec p}{k-1},B & \refthr{\vec p}{k-1} A B & \text{by $\IH$}\\
    \posdec{ \thr{\vec p} k } p { \thr{\vec p}{k-1} } , B & \posdec{\refthr{\vec p} k A B }p{ \refthr{\vec p}{k-1} A B } & \text{by \cref{refined-replacement}} \\
    \thr{p\vec p}k , B & \refthr{p\vec p} k A B & \text{by $\thrextaxs{}{}$ and $2\cut$} \tag*{\qedhere}
\end{array}
\]
\end{proof}

\subsection{`Substituting' thresholds for negative literals}
\label{sec:subst-thr-for-neg-lits}
For the remainder of this section, let us work with a fixed $\negposelndt$ proof $P$, over extension axioms $\mathcal A = \{ e_i \extiff A_i(e_0, \dots, e_{i-1})  \}_{i<n}$, of a positive sequent $\Gamma \seqar \Delta$ containing propositional variables among $\vec p = p_0, \dots, p_{m-1}$ and their duals $\dual {\vec p} = \dual p_0, \dots, \dual p_{m-1}$, and extension variables among $\vec e = e_0, \dots, e_{n-1}$.

Recall that, since we are eventually trying to give a polynomial simulation of $\elndt$ by $\poselndt$ over positive sequents, our consideration of $\negposelndt$ here suffices, by \cref{negposelndt-psims-elndt-pos-seqs}.
We shall also work with the extension axioms $\thrextaxs{}{}$ from the previous subsection, and will soon explain its interaction with $\mathcal A$ from $P$.

Throughout this section, we shall write $\ppdel i$ for $ p_0, \dots, p_{i-1}  , p_{i+1}, \dots, p_{m-1} $, i.e.\ $\vec p$ with the variable $p_i$ removed. 

We shall define yet another intermediary system $\tposelndt k$, or rather a family of such systems, one for each $k\geq 0$. 
Before that, we need to introduce the following translation of formulas.

\begin{definition}
[`Substituting' thresholds]
We define a (polynomial-time) translation from an $\negposelndt$ formula $A$ (over $\vec p$, $\dual{\vec p}$ and $\vec e$) to an $\poselndt$ formula $\ttrans k A$ (over $\vec p$, some extension variables $\vec e^k$ and extension variables from $\thrextaxs{}{}$) as follows:
\begin{itemize}
\item $\ttrans k 0 := 0$
\item $\ttrans k 1 := 1$
    \item $\ttrans k {p_i} := {p_i}$
    \item $\ttrans k {\dual p_i} :=     \thr{\ppdel i}{k} $
    \item $\ttrans k {e_i } $ is a fresh extension variable. 
    \item $\ttrans k {(A\lor B)} := \ttrans k  A \lor \ttrans k  B$
    \item $\ttrans k {(\posdec A {p_i} B)}:= \posdec {\ttrans k A} {p_i} {\ttrans k B} $
    \item $\ttrans k {(\posdec A {\dual p_i} B )} := \refthr{\ppdel i  }{k}{\ttrans k A}{\ttrans k B} $
\end{itemize}
We also define $\ttrans k {\mathcal A} := \{ \ttrans k {e_i} \extiff \ttrans k {A_i} (\ttrans k {e_0},\dots, \ttrans k {e_{i-1}} )  \}_{i<n} $, 
and if $\Gamma = B_1, \dots, B_l$ we write $\ttrans k \Gamma$ for $\ttrans k {B_1}, \dots, \ttrans k {B_l}$.

\end{definition}

While this translation, and the threshold decisions themselves, may seem syntactically heavy, at the level of branching programs the idea is simple: 
the NBP represented by $\ttrans k A$ is obtained by substituting the NBP represented by $\thr{\ppdel i}{k}$ for each node labelled by $\dual p_i $ in the NBP represented by $A$. 
This is visualised in \cref{fig:blobs}.

 \begin{figure}[t]
$$
A: \quad 
\raisebox{-0.5\height}{
\begin{tikzpicture}[scale=1]

\foreach \pos/\name/\disp in {
  {(.28,3.22)/1/},
  {(-.88,2.88)/2/$\dual p_i$},
  {(.2,2)/3/$\dual p_i$},
  {(.88,.88)/10/},
  {(-.28,3.88)/6/ },
  {(-.08,.88)/8/}}
\node[minimum size=20pt,inner sep=0pt] (\name) at \pos {\disp};

 \draw [->][thick,dotted](1) to (3);
 \draw [->][thin](3) to (10);
    \draw [->][thin](2) to (3);
     \draw
    [->][thick,dotted](3) to (8);
 \path[draw,use Hobby shortcut,closed=true]
(0,0) .. (1,1) .. (1,3) .. (-.8,4) .. (-1.8,2) .. (-1.88,.8) .. (-1.6,.8);

\end{tikzpicture}
}
\qquad \mapsto \qquad 
\ttrans k A : \quad 
\raisebox{-0.5\height}{

\begin{tikzpicture}[scale=1]
 \foreach \pos/\name/\disp in {
  {(5.28,3.22)/4/},
  {(4.12,2.88)/5/},
  {(4.12,2.88)/5/$\thr{\ppdel i}{k}$},
  {(5.2,2)/6/},
  {(4.12,2.88)/12/},
  {(4.68,3.88)/80/ },
  {(4.88,.08)/86/},
  {(4.948,1.42)/84/},
  {(5.948,.28)/85/},
  {(5.12,1.68)/82/$\thr{\ppdel i}{k}$}}
\node[minimum size=20pt,inner sep=0pt] (\name) at \pos {\disp};
  \draw [->][thick,dotted](4) to (6);
    \draw [->][thin](5) to (6);
     \draw [->][thick,dotted](84) to (86);
      \draw [->][thin](84) to (85);
  
 \path[draw,use Hobby shortcut,closed=true]
(5,0) .. (6,1) .. (6,3) .. (4.2,4) .. (3.2,2) .. (3.12,.8) .. (3.4,.8);

\path[draw,use Hobby shortcut,closed=true]
(4.98,2.2) .. (5.62,2.2) .. (5.68,1.88) .. (4.68,1.4) ;

\path[draw,use Hobby shortcut,closed=true]
(4,3.4) .. (4.64,3.4) .. (4.7,3.08) .. (3.7,2.7) ;
\end{tikzpicture}
}
$$
\caption{Visualising the $\ttrans k \cdot$-translation: the NBP represented by $\ttrans k A$ is obtained by substituting $\thr{\ppdel i}{k}$ for $\dual p_i$ in the NBP represented by $A$.}
\label{fig:blobs}
\end{figure}

In what follows, we shall work with the set of extension axioms $\thrextaxs{}{} \cup \ttrans k {\mathcal A}$, so let us take a moment to justify that this set of extension axioms is indeed well-founded.
\begin{remark}
[Well-foundedness, again]
Following on from \cref{remark-on-well-foundedness-of-refthr}, 
well-foundedness of $\thrextaxs{}{} \cup \ttrans k {\mathcal A}$ follows from a suitable indexing of the extension variables therein. 
To this end we assign `stages' to each formula $\ttrans k A$ by $\mathcal A$-induction on $A$, using the notation of \cref{remark-on-well-foundedness-of-refthr}:
\begin{itemize}
    \item $\ttrans k {p_i}, \ttrans k 0 , \ttrans k 1  \in \Phi_0$.
    \item $\ttrans k {\dual p_i}  \in T_0$.
    \item $\ttrans k {e_i} \in E_{m} \subseteq \Phi_m$ if $\ttrans k {A_i} (\ttrans k {e_0}, \dots, \ttrans k {e_{i-1}} ) \in \Phi_m$, with index $i$ (i.e., $\ttrans k {e_i}$ is $e_{mi}$).
    \item $\ttrans k {(A \lor B)} \in \Phi_m$ if $\ttrans k A ,\ttrans k B \in \Phi_m$.
    \item $\ttrans k {(\posdec A {p_i} B)} \in \Phi_m $ if $\ttrans k A ,\ttrans k B \in \Phi_m$.
    \item $\ttrans k {(\posdec A {\dual p_i} B )} \in T_{m} \subseteq \Phi_{m+1}$ if $\ttrans k A , \ttrans k B \in \Phi_m$.
\end{itemize}
Note in particular that stages can grow even for formulas free of $\ttrans k {e_i}$ since we may have nested decisions on negated variables $\dual p_i$.

Once again, in terms of proof complexity, we will gloss over this subtlety and simply count the number of propositional and extension variable occurrences in a proof, assuming that each variable can be equipped with a `small' index.
\end{remark}

We are now ready to define our intermediary systems.

\begin{definition}
\label{tposelndt-definition}
The system $\tposelndt k$ is defined just like $\poselndt$, but includes additional initial sequents,
\begin{equation}
    \label{eq:threshold-initial-sequents}
    \vlinf{\lefrul t}{}{p_i, \thr{\ppdel i}{k} \seqar }{}
\qquad
\vlinf{\rigrul t}{}{\seqar p_i, \thr{\ppdel i}{k}  }{}
\end{equation}
and may only use the extension axioms $\thrextaxs{}{} \cup \ttrans k {\mathcal A}$.
\end{definition}

\begin{lemma}
\label{k-trans-result}
There is an $\tposelndt k$ proof of $   \Gamma \seqar  \Delta $ of size polynomial in $|P|$.
\end{lemma}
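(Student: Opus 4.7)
The plan is to proceed by induction on the length of $P$, building the required $\tposelndt k$ proof by applying $\ttrans k \cdot$ homomorphically to every sequent in $P$ and locally repairing each inference. Most rules translate directly: identity, $0$-, $1$-, weakening, contraction, cut, and $\cor$ steps all commute with $\ttrans k \cdot$; the extension axioms $\mathcal A$ translate to $\ttrans k {\mathcal A}$, which is available in $\tposelndt k$ by \cref{tposelndt-definition}; and positive decision steps on a positive literal $p_i$ translate directly, since $\ttrans k {(\posdec A {p_i} B)} = \posdec{\ttrans k A}{p_i}{\ttrans k B}$.

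Two cases require more care. First, the negation axioms $p_i, \dual p_i \seqar$ and $\seqar p_i, \dual p_i$ of $\negposelndt$ translate to $p_i, \thr{\ppdel i}{k} \seqar$ and $\seqar p_i, \thr{\ppdel i}{k}$ respectively, which are precisely the new initial sequents $\lefrul t$ and $\rigrul t$ admitted in $\tposelndt k$---this is the entire raison d'être of those axioms. Second, positive decision rules on a negative literal $\dual p_i$ are non-trivial because $\ttrans k{(\posdec A {\dual p_i} B)} = \refthr{\ppsub i 0}{k}{\ttrans k A}{\ttrans k B}$ is no longer a decision. To simulate a $\lefrul{\pos{\dual p_i}}$ step with premisses $\Gamma, A \seqar \Delta$ and $\Gamma, \dual p_i, B \seqar \Delta$, we take the inductively translated premisses $\ttrans k \Gamma, \ttrans k A \seqar \ttrans k \Delta$ and $\ttrans k \Gamma, \thr{\ppdel i}{k}, \ttrans k B \seqar \ttrans k \Delta$, and cut them against items \cref{item:truth-refthr-posdec-implies-0case} and \cref{item:truth-refthr-posdec-implies-1case} of \cref{Truth-refthr} to derive $\ttrans k \Gamma, \refthr{\ppsub i 0}{k}{\ttrans k A}{\ttrans k B} \seqar \ttrans k \Delta$. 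The right-decision case is dual, using items \cref{item:truth-refthr-0case-implies-posdec} and \cref{item:truth-refthr-1case-impliesposdec} of \cref{Truth-refthr}.

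The main obstacle is a minor bookkeeping mismatch: the truth conditions for $\refthr{}{}{}{}$ produce $\thr{\ppsub i 0}{k}$, whereas the inductive translation of $\dual p_i$ on the other premiss furnishes $\thr{\ppdel i}{k}$. These are semantically equivalent, since a $0$ entry cannot contribute to a threshold, so we shall need a short auxiliary interderivation $\thr{\ppsub i 0}{k} \dseqar \thr{\ppdel i}{k}$ of polynomial size over $\thrextaxs{}{}$, obtained either by admitting the natural degenerate axioms $\thr{0 \vec q}{k} \extiff \thr{\vec q}{k}$ (and similarly for $\refthr{}{}{}{}$), or by explicitly unfolding the $0$-slot using \cref{Truth} together with \cref{monotonicity-of-threshold-subscripts}.\cref{item:thr-k+1-implies-thr-k}. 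Once this ancillary lemma is in hand, each rule of $P$ is translated into a constant number of new inferences of polynomial size, and a routine accounting (cf.~\cref{complexity-of-A-induction}) shows that the resulting $\tposelndt k$ proof has size polynomial in $|P|$.
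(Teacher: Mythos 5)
Your construction follows essentially the same plan as the paper: translate every formula occurrence homomorphically, observe that structural rules, $\cor$-rules, decision rules on positive literals, and the extension axioms translate line-for-line (with general identity from \cref{prop:gen-identity} needed where a literal occurrence is translated to an extension variable), translate $\lefrul\neg,\rigrul\neg$ to $\lefrul t, \rigrul t$, and simulate decisions on $\dual p_i$ by cutting the translated premisses against the four truth conditions of \cref{Truth-refthr}. The cut structure you describe in the $\lefrul{\pos{\dual p_i}}$ case is the same three-cut derivation (on $\ttrans k B$, $\ttrans k{\dual p_i}$, $\ttrans k A$) the paper gives.

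The \emph{bookkeeping mismatch} you flag is real and is, as far as I can tell, simply an error in the paper's translation clause. You observe that \cref{Truth-refthr}.\cref{item:truth-refthr-posdec-implies-0case} applied to $\ttrans k{(\posdec A{\dual p_i}B)} = \refthr{\ppsub i 0}{k}{\ttrans k A}{\ttrans k B}$ yields $\thr{\ppsub i 0}{k}$ on the right, whereas the translated other premiss supplies $\ttrans k{\dual p_i} = \thr{\ppdel i}{k}$. But note that your proposed repairs are both problematic as stated: $\thr{\ppsub i 0}{k}$ is not a legal extension variable in the framework, since \cref{Defin.thresh.} and \cref{Thresh.Def} only introduce $\thr{\vec p}{k}$ and $\refthr{\vec p}{k}{A}{B}$ for $\vec p$ a list of \emph{propositional variables}, and the unfolding axiom $\thr{p\vec p}{k} \extiff \posdec{\thr{\vec p}{k}}{p}{\thr{\vec p}{k-1}}$ requires a variable, not a constant, as the decision variable. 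So there is no interderivation to prove, nor can you ``unfold the $0$-slot'' with \cref{Truth}, because the $0$-superscript formula is not in the language. The correct repair is simply to read $\ppdel i$ in place of $\ppsub i 0$ in the translation clause for $\posdec A{\dual p_i}B$, i.e.\ $\ttrans k {(\posdec A{\dual p_i}B)} := \refthr{\ppdel i}{k}{\ttrans k A}{\ttrans k B}$; then the truth conditions match $\ttrans k{\dual p_i}$ exactly and no auxiliary lemma is needed. Modulo this small point your proof is correct.
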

\begin{proof}
We construct the required proof $\ttrans k P$ by replacing every formula occurrence $A$ in $P$ by $\ttrans k A$. Note that all structural steps, identities, cuts remain correct.
An extension axiom for $e_i$ from $\mathcal A$ is just translated to the corresponding extension axiom for $\ttrans k {e_i}$ from $\ttrans k {\mathcal A}$, and the initial sequents $\lefrul \neg$ and $\rigrul \neg$ from $\negposelndt$ are translated to the two new initial sequents $\lefrul t$ and $\rigrul t$, respectively, from 
\cref{eq:threshold-initial-sequents} above.
It remains to simulate the logical steps.

The simulation of $\lor$ steps is immediate, since the $\ttrans k \cdot$-translation commutes with $\lor$.
Similarly for positive decisions on $p_i$.
A left positive decision step on $\dual p_i$,
\[
\vliinf{\lefrul{\pos{\dual p_i}}}{}{ \Gamma, \posdec A {\dual p_i} B \seqar \Delta }{\Gamma, A \seqar \Delta }{\Gamma, \dual p_i, B \seqar \Delta}
\]
is translated to the following derivation:
\renewcommand{\storageone}{\cref{Truth-refthr}.\cref{item:truth-refthr-posdec-implies-0case}}
\renewcommand{\storagetwo}{\cref{Truth-refthr}.\cref{item:truth-refthr-posdec-implies-1case}}

\[
\small
\vlderivation{
    \vliin{\cut}{}{ \ttrans k \Gamma, \ttrans k {(\posdec A {\dual p_i} B)} \seqar \ttrans k \Delta  }{
        \vliin{\cut}{}{\ttrans k \Gamma, \ttrans k {(\posdec A {\dual p_i} B)} \seqar \ttrans k \Delta, \ttrans k A }{
            \vliq{}{}{\ttrans k {(\posdec A {\dual p_i} B)} \seqar \ttrans k A, \ttrans k {\dual p_i} }{\vlhy{\text{\storageone}}}
        }{
            \vliin{\cut}{}{ \ttrans k \Gamma, \ttrans k {(\posdec A {\dual p_i} B)} , \ttrans k {\dual p_i} \seqar \ttrans k \Delta, \ttrans k A }{
                \vliq{}{}{ \ttrans k {(\posdec A {\dual p_i} B)} \seqar \ttrans k A , \ttrans k B }{ \vlhy{\text{\storagetwo}} }
            }{
                \vlhy{\ttrans k \Gamma, \ttrans k {\dual p_i}, \ttrans k B \seqar \ttrans k \Delta}
            }
        }
    }{
        \vlhy{\ttrans k \Gamma, \ttrans k A \seqar \ttrans k \Delta}
    }
}
\]
A right positive decision rule on $\dual p_i$,
\renewcommand{\storageone}{\cref{Truth-refthr}.\cref{item:truth-refthr-1case-impliesposdec}}
\renewcommand{\storagetwo}{\cref{Truth-refthr}.\cref{item:truth-refthr-0case-implies-posdec}}
\[
\vliinf{\rigrul{\pos{\dual p_i}}}{}{ \Gamma \seqar \Delta, \posdec A {\dual p_i} B }{
    \Gamma \seqar \Delta, A, \dual p_i
}{
    \Gamma \seqar \Delta, A, B
}
\]
is translated to the following derivation:
\[
\small
\vlderivation{
\vliin{\cut}{}{\ttrans k \Gamma \seqar \ttrans k \Delta, \ttrans k {\posdec A {\dual p_i} B}}{
    \vliin{\cut}{}{ \ttrans k \Gamma \seqar \ttrans k \Delta, \ttrans k A, \ttrans k {\posdec A {\dual p_i} B} }{
        \vlhy{\ttrans k \Gamma \seqar \ttrans k \Delta, \ttrans k A, \ttrans k {\dual p_i} }
    }{
        \vliin{\cut}{}{ \ttrans k \Gamma , \ttrans k {\dual p_i} \seqar \ttrans k \Delta , \ttrans k A, \ttrans k {\posdec A {\dual p_i} B}  }{
            \vlhy{\ttrans k \Gamma \seqar \ttrans k \Delta, \ttrans k A, \ttrans k B}
        }{
            \vliq{}{}{ \ttrans k {\dual p_i}, \ttrans k B \seqar \ttrans k {\posdec A {\dual p_i} B} }{\vlhy{\text{\storageone}}}
        }
    }
}{
    \vliq{}{}{\ttrans k A \seqar \ttrans k {\posdec A {\dual p_i} B}}{\vlhy{\text{\storagetwo}}}
}
}
\qedhere
\]
\end{proof}

\subsection{Putting it all together}
\label{sec:main-res-proof}
We are now ready to assemble the proof our main simulation result, \cref{poselndt-psims-elndt-pos-seqs}.
Recall that we are still working with the fixed $\negposelndt$ proof $P$ of a positive sequent $\Gamma \seqar \Delta$ from \cref{sec:subst-thr-for-neg-lits}, over extension axioms $\mathcal A = \{ e_i \extiff A_i\}_{i<n} $ and propositional variables $\vec p = p_0, \dots, p_{m-1}$.
We continue to write $\ppdel i$ for $p_0, \dots, p_{i-1},p_{i+1}, \dots, p_{m-1}$, i.e.\ $\vec p$ with $p_i$ removed.

\begin{proposition}
\label{thresh-increment}
For $k\geq 0$, there are polynomial size $\poselndt$ proofs of,

\setlength{\jot}{0pt}
\begin{align}
    p_i , \thr {\ppdel i } k  \seqar & \thr{\vec p}{k+1} \label{eq:thresh-inctrement-left} \\
    \thr{\vec p} k \seqar & p_i , \thr{\ppdel i} k \label{eq:thresh-increment-right}
\end{align}
over extension axioms $\thrextaxs{}{}$.
\end{proposition}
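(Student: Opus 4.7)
The plan is to reduce both sequents, via the symmetry of threshold formulas (\cref{thr-symmetric-permutations}), to the case where $p_i$ occupies the head of the list, and then read off each direction from the extension axioms $\thrextaxs{}{}$ together with the appropriate truth conditions (\cref{Truth}).

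More concretely, let $\vec q = \ppdel i$. By \cref{thr-symmetric-permutations} we have polynomial-size $\poselndt$ proofs of
\[
\thr{\vec p}{k} \ \dseqar \ \thr{p_i \vec q}{k}
\qquad\text{and}\qquad
\thr{\vec p}{k+1} \ \dseqar \ \thr{p_i \vec q}{k+1}
\]
over $\thrextaxs{}{}$. Moreover, by the extension axioms we have
\[
\thr{p_i \vec q}{k+1}\ \dseqar \ \posdec{\thr{\vec q}{k+1}}{p_i}{\thr{\vec q}{k}}
\qquad\text{and}\qquad
\thr{p_i \vec q}{k}\ \dseqar\  \posdec{\thr{\vec q}{k}}{p_i}{\thr{\vec q}{k-1}}.
\]
Thus it suffices to derive, with polynomial overhead, the `head-normalised' versions of the two sequents, namely
\[
p_i, \thr{\vec q}{k} \ \seqar \ \posdec{\thr{\vec q}{k+1}}{p_i}{\thr{\vec q}{k}}
\qquad\text{and}\qquad
\posdec{\thr{\vec q}{k}}{p_i}{\thr{\vec q}{k-1}} \ \seqar \ p_i, \thr{\vec q}{k}
\]
and then compose with the equivalences above via a constant number of cuts.

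For the first head-normalised sequent I would apply \cref{Truth}.\cref{item:truth-1case-implies-posdec} directly, which yields precisely $p_i, \thr{\vec q}{k} \seqar \posdec{\thr{\vec q}{k+1}}{p_i}{\thr{\vec q}{k}}$ in constant size. For the second I would apply \cref{Truth}.\cref{item:truth-posdec-implies-0case}, which yields $\posdec{\thr{\vec q}{k}}{p_i}{\thr{\vec q}{k-1}} \seqar \thr{\vec q}{k}, p_i$ in constant size. Composing each with the two equivalences above then gives the required $\poselndt$ proofs of \eqref{eq:thresh-inctrement-left} and \eqref{eq:thresh-increment-right}.

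As for proof size, the only potentially non-constant contribution comes from the invocation of \cref{thr-symmetric-permutations}, which is already polynomial; everything else is a bounded number of cuts and applications of \cref{Truth} and of extension axioms from $\thrextaxs{}{}$. Since no step requires induction on $\vec p$ or $k$, I do not expect any genuine obstacle here — the proposition is essentially a packaged corollary of symmetry plus the truth conditions. The only mild care needed is to ensure that the symmetry step handles the position of $p_i$ uniformly for arbitrary $i$, but this is exactly what \cref{thr-symmetric-permutations} provides.
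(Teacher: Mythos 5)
Your proposal is correct, and it takes a genuinely different (and arguably more direct) route than the paper. Both arguments begin by commuting $p_i$ to the head of the list — you invoke \cref{thr-symmetric-permutations}, while the paper invokes \cref{thresh-case-analysis}, but for a single transposition these amount to the same thing. Where you diverge is in what happens once $p_i$ is at the head: the paper reduces to the merging/splitting lemma (\cref{further-counting-results-splitting-and-merging}), combined with the equivalence $p_i \dseqar \thr{p_i}{1}$ from \cref{eq:p-implies-thrp1}. Concretely, for \eqref{eq:thresh-inctrement-left} the paper derives $\thr{p_i}{1}, \thr{\ppdel i}{k} \seqar \thr{p_i \ppdel i}{k+1}$ by merging and then cuts against $p_i \seqar \thr{p_i}{1}$; for \eqref{eq:thresh-increment-right} it uses splitting $\thr{p_i \ppdel i}{k} \seqar \thr{p_i}{1}, \thr{\ppdel i}{k}$ and cuts against $\thr{p_i}{1} \seqar p_i$. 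You instead unfold one step of the extension axiom $\thrextaxs{}{}$, rewriting $\thr{p_i \vec q}{k+1}$ and $\thr{p_i \vec q}{k}$ as positive decisions, and then appeal directly to the truth conditions of \cref{Truth} (items \cref{item:truth-1case-implies-posdec} and \cref{item:truth-posdec-implies-0case}). This bypasses \cref{further-counting-results-splitting-and-merging} entirely and keeps all the non-trivial proof-size cost in the single symmetry step; the remaining steps are all constant-size. Both approaches are polynomial, but yours is a cleaner `local' argument exploiting that a one-variable increment of threshold is literally a positive decision under the extension axioms, whereas the paper's version is more in the spirit of a global `count and recombine' argument. Either would serve.
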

\begin{proof}

We derive \cref{eq:thresh-inctrement-left} as follows:
\[
\begin{array}{r@{\ \seqar \ }ll}
    \thr{p_i} 1 , \thr{\ppdel i}{k+1} & \thr{p_i\ppdel i}{k+1} &  \text{by \cref{further-counting-results-splitting-and-merging}.\cref{merging} } \\
    p_i, \thr{\ppdel i }k & \thr{p_i\ppdel i}{k+1}  &\text{by \cref{eq:p-implies-thrp1} and $\cut$} \\
     & \thr{\vec p}{k+1} & \text{by \cref{thresh-case-analysis} and $\cut$}
\end{array}
\]
We derive \cref{eq:thresh-increment-right} as follows:
\[
\begin{array}[b]{r@{\ \seqar \ }ll}
    \thr{\vec p} k & \thr{p_i \ppdel i} k & \text{by \cref{thresh-case-analysis}} \\
    & \thr{p_i} 1 , \thr{\ppdel i} k & \text{by \cref{further-counting-results-splitting-and-merging}.\cref{splitting} and $\cut$} \\
    &  p_i ,\thr{\ppdel i}k & \text{by \cref{eq:p-implies-thrp1} and $\cut$} \tag*{\qedhere}
\end{array}
\]
\end{proof}

\begin{lemma}
\label{simulation:k-true-implies-k+1-true}
For $k\geq 0$, there are polynomial size $\poselndt$ proofs of,
$$\thr {\vec p} k , \Gamma \seqar \Delta, \thr {\vec p} {k+1}$$ 
over extension axioms $\thrextaxs{}{} \cup \ttrans k {\mathcal A} $.
\end{lemma}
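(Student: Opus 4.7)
The plan is to lift the $\tposelndt k$ proof given by \cref{k-trans-result} to a $\poselndt$ proof, by uniformly augmenting every sequent in it with $\thr{\vec p}{k}$ on the left and $\thr{\vec p}{k+1}$ on the right, so that the two `problematic' threshold initial sequents \eqref{eq:threshold-initial-sequents} of $\tposelndt k$ become provable in $\poselndt$. This is directly analogous to how Atserias--Galesi--Pudl\'ak absorb their counting axioms.

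First, I would apply \cref{k-trans-result} to the fixed $\negposelndt$ proof $P$ (obtained via \cref{negposelndt-psims-elndt-pos-seqs} and \cref{trans-elndt-to-negposelndt}) to get a $\tposelndt k$ proof $\ttrans k P$ of $\Gamma \seqar \Delta$ of polynomial size over $\thrextaxs{}{} \cup \ttrans k {\mathcal A}$. Then I would form a new derivation $Q$ by replacing every sequent $\Sigma \seqar \Pi$ occurring in $\ttrans k P$ with $\thr{\vec p}{k}, \Sigma \seqar \Pi, \thr{\vec p}{k+1}$. I would then check, rule by rule, that $Q$ is a legal $\poselndt$ derivation, which reduces to verifying each rule instance and each initial sequent of the augmented form can be derived, with only constant overhead, in $\poselndt$ over $\thrextaxs{}{} \cup \ttrans k {\mathcal A}$.

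The routine cases are immediate: the $\poselndt$ logical rules ($\lefrul{\pos p}$, $\rigrul{\pos p}$, $\lefrul\cor$, $\rigrul\cor$), structural rules, $\cut$, and the initial sequents $\id$, $0$, $1$ all commute with uniform context augmentation, and the extension axioms of $\thrextaxs{}{} \cup \ttrans k {\mathcal A}$ are preserved under weakening. The only nontrivial cases are the two threshold initial sequents of $\tposelndt k$, which become
\[
\thr{\vec p}{k}, p_i, \thr{\ppdel i}{k} \seqar \thr{\vec p}{k+1}
\qquad\text{and}\qquad
\thr{\vec p}{k} \seqar p_i, \thr{\ppdel i}{k}, \thr{\vec p}{k+1}.
\]
Both follow from \cref{thresh-increment}, specifically \eqref{eq:thresh-inctrement-left} and \eqref{eq:thresh-increment-right} respectively, by weakening. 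Each of these has a polynomial-size $\poselndt$ proof over $\thrextaxs{}{}$, so plugging them in at each of the (at most $|\ttrans k P|$ many) occurrences of $\lefrul t$ and $\rigrul t$ contributes only a polynomial overhead.

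There is no real obstacle here, since the groundwork has been laid by \cref{thresh-increment}; the `hardest' step is just ensuring that the uniform augmentation is faithful, in particular that the extension axioms from $\thrextaxs{}{} \cup \ttrans k {\mathcal A}$ carry over correctly under the addition of the thresholds as extra context. Since extension axioms are construed as pairs of sequents $A_i \seqar \ttrans k {A_i}$ and $\ttrans k {A_i} \seqar A_i$ (and similarly for $\thrextaxs{}{}$), adding the same cedents to both sides merely requires weakenings to recover them from the original axioms. The conclusion of $Q$ is then exactly $\thr{\vec p}{k}, \Gamma \seqar \Delta, \thr{\vec p}{k+1}$, and $|Q|$ is polynomial in $|\ttrans k P|$, hence in $|P|$, as required.
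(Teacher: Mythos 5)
Your proposal is correct and follows essentially the same route as the paper: augment every sequent of $\ttrans k P$ (from \cref{k-trans-result}) with $\thr{\vec p}{k}$ on the left and $\thr{\vec p}{k+1}$ on the right, observe that all $\poselndt$ rules and extension axioms survive the uniform augmentation up to weakenings, and discharge the two $\tposelndt k$ initial sequents $\lefrul t$, $\rigrul t$ using \eqref{eq:thresh-inctrement-left} and \eqref{eq:thresh-increment-right} from \cref{thresh-increment}.
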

\begin{proof}
By \cref{k-trans-result}, we already have a polynomial-size proof $\tposelndt k$ proof $\ttrans k P$ of $\Gamma \seqar \Delta$.
By definition of $\tposelndt k$, we construe $\ttrans k P$ as an $\poselndt$ derivation of $\Gamma \seqar \Delta$ over extension axioms $ \thrextaxs{}{} \cup \ttrans k {\mathcal A}$ from hypotheses:
\setlength{\jot}{0pt}
\begin{align}
    p_i, \thr{\ppdel i} k \seqar & \label{eq:thr-id-hyp-left}  \\
    \seqar & p_i , \thr{\ppdel i} k \label{eq:thr-id-hyp-right}
\end{align}
We obtain the required proof by adding $\thr{\vec p} k$ to the LHS of each sequent and $\thr{\vec p}{k+1}$ to the RHS of each sequent in $\ttrans k P$.
Each local inference step remains correct, except that some weakenings may be required to repair initial steps.
Finally we replace occurrences of the hypotheses \cref{eq:thr-id-hyp-left} and \cref{eq:thr-id-hyp-right} above by the proofs of \cref{eq:thresh-inctrement-left} and \cref{eq:thresh-increment-right} respectively from \cref{thresh-increment}.
\end{proof}

We are now ready to prove our main result, that $\poselndt$ polynomially simulates $\elndt$ over positive sequents:
\begin{proof}
[Proof of \cref{poselndt-psims-elndt-pos-seqs} ]
By \cref{negposelndt-psims-elndt-pos-seqs}, without loss of generality let $P$ be an $\negposelndt$ proof of a positive sequent $\Gamma \seqar \Delta$ over extension axioms $\mathcal A$.
By \cref{simulation:k-true-implies-k+1-true} we construct, for each $k\leq m+1$, polynomial-size proofs of $\thr {\vec p} k, \Gamma \seqar \Delta, \thr{\vec p}{k+1}$, over $\thrextaxs{}{} \cup \ttrans k {\mathcal A}$, and we simply `cut' them all together as follows:
\renewcommand{\storageone}{\cref{monotonicity-of-threshold-subscripts}.\cref{item:thr-0-true}}
\renewcommand{\storagetwo}{\cref{simulation:k-true-implies-k+1-true}}
\renewcommand{\storagethree}{\cref{monotonicity-of-threshold-subscripts}.\cref{item:thr-big-false}}
\[
\vlderivation{
    \vliiiiin{(m+2)\cut}{}{\Gamma \seqar \Delta}{
        \vliq{}{}{\seqar \thr{\vec p} 0 }{\vlhy{\text{\storageone}}}
    }
    {
        \vliq{}{}{\thr {\vec p} 0 , \Gamma \seqar \Delta, \thr {\vec p} 1}{\vlhy{\text{\storagetwo}}}
    }
    {
        \vlhy{  \cdots  }
    }
    {
        \vliq{}{}{ \thr {\vec p} m , \Gamma \seqar \Delta, \thr {\vec p} {m+1}}{\vlhy{\text{\storagetwo}}}
    }
    {
        \vliq{}{}{ \thr{\vec p}{m+1} \seqar  }{\vlhy{\text{\storagethree}}}
    }
}
\]
The resulting proof is an $\poselndt$ proof of the required sequent, over extension axioms $\thrextaxs{}{} \cup \ttrans 0 {\mathcal A} \cup \ttrans 1 {\mathcal A} \cup \cdots \cup \ttrans {m+1} {\mathcal A} $.
Note that this set of extension axioms is indeed well-founded, since each $\ttrans k {\mathcal A}$ is only used in distinct subproofs.
\end{proof}

\section{Conclusions}
In this work we studied the proof complexity of \emph{positive} non-deterministic branching programs, where each $0$-transition has a parallel $1$-transition.
We built upon foundational work introducing a `Frege' system $\elndt$ for (non-deterministic) branching programs \cite{DBLP:conf/csl/BussDasKnop20}, recovering its `positive' fragment $\poselndt$.

Inspired by previous upper bounds in proof complexity via counting arguments, e.g.\ \cite{DBLP:journals/jsyml/Bussandpigeons87,DBLP:journals/mlq/AtseriasGG01,DBLP:journals/jcss/AtseriasGP02}, we constructed polynomial-size proofs of basic counting principles in $\poselndt$, over known positive branching programs for counting.
We applied these principles to explicitly construct polynomial-size proofs of the propositional pigeonhole principle, and then went further to give a general polynomial simulation of $\elndt$ by $\poselndt$ over positive sequents.
Note that the former result is indeed subsumed by the latter, albeit somewhat indirectly: \cite{DBLP:conf/csl/BussDasKnop20} showed that $\elndt $ polynomially simulates Frege systems, and \cite{DBLP:journals/jsyml/Bussandpigeons87} gave polynomial-size Frege proofs of the pigeonhole principle.

As we have mentioned, the positive fragment $\mlk$ of usual Frege calculi, based on Boolean formulas, is well studied, and we now know that $\mlk$ polynomially simulates $\lk$ over positive sequents \cite{DBLP:journals/jcss/AtseriasGP02,DBLP:journals/apal/Jerabek11a,BKKK17}.
This solves a proof complexity theoretic version of the $\mon\NC 1 $ vs $\NC 1$ question.
Furthermore Je{\v r}{\'a}bek has obtained an analogous result for monotone circuits \cite{jerabek:monotone-cuts}, solving the proof complexity theoretic version of the $\mon\Ptime$ vs $\Ptime$ question.
In this sense, this paper is a contribution to this line of work, this time solving the $\mon\NL$ vs $\NL$ question.
Note that all these questions are solved positively, in the sense that positive systems polynomially simulate their non-positive counterparts (over positive sequents).
This is in contrast to the world of circuit complexity, where the classes $\NC 1 , \NL , \Ptime$ are respectively known to be separated from $\mon\NC1, \mon \NL, \mon \Ptime$,   by \cite{RazWig90:mNC1-vs-NC1,GrigniSnipser,razborov1985} respectively.

\medskip

There are still several open questions in the area of monotone proof complexity.
Notably, the aforementioned polynomial simulation of $\lk$ by $\mlk$ holds, a priori, only for the dag-like version of $\mlk$.
While a quasipolynomial simulation for the tree-like version is known from \cite{DBLP:journals/jcss/AtseriasGP02}, it remains open whether tree-$\mlk$ polynomially simulates $\lk$ over positive sequents.
One could pose a similar question for tree-like versions of the systems $\elndt,\poselndt$ studied in this work, but it seems less robust to do so as these calculi are based on extension axioms which themselves represent dag-like objects, seemingly requiring dag-like proofs to efficiently prove even basic manipulations, such as logical equivalence of isomorphic representations of branching programs.

It would also be interesting to study monotone complexity at the level of \emph{bounded arithmetic} or \emph{equational theories}. 
We now have such theories corresponding to $\NC 1 , \NL , \Ptime$ (see, e.g., \cite{CooNgu10:log-found-prf-comp}), and it would be interesting to develop analogous theories for positive classes and proof systems. 
Let us point out that  there has been some previous work in this direction, \cite{Das16:pos-int-bdd-arith}, where certain positive and inuitionistic theories were shown to enjoy quasipolynomial-time translations to certain positive proof systems.

\bibliographystyle{alphaurl}
\bibliography{bib}

\end{document}